\documentclass[12pt]{article} 
\pdfoutput=1
\sloppy
\usepackage[authoryear]{natbib}
\usepackage{graphicx,epstopdf}
\usepackage{setspace}
\usepackage[margin=2.5cm]{geometry}
\usepackage{array}
\usepackage{hyperref}

\usepackage{pdflscape}
\usepackage{graphicx}
\usepackage[utf8]{inputenc}
\usepackage[english]{babel}
\usepackage{amsmath, amsthm, amssymb, amsfonts}
\newtheorem{theorem}{Theorem}[section]
\newtheorem{corollary}{Corollary}[theorem]
\newtheorem{lemma}[theorem]{Lemma}
\theoremstyle{definition}

\usepackage{float}
\floatstyle{ruled}
\newfloat{algorithm}{tbp}{loa}
\providecommand{\algorithmname}{Algorithm}
\floatname{algorithm}{\protect\algorithmname}

\theoremstyle{plain}
\newtheorem{assumption}{\protect\assumptionname}
\theoremstyle{definition}
 \newtheorem{example}{\protect\examplename}
\theoremstyle{plain}
\newtheorem{thm}{\protect\theoremname}
\providecommand{\assumptionname}{Assumption}
\providecommand{\examplename}{Example}
\providecommand{\theoremname}{Theorem}

\usepackage{ifpdf}
\ifpdf
  \DeclareGraphicsExtensions{.pdf,.png,.jpg}
\else
  \DeclareGraphicsExtensions{.eps}
\fi

\usepackage{natbib}
\bibliographystyle{apalike}
\setcitestyle{authoryear,open={(},close={)}}

\usepackage{enumitem}
\usepackage{subfig}
\usepackage[section]{placeins}
\usepackage{framed}

\pagenumbering{arabic}

\begin{document}
\begin{singlespacing}
\title{Scenario Sampling for Large Supermodular Games}

\maketitle
\medskip{}

\begin{center}
{\large{}Bryan S. Graham \textcircled{r} and Andrin Pelican}\footnote{{\footnotesize{}\underline{Graham}}{\footnotesize{}: Department of
Economics, University of California - Berkeley, 530 Evans Hall \#3380,
Berkeley, CA 94720-3880 and National Bureau of Economic Research,
}{\footnotesize{}\underline{e-mail:}}{\footnotesize{} \href{http://bgraham@econ.berkeley.edu}{bgraham@econ.berkeley.edu},
}{\footnotesize{}\underline{web:}}{\footnotesize{} \url{http://bryangraham.github.io/econometrics/}.
}{\footnotesize{}\underline{Pelican:}}\:{\footnotesize{}\underline{e-mail:}}{\footnotesize{} \href{http://pelicanandrin@gmail.com}{pelicanandrin@gmail.com}.
Author names were randomly ordered using the American Economic Association's Author Randomization Tool \url{https://www.aeaweb.org/journals/policies/random-author-order/generator}; confirmation code \textrm{iDdPvyEyJea0}. See \cite{Ray_Robson_AER2018} for discussion and background. Financial support from NSF (SES \#1357499) as well as the Spanish State Research Agency under the María de Maeztu Unit of Excellence Programme (Project No: CEX2020-001104-M) is gratefully acknowledged. Feedback from participants at the CEMMAP/Toulouse conference on games (June 2021), the Oxford panel data conference (June 2022), the CEMMAP/Vanderbilt conference on games (May 2023), the IMS/SNAB conference (June 2023), as well as at seminars at UC Berkeley, UCL, Warwick, LSE, Oxford, and CEMFI is gratefully acknowledged. This revision reflects specific feedback and questions from Bo Honor\'e, \'Aureo de Paula, Adam Rosen and Steve Berry. All the usual disclaimers apply.}}
\par\end{center}

\begin{center}
\medskip{}
\textsc{\large{}Initial Draft: March 2021}{\large\par}
\par\end{center}

\begin{center}
\textsc{\large{}This Draft: July 2023}{\large\par}
\par\end{center}
\begin{abstract}

This paper introduces a simulation algorithm for evaluating the
log-likelihood function of a large supermodular binary-action game. 
Covered examples include (certain types of) peer effect, technology adoption, strategic network formation, and multi-market entry games.
More generally, the algorithm facilitates simulated maximum likelihood (SML)
estimation of games with large numbers of players, $T$, and/or
many binary actions per player, $M$ (e.g., games with tens of thousands
of strategic actions, $TM=O(10^4)$). In such cases the likelihood of the
observed pure strategy combination is typically (i) very small and (ii)
a $TM$-fold integral who region of integration has a complicated geometry. Direct
numerical integration, as well as accept-reject Monte Carlo integration,
are computationally impractical in such settings. In contrast, we 
introduce a novel importance sampling algorithm
which allows for accurate likelihood simulation with modest numbers
of simulation draws. \smallskip{}

\textsc{Key Words: }Games, Supermodular, Importance Sampling, Simulated
Maximum Likelihood (SML), Technology Adoption, Peer Effects, Strategic
Network Formation
\end{abstract}
\end{singlespacing}

\thispagestyle{empty} 

\pagebreak{}

\setcounter{page}{1}

\section{Introduction}

Payoff interdependence, where the utility one agent gets from taking
a particular action varies with the actions chosen by others, characterizes
many models of economic behavior. The returns to adopting a networked
technology increase in the fraction of other agents also adopting \citep[e.g., ][]{Goolsbee_Klenow_JLE2002,Ackerberg_Gowrisankaran_RAND2006}.
A teen's decision to smoke may be influenced by whether her friends also smoke \citep[e.g., ][]{Gaviria_Raphael_RESTAT2001,Krauth_JBES2007,Card_Giuliano_RESTAT2013}. The returns to forming an R\&D partnership (or a supply chain relationship) between two firms might vary with the presence or absence of such partnerships (relationships) across other firms. Firms' decisions regarding market entry typically depend on the entry decisions of competing firms \citep[e.g., ][]{Bresnahan_Reiss_JOE91, Jia_EM08,Ciliberto_Tamer_EM2009}. In these settings, and many others, it is convenient to view the set of actions taken by agents as an equilibrium of a game.
\\
\\
The econometric analysis of games poses a number of, now well understood, challenges (see, for example, the surveys by \cite{dePaula_ARE13},  \cite{Aradillas_Lopez_ARE2020} and \cite{Molinari_HBE2020}). 
One set of issues stems from model \emph{incompleteness}: econometric models of strategic interaction often admit multiple Nash Equilibria (NE) for a given set of (unobserved) preference shocks, $\mathbf{U}$, and (unknown) payoff parameters, $\theta$. Another set of issues involves computational tractability. Even if the econometrician ``completes'' the model by specifying
which equilibrium is played when (or, more generally, writes down an equilibrium selection model), parameter estimation in complete information games with more than a handful of players and actions is often infeasible because evaluating the likelihood function involves a high-dimensional integral with a complex region of integration. 
\\
\\
In this paper we introduce an approach to payoff parameter estimation appropriate for large supermodular complete information games with binary actions. By ``large" we mean games with hundreds of players, each of whom might take hundreds of of actions (i.e., games with tens of thousands of strategic variables). This class of games includes many examples of interest to empirical researchers in economics and other fields.
\\
\\
Consider the peer effects in smoking example introduced above (with $t=1,\ldots,T$ players). A researcher might postulate that the payoff from smoking is increasing in the fraction of one's peers who smoke. The payoff may also vary with observed agent attributes, $X_t$, and an unobserved normally (or logistically) distributed random utility ``shock", $U_t$. Under complete information, any pure strategy Nash Equilibrium (NE) in this setting will coincide with a fixed point of a system of $T$ non-linear simultaneous equations. A consequence of this simultaneous determination of smoking decisions is that an agent's unobserved taste for smoking -- the Gaussian random utility shock in her payoff function -- will co-vary with the fraction of her peers that smoke. The naive probit regression fit of own smoking behavior onto own attributes and the fraction of peers smoking does not consistently recover agent preferences \citep[e.g., ][]{Heckman_EM1978}.
\\
\\
If the researcher is willing to make an equilibrium selection assumption, we will assume that the NE equilibrium with the fewest number of smokers is the one that prevails, then the likelihood is well-defined. Evaluating this likelihood, because it requires computing a $T$-fold integral, is difficult. For modest $T$, say 20 or 30 agents, \cite{Krauth_JOE2006} and \citet{Soetevent_Kooreman_JAE07} have proposed simulation algorithms for computing this integral. These algorithms make use of additional structure in this example beyond supermodularity. Our approach handles this example as well as others, including those where, say, $T=500$, and each agent takes $M>1$ actions. In such settings likelihood evaluation requires computing a $TM$-fold integral with a very complex region of integration.
\\
\\
For researchers unfamiliar with the econometric complexities of game estimation, our method is relatively straightforward to adopt. For example, a peer effects researcher can simply introduce the fraction of one's peers who smoke as a ``covariate" in a probit-like specification. In typical examples computation requires no more than a few minutes on a good desktop computer. Unlike the naive probit fit described above, our method -- by appropriately handling the game-theoretic aspects of the model -- delivers consistent estimates of payoff function parameters under easy to communicate assumptions.
\\
\\
For economists with experience in game estimation, our approach makes large game analysis feasible. For example, we show how our methods can be used to estimate payoff functions in some models of (directed) network formation. In this example each of $T$ agents decides whether to direct a link (or not) to each of the $T-1$ other agents in her network. This results in a game with $T(T-1)=O(T^2)$ strategic variables. In an empirical illustration we study the Nykatoke network dataset collected by \cite{deWeerdt_IAP04}. This network includes $T=116$ households. Following \cite{deWeerdt_IAP04} we fit a model which includes household specific sender and receive fixed effects, observed household attributes, and a taste for supported links (see \cite{Jackson_et_al_AER12}). We are aware of no extant methods, beyond those introduced below, that would allow a researcher to empirically study a game with over $13,000$ strategic decision variables with a payoff function indexed by over two hundred parameters.\footnote{We defer on a discussion of whether fitting such a high dimensional model to the Nyakatoke network is sensible to later in the paper.} 
\\
\\
Computational limits have profoundly shaped the nature of empirical work involving strategic interaction. While methods that fully embrace payoff interdependence and strategic interaction commonly feature in empirical industrial organization, where many settings of interest involve just a few agents \citep[e.g.,][]{Ciliberto_Tamer_EM2009},
applications involving many agents and/or actions are scarce. The failure to embrace the ``strategic nature'' of discrete interactions in, for example, empirical peer effects analyses arguably undermines the credibility of work in this area. Indeed, this was one theme of \citet{Manski_ReStud93}. We emphasize peer effects and network formation examples below.
\\
\\
An important accomplishment of econometricians studying games has been the development of methods of inference that do not require making \emph{a priori} assumptions on equilibrium selection (see \cite{Molinari_HBE2020} for a comprehensive survey). We depart from this norm and \emph{do} maintain an equilibrium selection assumption in what follows (see also \cite{Bajari_et_al_EM10}). Our results are also restricted to supermodular games (loosely games where the actions of others do not discourage -- or weakly encourage -- a player to take an action). While they are restrictive relative to some treatments in prior work, these assumptions allows us to focus on our main contribution: likelihood evaluation for very large games. We discuss how to relax our equilibrium selection assumption at the close of the paper.
\\
\\
In a large game, evaluating the likelihood involves computing a high-dimensional integral over a complex region of integration. Our approach involves transforming this integral into an expectation and using importance sampling to estimate it. The use of importance sampling in the simulated maximum likelihood (SML) context goes back at least to \cite{McFadden_EM89} in the econometrics literature. Our particular problem requires computing the sum of a (typically very large) set of ``rectanglar'' probabilities (i.e., hypercube volumes). Similar to the GHK algorithm of \cite{Geweke_EM1989}, \cite{Hajivassiliou_Ruud_HBE1994}, \cite{Keane_EM94} and \cite{Hajivassiliou_et_al_JOE1996}, we generate random vectors within a target rectangle by taking draws from a sequence of univariate truncated distributions. Unlike the GHK simulator, the location of the rectangles whose volume we require is not known ex ante and the points of truncation are determined sequentially via game-theoretic arguments. We also compute individual rectangle probabilities ``analytically" as opposed to using a weighted frequency estimator. We elaborate on these and other differences below.
\\
\\
To be clear, our innovation is the introduction of a particular importance sampler: \emph{scenario sampling}. The likelihood of a game outcome, say $\mathbf{Y}=\mathbf{y}=\left(y_1, \dots y_T \right)'$, conditional on a set of agent attributes, $\mathbf{X}=\left(X_1, \dots X_T \right)'$, and parameter value, $\theta$, coincides with the probability that the set of random utility shocks, $\mathbf{U}=\left(U_1, \dots U_T \right)'$, lie in a region where $\mathbf{Y}=\mathbf{y}$ is the selected NE. In a binary action game this region will be a collection of high-dimensional hyper-cubes or ``scenarios". The total volume of these cubes equals the likelihood. We estimate this volume by sampling scenarios and aggregating them in a particular way.
\\
\\
In order to sample a scenario in the target set, we need to construct a vector of random utility draws $\mathbf{U}$ such that $\mathbf{Y}=\mathbf{y}$ is the selected NE with probability one. We accomplish this task by drawing the agent-by-action random utility shocks sequentially. The support of a given draw may depend on the realizations of prior draws. Such an approach allows us to ensure that, in the end, $\mathbf{U}$ will be such that $\mathbf{Y}=\mathbf{y}$ is the selected NE. Our sequential sampler uses both the structure of the NE conditions as well as supermodularity.
\\
\\
The methods introduced below make estimation of a binary peer effects games with hundreds of peer groups, each consisting dozens of agents, relatively
routine. Similarly we outline a set of tools that would allow a researcher to easily fit an economically interesting structural model of strategic network formation to a graph consisting of hundreds of agents. We are aware of no comparable estimation methods for these settings.\footnote{The closest competitor would be the simulated method-of-moments (SMM) approach outlined by \cite{Uetake_Watanabe_AE13} and used by \cite{Jia_EM08}, \cite{Nishida_MS2015} and \cite{Miyauchi_JOE16}. This approach involves comparing moments calculated from simulated game outcomes to those observed in the dataset in hand. Depending on the application of interest, this approach can be of comparable computational cost to ours. In other settings it can be more costly. For example, in the context of network formation models, it is well-known that computing induced subgraph frequencies -- the natural moments to use for SMM estimation in this context -- is computationally very costly \citep[e.g.,][]{Bhattacharya_Bickel_AS15, Graham_HBE2020}. If the target parameter $\theta$ has more than a few components, then our method -- as it allows for gradient-based optimization -- also has possible speed advantages. For some models we are also able to avoid repeated NE computation, something that is required by the SMM-method (see Appendix \ref{app: recycling}).  Finally, our approach, being likelihood-based, also offers efficiency advantages and facilitates Bayesian inference (for interested researchers).} 
\\
\\
The next section introduces a simple, and likely familiar, coordination game. We use this game to introduce the main assumptions and features of our methods in an easy to understand way. 
\\
\\
Section \ref{sec: peer effects model} extends our basic analysis to binary peer effects games. The analysis of such games is commonplace, but extant empirical work generally ignores, or side-steps, game theoretic difficulties (but see \cite{Krauth_JOE2006,Krauth_JBES2007} and \cite{Soetevent_Kooreman_JAE07} for important exceptions). Section \ref{sec: discrete supermodular games} extends our results to a general class of discrete supermodular games. Certain games of technology adoption and network formation, as we explain, are members of this class.
\\
\\
In Section \ref{sec: monte carlo experiments} we explore the numerical properties of our methods via a series of Monte Carlo experiments. Finally, in Section \ref{sec: nykatoke}, we use our methods to estimate the payoff parameters in a game of network formation using the Nyakatoke network data collected by \cite{deWeerdt_IAP04}. Our empirical model of network formation is inspired by the ``support" model of favor exchange introduced by \cite{Jackson_et_al_AER12}.
\\
\\
In what follows random variables are denoted by capital Roman letters, specific realizations by lower case Roman letters and their support by blackboard bold Roman letters. That is $Z$, $z$ and $\mathbb{Z}$ respectively denote a generic random draw of, a specific value of, and the support of, $Z$. Random vectors and matrices are generally written in boldface (e.g., $\mathbf{X}$). We use Greek letters for parameters and a ``0" subscript to denote their population values. We sometimes omit this subscript when doing so causes no confusion.
\\
\\
\section{\label{sec: coordination game}A simple coordination game}
We begin with a simple game of coordination between a pair of friends $t=1,2$. Each friend/agent takes a binary action $Y_{t}\in \left\{ 0,1\right\}$. In \citet{Card_Giuliano_RESTAT2013} $Y_{t}$ corresponds to adolescent behaviors such as sexual intercourse, smoking, marijuana use or chronic
truancy. To keep things concrete (and light) we will consider two
friends, Ademaro ($t=1$) and Brunhilde ($t=2$), who are deciding whether
to attend ($Y_{t}=1$) an electronic dance music (EDM) concert or
not ($Y_{t}=0$).
\\
\\
The payoff function equals
\begin{equation}
\upsilon\left(y_{t},y_{-t};X_{t},U_{t},\theta\right)=y_{t}\left(X_{t}'\beta+\delta y_{-t}-U_{t}\right)\label{eq: utility_2player_example}
\end{equation}
for $t=1,2$. The payoff from not attending the concert is normalized to zero for both agents. Observable differences in the taste for EDM are captured by variation in the linear index $X_t'\beta$. Unobserved heterogeneity is captured by the random utility shifter $U_t$. Note the negative sign in front of $U_t$; hence it measures an agent's unobserved \emph{distaste} for EDM. The notation $y_{-t}$ denotes the actions of players other than $t$.
\\
\\
As we assume that $\delta\geq0$, the marginal utility
associated with attending the concert, $Y_{t}=1$, is greater when
your friend also attends, $Y_{-t}=1$. This makes the game supermodular. 
Ademaro's utility from attending, $Y_{1}=1$, when Brunhilde does
not, $Y_{2}=0$, is
\[
v\left(1,0;X_{1},U_{1},\theta\right)=X_{1}'\beta-U_{1}.
\]
Whereas if Brunhilde also attends his utility increases by $\delta$
to
\[
v\left(1,1;X_{1},U_{1},\theta\right)=X_{1}'\beta+\delta-U_{1}.
\]
Similarly the utility Brunhilde receives from attending the concert
depends on whether Ademaro does not attend
\begin{align*}
v\left(1,0;X_{2},U_{2},\theta\right)=X_{2}'\beta-U_{2}
\end{align*}
or does attend
\[
v\left(1,1;X_{2},U_{2},\theta\right)=X_{2}'\beta+\delta-U_{2}.
\]

\subsubsection*{Scenarios}
The systematic utility of taking the action, $X_{t}'\beta+\delta y_{-t}$, when evaluated at all possible combinations of peer play, $y_{-t} \in \left\{0,1\right\}$, defines a partition of the support of $U_t$ into what we call \emph{buckets} \citep[cf., ][]{Pelican_Graham_NBER2020}; also see Figure
\ref{fig: understanding-scenarios}. The bucket partition for the support of $U_{t}$, for $t=1,2$, is
\begin{align*}
\mathbb{R}= & \left(-\infty,X_{t}'\beta\right]\cup\left(X_{t}'\beta,X_{t}'\beta+\delta\right]\cup\left(X_{t}'\beta+\delta,\infty\right).
\end{align*}
The number of buckets will coincide with the number of possible peer play case distinctions, $L$, plus $1$. Specifically, the bucket partition (and its cardinality) can be found mechanically by evaluating the utility function for all possible values of $y_{-t}$.
\\
\\
In our EDM example $L=2$, such that there are three buckets. If $U_1$ falls into the first bucket, then Ademaro's taste shock is sufficiently low that it is strictly dominant for him to attend the concert. That is he will go irrespective of what Brunhilde chooses to do. If $U_1$ instead falls into the second, or middle, bucket, then Ademaro is on the fence. His $U_1$ realization is low enough that it will be optimal for him to attend the concert if Brunhilde does as well, but he will not go to the concert without Brunhilde. Finally if $U_1$ falls into the third bucket, then it is a strictly dominant strategy for Ademaro to \emph{not} go to the concert: his distaste for EDM is so high that even the presence of Brunhilde at his side is not enough to entice him to go. The interpretation of Brunhilde's bucket partition is analogous.
\\
\\
A pair of buckets, one from Ademaro and one from Brunhilde, defines what we call a \emph{scenario}. Let $\mathbf{b}_{jk}$
denote the \emph{scenario} where $U_{1}$ falls into bucket $j=1,2,3$
and $U_{2}$ falls into bucket $k=1,2,3$. A scenario is a region of the support of $\mathbf{U}=\left(U_1,U_2\right)'$ where the fundamental strategic considerations of agents are constant. In other words, a scenario defines a range of realizations for the utility/cost
shocks $\mathbf{U}=\left(U_{1},U_{2}\right)'$ where, within them,
the fundamental nature of strategic play does not depend on the precise
values of $U_{1}$ or $U_{2}$.
\\
\\
In Figure \ref{fig: understanding-scenarios} scenario
$\mathbf{b}_{22} = \left(X_{1}'\beta, X_{1}'\beta+\delta \right]   \times \left(X_{2}'\beta, X_{2}'\beta+\delta \right]$ corresponds to a pair of random utility draws $\mathbf{U}=\left(U_1,U_2\right)'$ where both Ademaro and Brunhilde are ``on the fence" about going to the concert. That is where it is a NE for them to both go or to both not go. This conclusion does not depend on the precise values of $U_{1}$ and $U_{2}$, only that they
are somewhere within scenario $\mathbf{b}_{22}.$
\\
\\
In scenario $\mathbf{b}_{22}$ there are two NE, $\mathbf{Y}=\left(0,0\right)'$ and $\mathbf{Y}=\left(1,1\right)'$; the model is incomplete. In what follows \emph{we complete the model} by assuming that when $\mathbf{U} \in \mathbf{b}_{22}$ Ademaro and Brunhilde do not go the concert (that is that $\mathbf{Y}=\left(0,0\right)'$ is the selected NE). This is a special case of our assumption, stated formally below, that the equilibrium with the ``least" amount of action is the one that prevails if $\mathbf{U}$ falls in a scenario that supports multiple NE.
\\
\\
The set of all scenarios, denoted by $\mathbb{B}$, partitions the support of $\mathbf{U}=\left(U_1,U_2\right)'$ into a set of nine rectangles:
\begin{align*}
\mathbb{R}^2= 
                          & \mathbf{b}_{11}  \cup \mathbf{b}_{12}  \cup \mathbf{b}_{13}  \cup \mathbf{b}_{21}  \cup \mathbf{b}_{22}  \cup \mathbf{b}_{23}  \cup \mathbf{b}_{31}  \cup \mathbf{b}_{32}  \cup \mathbf{b}_{33}.\\
\end{align*} See Figure \ref{fig: understanding-scenarios} for an illustration \citep[see also][]{Bresnahan_Reiss_JOE91,Tamer_ReStud03,dePaula_ARE13}). For the time being we use double subscripts to index different scenarios. As a second example of a scenario, when $U_{1}$ lies in its first bucket, and likewise for $U_{2}$, then Ademaro and Brunhilde are in scenario $\mathbf{b}_{11} = \left(\infty, X_{1}'\beta\right]   \times \left(\infty, X_{2}'\beta\right]$
(corresponding to the lower-left-hand rectangle in Figure \ref{fig: understanding-scenarios}).
In this scenario Ademaro's (Brunhilde's) utility/cost shock is so low
that he (she) will attend the EDM concert irrespective of
whether Brunhilde (Ademaro) does. In this scenario both players' strictly
dominant strategy is to attend the concert; $\mathbf{Y}=\left(1,1\right)'$ is the NE.
\\
\\
\subsubsection*{Likelihood}
With an equilibrium selection assumption in hand, the probability
of any game outcome $\mathbf{Y}=\mathbf{y}=\left(y_{1},y_{2}\right)'$
simply corresponds to the probability that $\mathbf{U}=\left(U_{1},U_{2}\right)$
falls into one of the scenarios in which $\mathbf{Y}=\mathbf{y}$
is the (selected) NE. We denote the subset of scenarios where $\mathbf{Y}=\mathbf{y}$ is the NE by $\mathbb{B}_{\mathbf{y}}$. The set of \emph{all} scenarios is denoted by $\mathbb{B}$.
\begin{figure}
\caption{\label{fig: understanding-scenarios}Scenarios in an EDM concert attendance
game}

\begin{centering}
\includegraphics[scale=0.55]{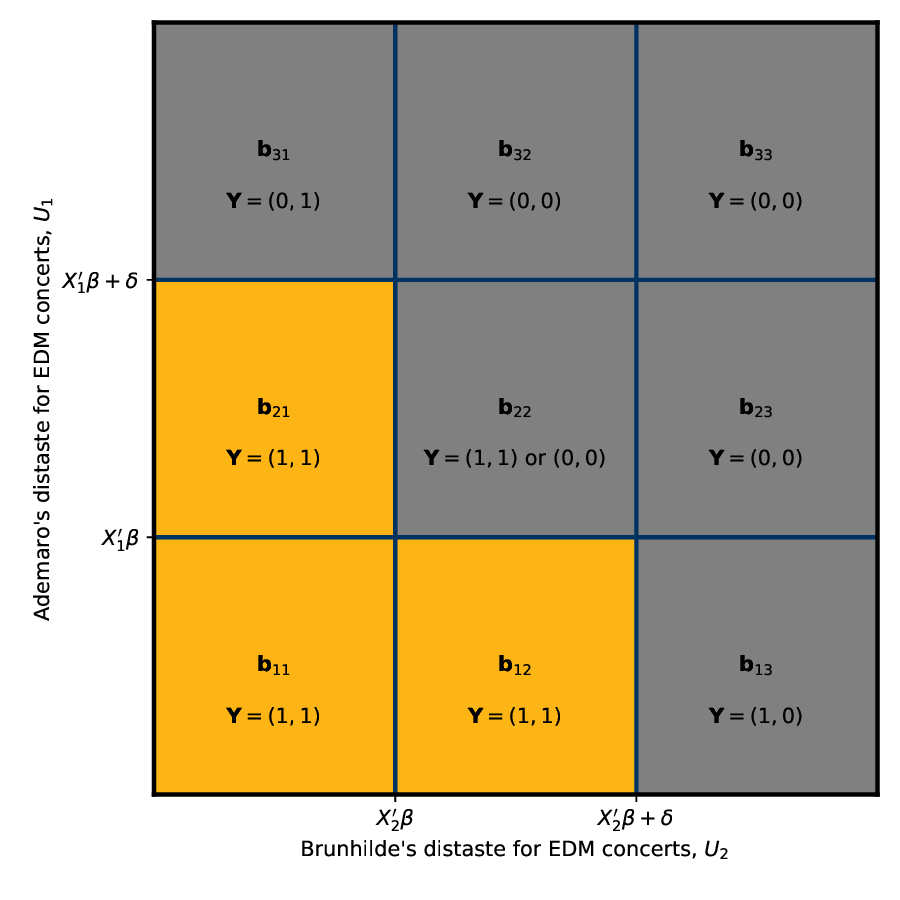}
\includegraphics[scale=0.55]{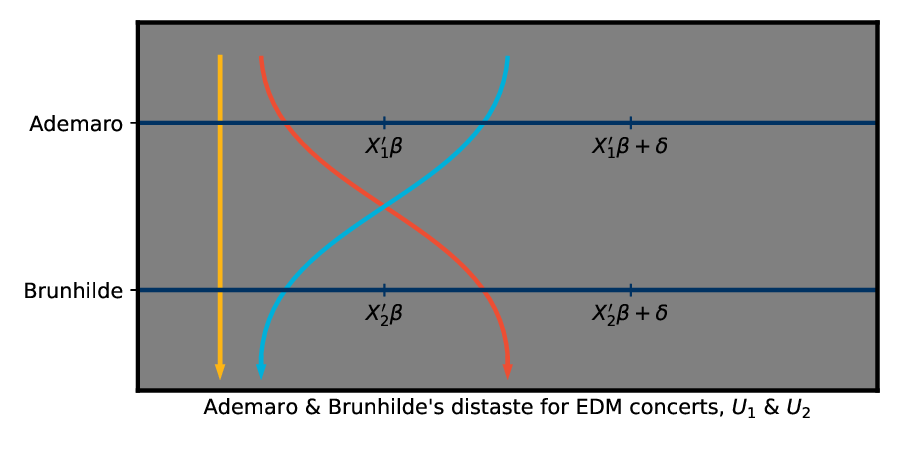}
\par\end{centering}
\underline{Notes:} \newline (i) The figure in the top panel shows the partition of $\mathbb{R}^2$, the
support of $\mathbf{U}=(U_1,U_2)'$, into nine scenarios. The three darker shaded scenarios are those where the NE is for Ademaro and Brunhilde to both attend the concert. The middle scenario, labelled $\mathbf{b}_{22}$, corresponds to the case where they are both ``on the fence" about going. In this scenario both  $\mathbf{Y}=(0,0)'$ and $\mathbf{Y}=(1,1)'$ are NE. We resolve this incompleteness by assuming that $\mathbf{Y}=(0,0)'$ is the selected equilibrium. This corresponds to the NE with the least amount of action. The likelihood of the event ``Ademaro and Brunhilde both go to the concert" is given by the probability mass attached to the three darker shaded regions \newline
(ii) The bottom figure shows all sequences of $U_1$ and $U_2$ that ``land" in one of the three scenarios where the selected NE is $\mathbf{Y}=(1,1)'$. The left-most line in the figure depicts a sequence where both $U_1$ and $U_2$ are low such that is strictly dominant for them both to go. The second case corresponds to when Ademaro gets a low shock and Brunhilde a medium one. In this case, although Brunhilde is on the fence, she nevertheless goes to the concert because Ademaro goes (it is strictly dominant for him to do so). The last sequence corresponds to Ademaro getting a medium shock and Brunhilde a low one.
\end{figure}
\\
\\
For example, the probability of observing $\mathbf{Y}=\left(1,1\right)'$
in a randomly sampled EDM concert attendance game (from some well-defined
population of EDM concert attendance games), corresponds to the ex
ante chance that a pair of random utility shocks falls into one of the three darker shaded regions 
of Figure \ref{fig: understanding-scenarios} (i.e., into scenarios $\mathbf{b}_{11}$, $\mathbf{b}_{12}$ or $\mathbf{b}_{22}$):
\begin{align}\label{eq: coordination_likelihood}
\Pr\left(\left.\mathbf{Y}=\tbinom{1}{1}\right|\mathbf{X};\theta\right)
= & \sum_{\mathbf{b}\in\mathbb{B}_{\mathbf{y}}}\int_{\mathbf{u}\in \mathbf{b}}f_{\mathbf{U}}\left(\mathbf{u}\right)\mathrm{d}\mathbf{u} \\
=  & \int_{\mathbf{u}\in\mathbf{b}_{11}}f_{\mathbf{U}}\left(\mathbf{u}\right)\mathrm{d}\mathbf{u}+\int_{\mathbf{u}\in\mathbf{b}_{12}}f_{\mathbf{U}}\left(\mathbf{u}\right)\mathrm{d}\mathbf{u}+\int_{\mathbf{u}\in\mathbf{b}_{21}}f_{\mathbf{U}}\left(\mathbf{u}\right)\mathrm{d}\mathbf{u}\nonumber \\
= & F\left( X_{1}'\beta \right)F\left(X_{2}'\beta \right) +
    F\left(X_{1}'\beta \right)\left[F\left(X_{2}'\beta+\delta \right) - 
    F\left(X_{2}'\beta\right)\right]\notag \\
  & +\left[F\left(X_{1}'\beta + \delta \right)-F\left(X_{1}'\beta \right)\right] F\left(X_{2}'\beta\right),\notag
\end{align}
where we assume that $U_{1}$ and $U_{2}$ are iid with known CDF $F\left(\cdot\right)$
and PDF $f\left(\cdot\right)$ such that $f_{\mathbf{U}}\left(\mathbf{u}\right)=f\left(u_{1}\right)f\left(u_{2}\right)$.
\\
\\
The three summands in \eqref{eq: coordination_likelihood} correspond to the probability mass attached to each of the three colored scenarios in Figure \ref{fig: understanding-scenarios}. In this simple two player game, with two-dimensional scenarios, direct likelihood evaluation involves no difficulties. Consequently maximum likelihood estimation (MLE) is both straightforward and entirely standard.
\\
\\
However, consider the direct extension of the game to accommodate three players. In such a game there would be four buckets and $4^3=64$ scenarios (corresponding to cubes in $\mathbb{R}^3$).\footnote{If agents are not exchangeable, for example peers are ``best friends" and ``second best friends", then there would be $5$ buckets and $125$ scenarios.} In general, the number of scenarios for an observed game outcome will grow exponentially with the number of players/strategic decisions.\footnote{In games with additional special structure, the number of scenarios may grow more slowly with $T$.} In this paper we are interested in large games. Those with many players, $T$, each of whom, might take many binary actions, $M$. When $TM$ is in the hundreds or thousands, direct likelihood evaluation, and hence MLE, is not feasible.
\\
\\
Our approach to (approximate) likelihood evaluation in many scenario games involves simulation. The probability that a random draw of $\mathbf{U}=\left(U_{1},U_{2}\right)'$ falls in scenario $\mathbf{b}$ is simply
\begin{equation*}
    \zeta\left(\mathbf{b};\theta\right)\overset{def}{\equiv}\int_{\mathbf{u}\in \mathbf{b}}f_{\mathbf{U}}\left(\mathbf{u}\right)\mathrm{d}\mathbf{u},
    \end{equation*}
where we suppress the role of covariates, $\mathbf{X}$, in the notation. For example, the ex ante probability that Ademaro and Brunhilde find themselves in scenario $\mathbf{b}_{22}$ is
\begin{align*}    \zeta\left(\mathbf{b}_{22};\theta\right) = & \int_{u_{1}=X_{1}'\beta}^{X_{1}'\beta+\delta}\int_{u_{2}=X_{2}'\beta}^{X_{2}'\beta+\delta}f\left(u_{1}\right)f\left(u_{2}\right)\mathrm{d}u_{1}\mathrm{d}u_{2}\\
 = & \left[F\left(X_{1}'\beta+\delta\right)-F\left(X_{1}'\beta\right)\right]\left[F\left(X_{2}'\beta+\delta\right)-F\left(X_{2}'\beta\right)\right].
\end{align*}
Observe that $\zeta\left(\mathbf{b};\theta\right)$ is a pmf for scenarios with support $\mathbb{B}$. Let $\mathbf{B}$ denote a random draw from the distribution of scenarios described by this pmf. We can re-write the likelihood of the event $\mathbf{Y}=\tbinom{1}{1}$, that is equation \eqref{eq: coordination_likelihood}, as
\begin{equation*}
    \Pr\left(\left.\mathbf{Y}=\left(1,1\right)'\right|\mathbf{X};\theta\right)
= \sum_{\mathbf{b}\in\mathbb{B}_{\mathbf{y}}}\zeta\left(\mathbf{b};\theta\right) = \Pr\left(\mathbf{B}\in\mathbb{B}_{\mathbf{y}}\right).
\end{equation*}
Conceptually, the probability to the right of the last equality above is straightforward to simulate (practically, as we will see, there are difficulties). To see this note that every random draw, $\mathbf{U}$, from the population distribution of preference shocks, $F_{\mathbf{U}}$, will fall into one, and only one, scenario. The event $\mathbf{U} \in \mathbf{b}$ occurs with an ex ante probability of $\zeta\left(\mathbf{b};\theta\right)$. It is therefore easy to generate random draws from $\zeta\left(\mathbf{b};\theta\right)$ because the population distribution of $\mathbf{B}$ over 
$\mathbb{B}$ is induced by the one for the random utility shifters $\mathbf{U}$ (which are easy to simulate). Hence we have the equality
\begin{equation*}
    \Pr\left(\mathbf{B}\in\mathbb{B}_{\mathbf{y}}\right)=\Pr\left(\mathbf{y}\text{ is \emph{the} NE at }\mathbf{U}\right),
\end{equation*}
where the probability to the right can be computed by the accept/reject Monte Carlo (“dartboard”) simulation estimate
\begin{equation*}
\hat{\Pr}\left(\left.\mathbf{Y}=\mathbf{y}\right|\mathbf{X};\theta\right)=\frac{1}{S}\sum_{s=1}^{S}\mathbf{1}\left(\mathbf{y}\text{ is \emph{the} NE at }\mathbf{U}^{\left(s\right)}\right)    
\end{equation*}
with $s=1, \dots, S$ indexing independent random draws $\mathbf{U}^{(s)}$ from $F_{\mathbf{U}}$.
\\
\\
Unfortunately, in large games, it is generally the case that the event $\mathbf{y}$ is a NE at $\mathbf{U}$ occurs with very low probability. In large games the number of possible pure strategy combinations, the cardinality of the set $\mathbb{Y}$, is typically enormous (here $\mathbb{Y}$ is the set of all $2^{TM}$ possible game outcomes). The population probability of observing any particular game outcome or NE, $\mathbf{Y}=\mathbf{y}$, is therefore very small. Estimating small probabilities with a computationally manageable number of simulation draws, $S$, by accept/reject frequency methods is well-known to be infeasible (see, for example, \cite{Hajivassiliou_Ruud_HBE1994} or \cite{Au_Beck_PEM2001}).
\\
\\
As in other related contexts our solution to this conundrum involves importance sample. We begin with the observation that evaluating the likelihood of a given game outcome $\mathbf{Y}=\mathbf{y}$ only requires integration over scenarios in the set $\mathbb{B}_{\mathbf{y}}$ (i.e., scenarios where $\mathbf{Y}=\mathbf{y}$ \emph{is} the selected NE). Those scenarios in the complement $\mathbb{B}\setminus\mathbb{B}_{\mathbf{y}}$ do not enter the likelihood calculation. Although enumeration of the set $\mathbb{B}_{\mathbf{y}}$ is infeasible in large games, we show how it is possible to sample randomly from it. 
\\
\\
Let $\lambda_{\mathbf{y}}\left(\mathbf{b};\theta\right)$ be a function which assigns probabilities to the scenarios contained in $\mathbb{B}_{\mathbf{y}}$. We will require that $\lambda_{\mathbf{y}}\left(\mathbf{b};\theta\right)$ be strictly greater than zero for any $\mathbf{b}\in\mathbb{B}_{\mathbf{y}}$ and exactly zero otherwise (i.e., for $\mathbf{b}\in\mathbb{B}\setminus\mathbb{B}_{\mathbf{y}})$. We also require that this function satisfy the adding up condition $\sum_{\mathbf{b}\in\mathbb{B}_{\mathbf{y}}}\lambda_{\mathbf{y}}\left(\mathbf{b};\theta\right)=1$. The function $\lambda_{\mathbf{y}}\left(\mathbf{b};\theta\right)$ is a pmf for those scenarios where $\mathbf{Y}=\mathbf{y}$ is the NE. Let $\tilde{\mathbf{B}}$ be a random scenario draw from the distribution with pmf $\lambda_{\mathbf{y}}\left(\mathbf{b};\theta\right)$. Later we show how to construct such a draw, but for now assume an appropriate method is in hand. Note that, by construction, $\Pr\left(\mathbf{\tilde{B}}\in\mathbb{B}_{\mathbf{y}}\right)=1$ and $\Pr\left(\mathbf{\tilde{B}}\in\mathbb{B}\setminus\mathbb{B}_{\mathbf{y}}\right)=0$.
\\
\\
Let $\theta^{\left(0\right)}$ be some (fixed) value for the parameter; we have that
\begin{align*}
\Pr\left(\left.\mathbf{Y}=\mathbf{y}\right|\mathbf{X};\theta\right)	= & \sum_{\mathbf{b}\in\mathbb{B}_{\mathbf{y}}}\zeta\left(\mathbf{b};\theta\right) \\
	= & \sum_{\mathbf{b}\in\mathbb{B}_{\mathbf{y}}}\frac{\zeta\left(\mathbf{b};\theta\right)}{\lambda_{\mathbf{y}}\left(\mathbf{b};\theta^{\left(0\right)}\right)}\lambda_{\mathbf{y}}\left(\mathbf{b};\theta^{\left(0\right)}\right) \\
	= & \mathbb{E}_{\tilde{\mathbf{B}}}\left[\frac{\zeta\left(\tilde{\mathbf{B}};\mathbf{X},\theta\right)}{\lambda_{\mathbf{y}}\left(\tilde{\mathbf{B}};\theta^{\left(0\right)}\right)}\right],
\end{align*}
where $\tilde{\mathbf{B}}$ denotes a random draw from $\lambda_{\mathbf{y}}\left(\mathbf{b};\theta^{\left(0\right)}\right).$ We have written $\Pr\left(\left.\mathbf{Y}=\mathbf{y}\right|\mathbf{X};\theta\right)$ as an expectation over scenarios in $\mathbb{B}_{\mathbf{y}}$ (versus a summation over the much larger set $\mathbb{B})$. An importance sampling Monte Carlo estimate of this expectation is:
\begin{equation}\label{eq: importance_likelihood_simulator}    \hat{\Pr}\left(\left.\mathbf{Y}=\mathbf{y}\right|\mathbf{X};\theta\right)=\frac{1}{S}\sum_{s=1}^{S}\frac{\zeta\left(\tilde{\mathbf{B}}^{\left(s\right)};\theta\right)}{\lambda_{\mathbf{y}}\left(\tilde{\mathbf{B}}^{\left(s\right)};\theta^{\left(0\right)}\right)},
\end{equation}
where $\tilde{\mathbf{B}}^{\left(1\right)} \dots \tilde{\mathbf{B}}^{\left(S\right)}$ are independent random draws from $\lambda_{\mathbf{y}}\left(\mathbf{b};\theta^{\left(0\right)}\right).$
\\
\\
This estimate, because the cardinality of $\mathbb{B}_{\mathbf{y}}$ is finite, is consistent as $S\rightarrow\infty$ (see below). More importantly, because all of the summands in \eqref{eq: importance_likelihood_simulator} are non-zero, this estimate has the potential to provide precise estimates of $\Pr\left(\left.\mathbf{Y}=\mathbf{y}\right|\mathbf{X};\theta\right)$ for modest values of $S$, particularly if the importance sampling weights, $\lambda_{\mathbf{y}}\left(\mathbf{b};\theta\right)$, are close to uniform.
\\
\\
Operationalizing \eqref{eq: importance_likelihood_simulator} requires a method for sampling scenarios in $\mathbb{B}_{\mathbf{y}}$; such a method is the primary contribution of this paper. We first describe our approach in the context of the simple two player coordination game introduced in this section and generalize it to larger games in the sequel. Our approach is to draw $U_1$ and $U_2$ \emph{sequentially} such that $\mathbf{U}=\left(U_1, U_2 \right)'$ is in a scenario in $\mathbb{B}_{\mathbf{y}}$ with probability one (i.e., $\mathbf{U} \in \tilde{\mathbf{B}},\,\, \tilde{\mathbf{B}} \in \mathbb{B}_{\mathbf{y}}$ w.p.1).
\\
\\
Consider simulating the probability of the event that both Ademaro and Brunhilde go to the concert (i.e., that $\mathbf{Y}=\mathbf{y}=\tbinom{1}{1}$). For the purposes of illustration, we will first draw Brunhilde's preference shock, $U_2$, followed by Ademaro's, $U_1$. In order to construct a draw of $\mathbf{U}$ that falls into one of the three darker shaded regions of Figure \ref{fig: understanding-scenarios}, it is necessary, albeit not sufficient, that $U_2 \leq X_{2}'\beta+\delta$. In the first step of our procedure we therefore draw $U_2$ from $F$ truncated at $X_{2}'\beta+\delta$.
\\
\\
Next we draw Ademaro's preference shock, $U_1$. Our approach to doing so depends on the realized value of Brunhilde's shock, $U_2$. If Brunhilde's shock falls between $X_{2}'\beta$ and $X_{2}'\beta+\delta$, then she is ``on the fence". It will only be a NE for her to go if Ademaro does as well. This means that Ademaro's shock \emph{must} fall below $X_{1}'\beta$, such that it is strictly dominant for him to go. So in this case we draw $U_1$ from $F$ truncated at $X_{1}'\beta$. This generates 
$\left(U_1, U_2 \right)' \in \mathbf{b_{12}}$.
\\
\\
Alternatively, consider the case where Brunhilde's step one shock was sufficiently low such that it is strictly dominant for her to go to the concert; that is $U_2 \leq X_{2}'\beta$. In this case we are free to draw Ademaro's shock from the larger interval $U_1 \leq X_{1}'\beta+\delta$. If $X_{1}'\beta \leq U_1 \leq X_{1}'\beta+\delta$, then Ademaro is ``on the fence", but since Brunhilde is going no matter what, they both will go. We have $\left(U_1, U_2 \right)' \in \mathbf{b_{21}}$. If, instead, $U_1 \leq X_{1}'\beta$, then we have $\left(U_1, U_2 \right)' \in \mathbf{b_{11}}$ and they both go in that case as well.
\\
\\
Note that finding the appropriate region of support for Ademaro's shock involves the following ``counterfactual" thought experiment. In step 1 we simulate Brunhilde's shock under the presumption that, in the end, Ademaro will also go to the concert (we are computing the probability that they both go). In step 2, after drawing Brunhilde's shock, $U_2$, we ask ourselves what her play would be if Ademaro's taste shock instead was high enough such that it would be strictly dominant for him not to go to the concert. Since the game is supermodular, Ademaro not going (weakly) discourages Brunhilde from attending. We compute the NE associated with this ``counterfactual" scenario and use it to find the appropriate upper threshold for Ademaro's shock. If Brunhilde is sensitive to Ademaro's choice (i.e., she is ``on the fence" or $U_2$ is in her middle bucket), then we force Ademaro's shock to be lower. If Brunhilde is insensitive to Ademaro's choice (i.e., it is strictly dominant for her to go or $U_2$ is in her first bucket), then we can allow Ademaro's shock to range higher.
\\
\\
The above procedure ensures that $\left(U_1, U_2 \right)'$ falls in a scenario in $\mathbb{B}_{\mathbf{y}}$ with probability one. It also reaches every scenario in this set with positive probability. Formalizing the calculations above yields the (importance) sampling probabilities
\begin{align} \label{eq: lambdas_coordination_ex1}
\lambda_{\mathbf{y}}\left(\mathbf{b}_{11};\theta^{\left(0\right)}\right)	= & \frac{F\left(X_{1}'\beta^{\left(0\right)}\right)}{F\left(X_{1}'\beta^{\left(0\right)}+\delta^{\left(0\right)}\right)}\left[\frac{F\left(X_{2}'\beta^{\left(0\right)}\right)}{F\left(X_{2}'\beta^{\left(0\right)}+\delta^{\left(0\right)}\right)}\right]\\
\lambda_{\mathbf{y}}\left(\mathbf{b}_{12};\theta^{\left(0\right)}\right)	= &  \frac{F\left(X_{2}'\beta^{\left(0\right)}+\delta^{\left(0\right)}\right)-F\left(X_{2}'\beta^{\left(0\right)}\right)}{F\left(X_{2}'\beta^{\left(0\right)}+\delta^{\left(0\right)}\right)}\\
\lambda_{\mathbf{y}}\left(\mathbf{b}_{21};\theta^{\left(0\right)}\right)	= & \left[\frac{F\left(X_{1}'\beta^{\left(0\right)}+\delta^{\left(0\right)}\right)-F\left(X_{1}'\beta^{\left(0\right)}\right)}{F\left(X_{1}'\beta^{\left(0\right)}+\delta^{\left(0\right)}\right)}\right]\frac{F\left(X_{2}'\beta^{\left(0\right)}\right)}{F\left(X_{2}'\beta^{\left(0\right)}+\delta^{\left(0\right)}\right)}.
\end{align}
It is a simple exercise to verify that:
\begin{equation*}
    \lambda_{\mathbf{y}}\left(\mathbf{b}_{11};\theta^{\left(0\right)}\right) + \lambda_{\mathbf{y}}\left(\mathbf{b}_{12};\theta^{\left(0\right)}\right)
    + \lambda_{\mathbf{y}}\left(\mathbf{b}_{21};\theta^{\left(0\right)}\right)
    = 1,
\end{equation*}
and hence that we have defined a proper probability distribution that places positive weight on all scenarios in $\mathbb{B}_{\mathbf{y}}$.
In contrast the unconditional population frequencies of these three scenarios are (when $\theta_0=\theta$):
\begin{align*}
    \zeta\left(\mathbf{b}_{11};\theta\right) = & F\left(X_{1}'\beta\right)F\left(X_{2}'\beta\right)\\
    \zeta\left(\mathbf{b}_{12};\theta\right) = & F\left(X_{1}'\beta\right)\left[F\left(X_{2}'\beta+\delta\right)-F\left(X_{2}'\beta\right)\right]\\
    \zeta\left(\mathbf{b}_{21};\theta\right) = & \left[F\left(X_{1}'\beta+\delta\right)-F\left(X_{1}'\beta\right)\right]F\left(X_{2}'\beta\right).    
\end{align*}
It is helpful for understanding what follows to observe that the scenario sampling probabilities are inverse probability weighted versions of their population counterparts:
\begin{align*}
\lambda_{\mathbf{y}}\left(\mathbf{b}_{11};\theta^{\left(0\right)}\right) = & \frac{1}{F\left(X_{1}'\beta^{\left(0\right)}+\delta^{\left(0\right)}\right)F\left(X_{2}'\beta^{\left(0\right)}+\delta^{\left(0\right)}\right)}\zeta\left(\mathbf{b}_{11};\theta^{\left(0\right)}\right)\\
\lambda_{\mathbf{y}}\left(\mathbf{b}_{12};\theta^{\left(0\right)}\right) = & \frac{1}{F\left(X_{1}'\beta^{\left(0\right)}\right)F\left(X_{2}'\beta^{\left(0\right)}+\delta^{\left(0\right)}\right)}\zeta\left(\mathbf{b}_{12};\theta^{\left(0\right)}\right)\\
\lambda_{\mathbf{y}}\left(\mathbf{b}_{21};\theta^{\left(0\right)}\right) = & \frac{1}{F\left(X_{1}'\beta^{\left(0\right)}+\delta^{\left(0\right)}\right)F\left(X_{2}'\beta^{\left(0\right)}\right)}\zeta\left(\mathbf{b}_{22};\theta^{\left(0\right)}\right).
\end{align*}
Observing that $\hat{\lambda}_{\mathbf{y}}\left(b;\theta^{\left(0\right)}\right) = \frac{1}{S}\sum_{s=1}^{S}\mathbf{1}\left(\tilde{\mathbf{B}}^{\left(s\right)}=\mathbf{b}\right)$ consistently estimates $\lambda_{\mathbf{y}}\left(\mathbf{b};\theta^{\left(0\right)}\right)$ allows us to express our importance sampling likelihood estimate as
\begin{align*}
\hat{\Pr}\left(\left.\mathbf{Y}=\mathbf{y}\right|\mathbf{X};\theta\right) = & 	\frac{1}{S}\sum_{s=1}^{S}\frac{\zeta\left(\mathbf{\tilde{\mathbf{B}}};\theta\right)}{\lambda_{\mathbf{y}}\left(\mathbf{\tilde{\mathbf{B}}};\theta^{\left(0\right)}\right)}\\
 = &	\sum_{\mathbf{b}\in\mathbb{B}_{\mathbf{y}}}\frac{1}{S}\sum_{s=1}^{S}\mathbf{1}\left(\tilde{\mathbf{B}}^{\left(s\right)}=\mathbf{b}\right)\frac{\zeta\left(\mathbf{b};\theta\right)}{\lambda_{\mathbf{y}}\left(\mathbf{b};\theta^{\left(0\right)}\right)}\\
 = &	\sum_{\mathbf{b}\in\mathbb{B}_{\mathbf{y}}}\left\{ \frac{\hat{\lambda}_{\mathbf{y}}\left(\mathbf{b};\theta^{\left(0\right)}\right)}{\lambda_{\mathbf{y}}\left(\mathbf{b};\theta^{\left(0\right)}\right)}\right\} \zeta\left(\mathbf{b};\theta\right)\\
 = & \Pr\left(\left.\mathbf{Y}=\mathbf{y}\right|\mathbf{X};\theta\right)+\sum_{\mathbf{b}\in\mathbb{B}_{\mathbf{y}}}\left\{ \frac{\hat{\lambda}_{\mathbf{y}}\left(\mathbf{b};\theta^{\left(0\right)}\right)-\lambda_{\mathbf{y}}\left(\mathbf{b};\theta^{\left(0\right)}\right)}{\lambda_{\mathbf{y}}\left(\mathbf{b};\theta^{\left(0\right)}\right)}\right\} \zeta\left(\mathbf{b};\theta\right), 
\end{align*}
which indicates that likelihood estimate is unbiased for any fixed $S$ and consistent as $S\rightarrow\infty$ as long as $\lambda_{\mathbf{y}}\left(\mathbf{b};\theta^{\left(0\right)}\right)>0$ for all $\mathbf{b} \in \mathbb{B}_\mathbf{y}$. While $\hat{\Pr}\left(\left.\mathbf{Y}=\mathbf{y}\right|\mathbf{X};\theta\right)$ converges in mean square to $\Pr\left(\left.\mathbf{Y}=\mathbf{y}\right|\mathbf{X};\theta\right)$ at rate $\frac{1}{S}$, accuracy in actual applications will depend on the cardinality of $\mathbb{B}_{\mathbf{y}}$ as well as the features of $\lambda_{\mathbf{y}}\left(\mathbf{b};\theta^{\left(0\right)}\right)$ and $\zeta\left(\mathbf{b};\theta\right)$.

\subsubsection*{Simulated maximum likelihood (SML) estimation}

Let $(\mathbf{X}_{1},\mathbf{Y}_{1}),\ldots,(\mathbf{X}_{N},\mathbf{Y}_{N})$ be a random sample of games. The simulated log likelihood equals:
\begin{equation}\label{eq: simulated_logl}    l_{N}\left(\theta\right)=\sum_{i=1}^{N}\ln\hat{\Pr}\left(\left.\mathbf{Y}_{i}\right|\mathbf{X}_{i};\theta\right),
\end{equation}
with each of the summands in \eqref{eq: simulated_logl} constructed via simulation as described above. When the researcher has access to a large number of independent games (such that the summands in \eqref{eq: simulated_logl} are independent of one another), it is straightforward to characterize the large sample properties of our SML estimator. Under regularity conditions it will be consistent and asymptotically normal as $N\rightarrow\infty,\ S\rightarrow\infty$. See \cite{Hajivassiliou_Ruud_HBE1994}.
\\
\\
We detail additional features of our SML estimator below, but point out now that one advantage (besides the critical one of feasibility!) of importance sampling versus crude frequency simulation is that our criterion function is differentiable. This means that maximization of \eqref{eq: simulated_logl} by gradient-based methods is possible; in turn allowing the dimension of $\theta$ to be large. The Hessian can also be computed numerically, making inference -- when valid -- straightforward.
\\
\\

\section{\label{sec: peer effects model}Peer effect games}
In this section we extend our simple coordination game analysis to $T$-player single-action peer effect and technology adoption games. In these games the payoff from taking action is weakly-increasing in the number of people in an agent's reference, or peer, group who take action \citep[e.g.,][]{Manski_ReStud93}. This is an important class of games. \cite{Manski_ReStud93}, \cite{Brock_Durlauf_HBE01} and \cite{Bramoulle_et_al_JOE09} study the econometrics of these games when actions are continuous and agents' best response functions are linear (see \cite{Bramoulle_et_al_AR2020} for a survey). The binary-action case is less well understood, but see \cite{Krauth_JOE2006} and \cite{Soetevent_Kooreman_JAE07} for important analyses. In Section \ref{sec: discrete supermodular games} we show how to extend our analysis to a class of $T$-player, $M$-binary-action supermodular games.
\\
\\
\subsection{Likelihood}
Consider a simple binary-action peer effects game. There are $T$ agents, each of who decide whether to take an action, $Y_{t}=1$, or not, $Y_{t}=0$. Agents are connected via a set of undirected relationships. Let $\mathbf{G}$ be the $T\times T$ row-normalized adjacency matrix recording this link structure. Let $\mathbf{G}_{t}$ denote the $t^{th}$ row of this matrix such that $\mathbf{G}_{t}\mathbf{y}$ equals the mean action of agent $t$'s peers. The payoff from action for agent $t$ is
\begin{equation} \label{eq: peer_effects_utility}
\upsilon\left(y_{t},\mathbf{y}_{-t};X_{t},U_{t},\theta\right)=y_{t}\left(X_{t}'\beta+\mathbf{G}_{t}\mathbf{y}\delta-U_{t}\right).    
\end{equation}
This set-up approximates many peer effects and technology adoption games. We assume that $\delta\geq0$, so that peer action weakly encourages own action.
\\
\\
The set of pure strategy NE in this game correspond to the solutions of the system of $T$ nonlinear equations
\begin{equation}\label{eq: tarski_system}
Y_{t}=\mathbf{1}\left(X_{t}'\beta+\mathbf{G}_{t}\mathbf{Y}\delta\geq U_{t}\right),\thinspace t=1,\ldots,T.    
\end{equation}
This system may have multiple solutions. In this paper we will assume that the minimal equilibrium, the one with the fewest agents taking action, is the one that prevails. This equilibrium can be found by substituting the zero vector $\mathbf{Y}=\underline{0}_{T}$ into the right-hand-side of \eqref{eq: tarski_system} and iterating to a fixed point. By Tarski's Fixed Point Theorem this iterative process stops at the minimal equilibrium \citep{Milgrom_Roberts_EM1990}.
\\
\\
This equilibrium selection assumption may not be plausible in some applications, it is perhaps most natural for ``opt in” games. For example in adolescent peer effects and technology adoption games it is common for agents to start in the non-action state and then choose whether to act/adopt. Our results are easily modified to accommodate applications where the equilibrium with the most action is the one chosen (the so-called maximal equilibrium). The maximal equilibrium can be found by fixed point iteration starting at $\mathbf{Y}=\underline{1}_{T}$ (i.e., at the one vector). This equilibrium selection assumption might be appropriate for ``opt out” games.
\\
\\
As in the two-player coordination game described above, we can use the systematic component of agents' utility functions \eqref{eq: peer_effects_utility} to partition the support of $U_t$ into buckets. If player $t$ has $J_t$ friends (corresponding to the number of non-zero elements in the $t^{th}$ row of the row-normalized adjacency matrix, $\mathbf{G})$, then her bucket partition will be
\begin{equation*}
    \left(-\infty,X_{t}'\beta\right]\cup\left(X_{t}'\beta,X_{t}'\beta+\frac{\delta}{J_t}\right]\cup\left(X_{t}'\beta+\frac{\delta}{J_t},X_{t}'\beta+\frac{2\delta}{J_t}\right]\cup\cdots\cup\left(X_{t}'\beta+\frac{J_t\delta}{J_t},\infty\right).
\end{equation*}
The number of buckets partitioning the support of $U_{t}$ equals the cardinality of the set $\left\{ s_{t}\left(\mathbf{y}_{-t}\right):\mathbf{y}_{-t}\in\left\{ 0,1\right\} ^T\right\}$  plus $1$ where $s_{t}\left(\mathbf{y}_{-t}\right)=\mathbf{G}_{t}\mathbf{Y}$. In this example we have $L_{t}=J_{t}+1$ for a total of $L_{t}+1=J_{t}+2$ buckets. If $U_t$ falls in the first bucket, then it is strictly dominant for player $t$ to take action regardless of what her peers do. If it falls in the second bucket, it is optimal to take action if at least one out of her $J_t$ peers do and so on. The buckets define ranges of realized values of $U_t$ where it is optimal to take action given that different numbers of peers also take action (the right-most bucket defines the set of $U_t$ realizations where it optimal to not take action across \emph{all} possible levels of peer action).
\\
\\
Scenarios in this game correspond to $T$-dimensional hyper-cubes in $\mathbb{R}^T$. Consider the scenario where all of the $U_t$ taste shocks fall into their second buckets. In this scenario it is optimal for all players to take action if at least one of their peers do.
\\
\\
Associated with a scenario, $\mathbf{b}$, is a set of $T$ upper and lower bucket boundaries. We will denote the bucket boundaries in scenario $\mathbf{b}$ for agent $t$ by $\bar{b}_{t}$ and $\underline{b}_{t}$.\footnote{Note that $\underline{b}_{t} = -\infty$ if agent $t$'s bucket in scenario $\mathbf{b}$ is the first (left-most) one and $\bar{b}_{t}=\infty$ if it is the last (right-most) one.}  Using this notation we can write the ex ante probability that $\mathbf{U}=(U_1,U_2,\ldots,U_T)'$ falls into scenario $\mathbf{b}$ as
\begin{equation*}
\Pr\left(\mathbf{U}\in\mathbf{b}\right)=\zeta\left(\mathbf{b};\theta\right)=\prod_{t=1}^{T}\left[F\left(\bar{b}_{t}\right)-F\left(\underline{b}_{t}\right)\right].    
\end{equation*}
We have suppressed the dependence of the bucket boundaries on $X_t$ and $\theta$ in the notation.
\\
\\
To construct an estimate of the likelihood of the event $\mathbf{Y}=\mathbf{y}$ we, in addition to the expression for $\zeta\left(\mathbf{b};\theta\right)$ given above, require a method for randomly drawing a scenario, say $\tilde{\mathbf{B}}$, from the set $\mathbb{B}_{\mathbf{y}}$ with an \emph{ex ante} probability, $\lambda_{\mathbf{y}}\left(\mathbf{b};\theta^{\left(0\right)}\right)$, that is computable. With these ingredients, we can estimate the likelihood -- as noted earlier -- by the simulated analog of the expectation:
\begin{equation*}
    \Pr\left(\left.\mathbf{Y}=\mathbf{y}\right|\mathbf{X};\theta\right)	= \mathbb{E}_{\tilde{\mathbf{B}}}\left[\frac{\zeta\left(\tilde{\mathbf{B}};\mathbf{X},\theta\right)}{\lambda_{\mathbf{y}}\left(\tilde{\mathbf{B}};\theta^{\left(0\right)}\right)}\right].
\end{equation*}
See \eqref{eq: importance_likelihood_simulator} above. We randomly draw the scenario $\tilde{\mathbf{B}}$ by sequentially drawing the random preference shocks $U_1,U_2,\ldots,U_T$ such that, in the end, $\mathbf{U} \in \tilde{\mathbf{B}}$ and $\tilde{\mathbf{B}} \in \mathbb{B}_{\mathbf{y}}$ with probability one. 
\\
\\
\subsection{Scenario Sampling}

In this section we describe a procedure for constructing a random draw, $\tilde{\mathbf{B}}$, from the set $\mathbb{B}_{\mathbf{y}}$. We use the notation $\tilde{\mathbf{B}}$ to emphasize that this draw is not from the population distribution of scenarios, with support, $\mathbb{B}$ and pmf $\zeta\left(\mathbf{b};\theta\right)$, but instead drawn from a distribution with support on the smaller set $\mathbb{B}_{\mathbf{y}}$. Indeed, the cardinality of $\mathbb{B}_{\mathbf{y}}$ will typically be much smaller than that of $\mathbb{B}$. 
\\
\\
Recall that the population distribution of scenarios is induced by the population distribution of the random utility shocks $\mathbf{U}=\left(U_1,U_2, \ldots U_T \right)'$ (as well as the structure of preferences). We similarly sample scenarios from $\mathbb{B}_{\mathbf{y}}$ by constructing draws of the random utility shocks, $\mathbf{U}$. The innovation is to do this in a way such that $\mathbf{U} \in \tilde{\mathbf{B}}$ and $\tilde{\mathbf{B}} \in \mathbb{B}_{\mathbf{y}}$ with probability one. 
\\
\\
Key to our approach is the drawing of the taste shocks \emph{sequentially} instead of independently. Specifically we allow the region from which $U_{t}$ is drawn from to depend on the realizations of earlier draws, $U_s$ in the sequence (where $s<t$). By carefully taking into account the constraints imposed by the target pure strategy combination $\mathbf{y}$ and the definition of a NE, we can ensure that the final sequence of draws $\mathbf{U}$ fulfills $\mathbf{U} \in \tilde{\mathbf{B}}$ and $\tilde{\mathbf{B}} \in \mathbb{B}_{\mathbf{y}}$ with probability one.
\\
\\
The goal is to construct a draw of $\mathbf{U}=\left(U_1,U_2, \ldots U_T \right)'$ such that, at the current parameter value $\theta$, the selected NE is $\mathbf{y}$. Associated with such a draw is a scenario $\tilde{\mathbf{B}} \in \mathbb{B}_{\mathbf{y}}$ which can be used to simulate the likelihood function using equation \eqref{eq: importance_likelihood_simulator} above.
\\
\\
Let $\mathbf{y}$ denote the NE for which we wish to compute the likelihood $\Pr\left(\left.\mathbf{Y}=\mathbf{y}\right|\mathbf{X};\theta\right)$.
Without loss of generality we will assume that the first $T-s$ agents \emph{do not take} the action (i.e., that $y_t=0$ for $t=1,\dots,T-s$) while the last $s$ agents \emph{do take} the action (i.e., that $y_t=1$ for $t=T-s+1,\dots,T$).
\\
\\
We will first draw random utility shocks for the $T-s$ agents who \emph{do not} take action in NE $\mathbf{y}$. For these agents it must be the case that their utility shock $U_t$ is high enough such that it optimal for them not to take action given the play of others $\mathbf{y}_{-t}$. From \eqref{eq: peer_effects_utility} we have that it must be the case that, for $t=1,\ldots,T-s$,
\begin{equation}
X_{t}'\beta+\mathbf{G}_{t}\mathbf{y}\delta < U_{t}  
\end{equation}
or, equivalently, that $U_{t}\in\left(X_{t}'\beta+\mathbf{G}_{t}\mathbf{y}\delta,\infty\right)$ for $t=1,\ldots,T-s$. Shocks of this magnitude are high enough to ensure that these agents' do not wish to take action, which would be a deviation from their behavior in NE $\mathbf{y}$. See \eqref{eq: tarski_system} above.
\\
\\
We next consider agents the $s$ who \emph{do} take action in the target equilibrium, $\mathbf{y}$. Finding the appropriate range restrictions for $U_t$ for the $y_t=1$ agents is more challenging. Since $\mathbf{G}_{t}\mathbf{y}\delta \geq 0$ (by supermodularity), it must be the case that, if $U_t \in\left(-\infty,x_{t}'\beta \right]$ for $t=T-s+ 1,\ldots,T$, then it will be optimal for these agents to choose $y_t=1$ (as desired). Indeed, draws this low ensure that it will be strictly dominant for these agents to take the action. However it is also possible that higher draws of $U_t$ are sufficiently low enough to ensure that these agents will still choose $y_t=1$.
\\
\\
An example helps clarify the issues involved. Consider a setting where $\mathbf{G}$ corresponds to a complete graph (i.e., everyone is connected to everyone else in the group as in the canonical ``linear-in-means" model studied by \cite{Manski_ReStud93}, \cite{Bramoulle_et_al_JOE09} and others). Our goal continues to be the construction of a $\mathbf{U}=\left(U_1,U_2, \ldots U_T \right)'$ sequence where $\mathbf{Y}=\mathbf{y}$ is the NE. We again begin by drawing the $T-s$ preference shocks for the $y_t=0$ agents from the interval $U_{t}\in\left(X_{t}'\beta+\frac{s\delta}{T-1},\infty\right)$. This range restriction is sufficient to ensure that these $T-s$ agents will not want to deviate from $\mathbf{y}$.
\\
\\
Next we need to draw utility shocks for the $s$ agents who we \emph{do} want to take action. Assume, for the purposes of exposition, that the first $s-1$ of these draws, $U_{T-s+1}, \ldots , U_{T-1}$, are so low that it is strictly dominant for these players to choose $y_t=1$. In this case, for the $s^{th}$ action-taking agent, we only require a random utility shock lower than $X_{t}'\beta+\frac{(s-1)\delta}{T-1}$. For this last shock we can draw $U_{T}\in\left(-\infty,X_{t}'\beta+\frac{(s-1)\delta}{T-1}\right)$.
\\
\\
We now consider a slight variation of this example: we assume that the only first $s-2$ draws for the action-taking agents, $U_{T-s+1}, \ldots , U_{T-2}$, are low enough for it to be strictly dominant for them to choose $y_t=1$. Next consider the appropriate range constraint for the $(s-1)^{th}$ action-taking agent's random utility draw. 
\\
\\
In taking this draw we know that the previous $s-2$ agents will take the action for sure, we will also proceed as if the draw for the $s^{th}$ agent will be low enough for her to want to choose $y_t=1$ as well. In this case any draw of $U_{t}\in\left(-\infty,X_{t}'\beta+\frac{(s-1)\delta}{T-1}\right)$ will suffice for the $(s-1)^{th}$ action-taking agent to choose $y_t=1$ as desired. Say our draw of $U_t$ for agent $t=T-1$ (the $(s-1)^{th}$ agent choosing $y_t=1$) is relatively high, specifically in the interval $\left(X_{t}'\beta+\frac{(s-2)\delta}{T-1},X_{t}'\beta+\frac{(s-1)\delta}{T-1}\right).$
\\
\\
In this situation the appropriate way to draw $U_t$ for agent $T$, the $s^{th}$ and final action-taking agent, is delicate. Say we draw $U_t \in \left(X_{t}'\beta+\frac{(s-2)\delta}{T-1},X_{t}'\beta+\frac{(s-1)\delta}{T-1}\right).$ In this case both agents $T-1$ and $T$ would be willing to choose $y_t=1$ as long as the other did, but they won't choose $y_t=1$ if the other chooses $y_t=0$. Since we assume that the NE with the least amount of action prevails in the presence of multiplicity, such a draw would not result in a scenario in our target set of $\mathbb{B}_{\mathbf{y}}$.
\\
\\
To stay on track we require that $U_T$ lies below $X_{T}'\beta+\frac{(s-2)\delta}{T-1}$. In this case agent $T$ will take the action (because $s-2$ agents will choose $y_t=1$ for sure). Since agent $T$ takes the action, it will also be optimal for agent $T-1$, who had a somewhat higher random utility draw, to do so as well. This agent is on the fence, but since $T$ chooses $y_t=1$ she does so as well. We thus have $\mathbf{U} \in \tilde{\mathbf{B}}$ and $\tilde{\mathbf{B}} \in \mathbb{B}_{\mathbf{y}}$ as needed.
\\
\\
In general there will be an agent specific threshold, $h_t$, between $X_{t}'\beta$ and $X_{t}'\beta+\mathbf{G}_{t}\mathbf{y}\delta$ that will be sufficient to keep our algorithm on track. The theshold $h_t$ is such that if $U_{t} \leq h_t $,  then it is possible to construct subsequent draws $U_{t+1}, \ldots , U_T$, such that, in the end, $\mathbf{U} \in \tilde{\mathbf{B}}$ and $\tilde{\mathbf{B}} \in \mathbb{B}_{\mathbf{y}}$, as
desired. If however $U_{t}>h_t$, then this will not be possible. Algorithm \ref{alg: threshold-finder-peer} shows how to find this threshold. Given these thresholds it is relatively
straightforward to sample $\tilde{\mathbf{B}} \in \mathbb{B}_{\mathbf{y}}$. 
\\
\begin{algorithm}
\caption{\label{alg: scenario-sampler-peer}\textsc{Scenario Sampler} (peer effects case)}
\textbf{Inputs:} $\mathbf{z}=\left(\mathbf{x},\mathbf{y}\right)$, \textbf{$\theta$} (i.e., a target pure strategy
combination and a utility/payoff function)
\begin{enumerate}
\item Initialize $\mathbf{U}=\left(U_{1},\ldots,U_{T} \right)'=
\begin{cases}
-\infty & \text{if } y_t = 1 \\
+\infty & \text{otherwise}
\end{cases}
$

\item For $t=1,\ldots,T$
\begin{enumerate}
\item If $y_{t}=0,$ then sample $U_{t} \in \left(X_{t}'\beta+\mathbf{G}_{t}\mathbf{y}\delta,\infty\right)$
from the conditional density $\frac{f\left(u\right)}{1-F\left(X_{t}'\beta+\mathbf{G}_{t}\mathbf{y}\delta \right)}\overset{def}{\equiv}\omega_{t}  f\left(u\right)$.
\end{enumerate}
\item For $t=1,\ldots,T$
\begin{enumerate}
\item If $y_{t}=1$, then 
\begin{enumerate}
\item determine $h_{t}  $ using \textsc{Threshold}($\mathbf{z}$,\textbf{
$\theta$}, $\mathbf{U}  $, $t$);
\item sample $U_{t}  \in\left(-\infty, h_{t}  \right]$
from the conditional density $\frac{f\left(u\right)}{F\left(h_{t}  \right)}\overset{def}{\equiv}\omega_{t}  f\left(u\right).$
\end{enumerate}
\item Find $\tilde{\mathbf{B}}\in\mathbb{B}_{\mathbf{y}}$ such that $\mathbf{U}  \in\tilde{\mathbf{B}}$.
\end{enumerate}
\end{enumerate}
\textbf{Outputs:} The $T \times1$ weight vector $\underline{\omega}  =\left(\omega_{1}  ,\ldots,\omega_{T} \right)'$,
the vector of taste shocks \textbf{$\mathbf{U}  $} and a (random) scenario $\tilde{\mathbf{B}}\in\mathbb{B}_{\mathbf{y}}$.
\end{algorithm}
\\
Algorithm \ref{alg: scenario-sampler-peer} details how $U_{1} ,\ldots,U_{T}$
are sequentially drawn from a series of truncated distributions. Let $\mathbf{U} $ denote a draw produced by Algorithm
\ref{alg: scenario-sampler-peer}. The (ex ante) population probability of the event $\mathbf{U} \in\mathbf{b}$
(for $\mathbf{b} \in \mathbb{B}_{\mathbf{y}}$) is
\begin{equation}
\lambda_{\mathbf{y}}\left(\mathbf{b};\theta\right)=\prod_{t=1}^{T} \omega_{t} \left[F\left(\bar{b}_{t} \right)-F\left(\underline{b}_{t} \right)\right],\label{eq: lambda_evaluator}
\end{equation}
where the weights $\omega_{t}$ are as given in the statement of Algorithm \ref{alg: scenario-sampler-peer} and $\bar{b}_{t}$ and $\underline{b}_{t}$ are the bucket boundaries for agent $t$ defined earlier. Note that $\lambda_{\mathbf{y}}\left(\mathbf{b};\theta\right)$ is a ``re-weighted" version of $\zeta\left(\mathbf{b};\theta\right)$. We also have that $\lambda_{\mathbf{y}}\left(\mathbf{b};\theta\right)=0$ for all $\mathbf{b}\in\mathbb{B}\setminus\mathbb{B}_{\mathbf{y}}$.
\\
\begin{algorithm}
\caption{\label{alg: threshold-finder-peer}\textsc{Threshold Finder} (peer effects case) }

\textbf{Inputs:} $\mathbf{z}=\left(\mathbf{X},\mathbf{y}\right)$,
\textbf{$\theta$}, $\mathbf{U} $,
$t$
\begin{enumerate}
\item For $t'=1,\ldots,T$ 
\begin{enumerate}
\item if $y_{t'}=0$, then set $\tilde{U}_{t'}=U_{t'} $;
\item if $y_{t'}=1$, then
\begin{enumerate}
\item if $t'<t$, then set $\tilde{U}_{t'}=U_{t'}$
($h_{t'}$ already found);
\item if $t'>t$, then set $\tilde{U}_{t'}=X_{t}'\beta-1$ ($h_{t'}$
not already found).
\end{enumerate}
\end{enumerate}
\item $\tilde{U}_{t}=X_{t}'\beta+\mathbf{G}_{t}\mathbf{y}\delta+1$
(ensures that player $t$ \emph{will not }want to choose $\tilde{Y}_{t}=1$
in Step 3 below).
\item Find the minimal NE, $\tilde{\mathbf{Y}}$, associated with $\tilde{\mathbf{U}}$.
Set $h_{t} =X_{t}'\beta+\mathbf{G}_{t}\tilde{\mathbf{Y}}\delta$.
\end{enumerate}
\textbf{Output:} The threshold, $h_{t}$.
\end{algorithm}
\\
A key step of Algorithm \ref{alg: scenario-sampler-peer} is its calling of our \textsc{Threshold Finder} (i.e., Algorithm \ref{alg: threshold-finder-peer}).
By construction the \textsc{Threshold Finder} is first called after all draws of $U_{t} $ for $y_{t}=0$ agents have already been made (see Step 2 of Algorithm \ref{alg: scenario-sampler-peer}). Hence, at the start of each call of Algorithm \ref{alg: threshold-finder-peer}, $\mathbf{U} $ consists of draws of $U_{t}$ for all $y_{t}=0$ cases, those draws of $U_{t}$ for the $y_{t}=1$ cases for which the thresholds $h_{t}$ have already been determined, and the initialization values for any remaining elements.
\\
\\
Let $t$ index the player for which $h_{t}  $ is currently being determined and $t'$ other players. In Step 1.a.ii. of the \textsc{Threshold Finder} we \emph{provisionally} set the random taste shock for all $t'$ where $y_{t'}=1$ \emph{and} $h_{t'}$ has not yet been determined, such that in the minimal equilibrium computed in Step 3 we will have $\tilde{y}_{t'}=1$ with probability one. For all other players, except the current one, the relevant random taste shocks have already been chosen.
\\
\\
In Step 2 of Algorithm \ref{alg: threshold-finder-peer} we then (provisionally) set the current player's random utility draw, $\tilde{U}_{t}$, to a level such that the strictly dominant strategy will be for player $t$ to \emph{not} to take action \emph{even when all other player actions are as
specified in the target NE $\mathbf{y}$}. In the target NE $\mathbf{y}$ we have $y_{t}=1$ (Algorithm \ref{alg: threshold-finder-peer} is only called in such cases), but here we wish to induce a different NE $\tilde{\mathbf{Y}}\preceq\mathbf{y}$ with $\tilde{Y}_{t}=0$. By forcing player $t$ to \emph{not} take action it may be that we also induce some other players whose thresholds $h_{t'}$ have already been chosen to also \emph{not} take the action as well, such that $\tilde{Y}_{t'}=0$ (even though $y_{t'}=1$ in the target NE). This provides useful information since it helps us understand how important player $t$'s decision to take the action is for sustaining the target NE.
\\
\\
Observe that, by monotonicity of the utility function of the underlying game we have that $\tilde{\mathbf{Y}}\preceq\mathbf{y}$: the equilibrium simulated
in Algorithm \ref{alg: threshold-finder-peer} is less dense than the target one. In Step 3 of Algorithm \ref{alg: threshold-finder-peer} we set $h_{t} =X_{t}'\beta+\mathbf{G}_{t}\tilde{\mathbf{Y}}\delta$. This is exactly the threshold we need. We know that if $U_{t}  \leq h_{t}  $ it (i) will be optimal for player $t$ to take action given the actions of her peers, $\tilde{\mathbf{Y}}_{-t}$, in the sparser equilibrium and (ii) because player $t$ will take the action, no other players will want to deviate from the target NE, $\mathbf{y}$ to $\tilde{\mathbf{Y}}$. Hence agents will choose $\mathbf{y}$ as desired. However if $U_{t} > h_{t}$, player $t$ will choose not to take the action and other players will also not take further actions beyond $\tilde{\mathbf{Y}}$. For such a draw of $\mathbf{U}$, $\mathbf{y}$ will not be the NE. 
\\
\\
These observations are formalized in the following Theorem.
\begin{thm}
\label{thm: simulation-peer}(\textsc{Valid Sampler - Peer Effects Game}) Consider the complete information $T$-player peer effects game defined above. For any minimal NE,  $\mathbf{y}\in\mathbb{Y}^{T}$, 
Algorithm \ref{alg: scenario-sampler-peer}, in conjunction with the \textsc{Threshold Finder} (Algorithm \ref{alg: threshold-finder-peer}) generates a random $\mathbf{U}  \in\mathbf{b}$ such that (i) $\mathbf{b}\in\mathbb{B}_{\mathbf{y}}$ with probability one and (ii) $\lambda_{\mathbf{y}}\left(\mathbf{b};\theta\right)>0$ for all $\mathbf{b}\in\mathbb{B}_{\mathbf{y}}$.
\end{thm}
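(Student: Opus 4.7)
The plan is to verify (i) and (ii) separately, using supermodularity of the payoff function in~\eqref{eq: peer_effects_utility} and Tarski's Fixed Point Theorem as the main tools. For~(i) I would analyse the $\mathbf{U}$ returned by Algorithm~\ref{alg: scenario-sampler-peer} directly. For $y_t = 0$ agents Step~2 draws $U_t \in \left(X_t'\beta + \mathbf{G}_t\mathbf{y}\delta,\infty\right)$, so by~\eqref{eq: tarski_system} player $t$ strictly prefers $Y_t = 0$ whenever her peers play at most $\mathbf{y}_{-t}$ coordinate-wise; in particular $\mathbf{y}$ itself satisfies the NE inequality at $U_t$. For $y_t = 1$ agents Step~3 draws $U_t \leq h_t$, and it is immediate from the construction $h_t = X_t'\beta + \mathbf{G}_t\tilde{\mathbf{Y}}\delta$ with $\tilde{\mathbf{Y}}\preceq\mathbf{y}$ (see below) that $U_t \leq X_t'\beta + \mathbf{G}_t\mathbf{y}\delta$, so player $t$ also satisfies the NE inequality. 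Thus $\mathbf{y}$ is a NE at $\mathbf{U}$.

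The minimality of $\mathbf{y}$ is the harder half and I would prove it by induction on the order in which the $y_t = 1$ agents are processed. The inductive claim is that, at the end of the $k^{th}$ call of Algorithm~\ref{alg: threshold-finder-peer}, no pure strategy profile strictly below $\mathbf{y}$ can be a NE at any $\mathbf{U}$ that the sampler might ultimately return consistently with the partial draws made so far. Inside a single call, the provisional $\tilde{\mathbf{U}}$ has three kinds of entries: already-drawn shocks (consistent with the inductive hypothesis), low placeholder shocks $X_{t'}'\beta - 1$ for unprocessed $y_{t'} = 1$ agents (making action strictly dominant), and the artificially high $\tilde{U}_t = X_t'\beta + \mathbf{G}_t\mathbf{y}\delta + 1$ for the current player (forcing $\tilde{Y}_t = 0$). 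I would argue $\tilde{\mathbf{Y}}\preceq\mathbf{y}$ by noting that $y_{t'}=0$ agents never ignite during Tarski iteration starting from $\underline{0}_T$ because their drawn shocks strictly exceed $X_{t'}'\beta + \mathbf{G}_{t'}\mathbf{y}\delta$. Consequently $h_t$ is precisely the cutoff above which player $t$ would refuse to act even against the richest peer profile that the partial draws can sustain; truncating at $h_t$ forecloses exactly those $U_t$ that would admit a NE strictly below $\mathbf{y}$, preserving the invariant to the next induction step.

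For~(ii) I would take an arbitrary $\mathbf{b} \in \mathbb{B}_{\mathbf{y}}$ and check that every defining bucket lies inside the corresponding truncation. For $y_t = 0$ agents the bucket's lower boundary $\underline{b}_t$ necessarily satisfies $\underline{b}_t \geq X_t'\beta + \mathbf{G}_t\mathbf{y}\delta$; otherwise some $\mathbf{U}\in\mathbf{b}$ would have $Y_t = 1$ strictly better against $\mathbf{y}_{-t}$, contradicting $\mathbf{b} \in \mathbb{B}_{\mathbf{y}}$. For $y_t = 1$ agents I would retrace Algorithm~\ref{alg: scenario-sampler-peer} using an arbitrary $\mathbf{U}\in\mathbf{b}$ for the sequential draws: by the analysis in~(i), if $U_t$ ever exceeded the $h_t$ produced by the Threshold Finder at that step, then $\mathbf{y}$ would fail to be the minimal NE at $\mathbf{U}$, once again contradicting $\mathbf{b} \in \mathbb{B}_{\mathbf{y}}$; hence $\bar{b}_t \leq h_t$. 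The truncated densities therefore place positive mass on each bucket and $\lambda_{\mathbf{y}}(\mathbf{b};\theta) > 0$.

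The hardest part is the inductive argument for minimality in~(i): one has to verify that $\tilde{\mathbf{Y}}$ captures the right counterfactual -- the least NE supporting $\mathbf{y}$ when player $t$ refuses to act, given all earlier draws \emph{and} any admissible future draws -- and that the cascade triggered by switching $\tilde{Y}_t$ back to $1$ halts precisely at $\mathbf{y}$ rather than over- or undershooting. Both claims rest on supermodularity and Tarski, but the bookkeeping across processed, unprocessed, and $y_{t'}=0$ agents, together with the dependence of $h_t$ on earlier random draws, requires care.
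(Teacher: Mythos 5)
Your architecture coincides with the paper's: sequential draws, a direct best-response check for the $y_t=0$ agents, the counterfactual threshold $h_t=X_t'\beta+\mathbf{G}_t\tilde{\mathbf{Y}}\delta$ for the $y_t=1$ agents, an induction over the processing order for minimality, and a retracing argument for part (ii). The problem is that the two steps you yourself flag as ``the hardest part'' are exactly where the mathematical content of the theorem lives, and your sketch asserts rather than proves them. The claim that truncating at $h_t$ ``forecloses exactly those $U_t$ that would admit a NE strictly below $\mathbf{y}$'' is precisely the content of the paper's Lemmas \ref{lemma:picewiseConstant} and \ref{lemma:treshold}: one must show that the minimal equilibrium, viewed as a function of the single coordinate $u_i$ with all other coordinates fixed, is a two-valued step function whose jump occurs exactly at $g\bigl(e^i_u(\infty)\bigr)_i$, with the upper branch equal to the equilibrium obtained at $u_i=\infty$. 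Single-valuedness of the lower branch is not automatic; the paper needs a separate argument tracking the first Tarski iteration at which coordinate $i$ switches on, via Corollary \ref{col:SfunctionsEquality}. Without this lemma, your statement that the cascade ``halts precisely at $\mathbf{y}$'' is an assertion, not a proof. Relatedly, your description of $h_t$ as the cutoff above which $t$ refuses to act ``against the richest peer profile that the partial draws can sustain'' is backwards: $h_t$ is computed from the \emph{minimal} counterfactual equilibrium $\tilde{\mathbf{Y}}\preceq\mathbf{y}$ obtained by forcing $t$ out, and against the richer profile $\mathbf{y}_{-t}$ player $t$ would still be willing to act for some $U_t>h_t$. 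It is exactly because $\tilde{\mathbf{Y}}$ is the sparsest self-sustaining profile without $t$ that $U_t>h_t$ produces a fixed point strictly below $\mathbf{y}$.

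For part (ii) there is a second gap. You argue that if $U_t$ exceeded the $h_t$ ``produced by the Threshold Finder at that step'' then $\mathbf{y}$ would not be the minimal NE at $\mathbf{U}$. But the $h_t$ produced at that step is computed from a \emph{partial} vector in which the not-yet-processed $y_{t'}=1$ agents carry a placeholder low enough to make action strictly dominant (effectively $-\infty$), not their eventual values $U_{t'}$. What the fixed-point condition at the full vector $\mathbf{U}$ delivers is $U_t\leq g\bigl(e^t_{\mathbf{U}}(\infty)\bigr)_t$; to conclude $U_t\leq h_t$ you need the threshold map $\mathbf{u}\mapsto g\bigl(e^t_{\mathbf{u}}(\infty)\bigr)_t$ to be monotone decreasing, so that the algorithm's threshold, evaluated at the coordinatewise-smaller partial vector, dominates the true one. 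This monotonicity step is the crux of the paper's Theorem \ref{thm:secondClaim} and is absent from your sketch; without it the retracing argument does not close.
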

\begin{proof}
See Appendix \ref{app: Proof-of-Theorem}.
\end{proof}

A key feature of Theorem \ref{thm: simulation} is the guarantee not just that $\mathbf{U}  \in\mathbf{b}$ with $\mathbf{b}\in\mathbb{B}_{\mathbf{y}}$, but that all scenarios in $\mathbb{B}_{\mathbf{y}}$ are visited with positive probability. Theorem \ref{thm: simulation} is sufficient for consistency of \eqref{eq: importance_likelihood_simulator} for the true likelihood as $S\rightarrow\infty$.

\subsection{The coordination game}

In this section we formally walk through our scenario sampling algorithm using the simple two-player coordination game introduced in the previous section. Our goal is to construct a simulation estimate of the ex ante probability that $\mathbf{Y}= {1 \choose 1}$ for a given $\theta$. There are three scenarios which lead to the observed outcome; these are the darker shaded lower left-hand-side rectangles in Figure \ref{fig: understanding-scenarios}. We now illustrate how the \textsc{Scenario Sampler}, Algorithm \ref{alg: scenario-sampler-peer}, samples each of them with positive probability. For the purposes of illustration, and to contrast with our informal analysis in the prior section, will will draw Ademaro's preference shock $U_1$ first, followed by Brunhilde's, $U_2$.
\\ 
\\ 
\textbf{Sampling scenario $\mathbf{b}_{11}$: Strictly dominant for both Ademaro and Brunhilde to attend}
\\
\\ 
Step 1 is initialization. Step 2 is not relevant for the $\mathbf{Y}= {1 \choose 1}$ NE.
\\
\\ 
In the first execution of Step 3, the \textsc{Threshold Finder} proceeds under the presumption that in the end Brunhilde will want to choose $Y_2=1$ (see Step 1, (b) ii. in \textsc{Threshold Finder}). It therefore finds a threshold for Ademaro of $h_1 = X_{1}'\beta^{\left(0\right)}+\delta^{\left(0\right)}$.
The \textsc{Scenario Sampler} consequently draws $U_1$ from $\left(-\infty, {h}_{1}  \right]$ (see Step 3, (a) ii. in \textsc{Scenario Sampler}). In order eventually land in $\mathbf{b}_{11}$, it must be the case that our $U_1$ draw is lower then $X_{1}'\beta^{\left(0\right)}$.
\\
\\
In the second execution of Step 3, the \textsc{Threshold Finder} knows that Ademaro will choose $y_t=1$ (Step 1, (b) i. in \textsc{Threshold Finder}) and therefore finds a threshold of $h_2 = X_{2}'\beta^{\left(0\right)}+\delta^{\left(0\right)}$ for Brunhilde.
The $U_2$ is drawn from $\left(-\infty,{h}_{2}  \right]$. In order to land in $\mathbf{b}_{11}$ it must be the case that this draw is below $X_{2}'\beta^{\left(0\right)}$.
\\
\\
By using the truncated probability distributions given in the statement of the algorithm, we find that $\mathbf{b}_{11}$ is drawn with an ex ante probability $\lambda_{\mathbf{y}}\left(\mathbf{b}_{21};\theta^{\left(0\right)}\right)	= \frac{F\left(X_{1}'\beta^{\left(0\right)}\right)}{F\left(X_{1}'\beta^{\left(0\right)}+\delta^{\left(0\right)}\right)}\left[\frac{F\left(X_{2}'\beta^{\left(0\right)}\right)}{F\left(X_{2}'\beta^{\left(0\right)}+\delta^{\left(0\right)}\right)}\right].$\\
\\
\textbf{Sampling scenario $\mathbf{b}_{21}$: Ademaro is ``on the fence"}
\\
\\
Step 1 is initialization. Step 2 is not relevant for the $\mathbf{Y}= {1 \choose 1}$ NE.
\\
\\
In the first execution of Step 3, the \textsc{Threshold Finder} again proceeds under the presumption that in the end Brunhilde will want to choose $Y_2=1$ (see Step 1, (b) ii. in \textsc{Threshold Finder}). It therefore again finds a threshold for Ademaro of $h_1 = X_{1}'\beta^{\left(0\right)}+\delta^{\left(0\right)}$. As above, the \textsc{Scenario Sampler} draws $U_1$ from $\left(-\infty, {h}_{1}  \right]$ (see Step 3, (a) ii. in \textsc{Scenario Sampler}). However in this instance we want to end up in scenario $\mathbf{b}_{21}$. Therefore it must be the case that our $U_1$ draw is in the interval $\left(X_{1}'\beta^{\left(0\right)},X_{1}'\beta^{\left(0\right)}+\delta^{\left(0\right)} \right)$.
\\
\\
In the second execution of Step 3, the \textsc{Threshold finder} knows that \textit{Ademaro will only want to take the action if Brunhilde takes the action} (Step 1, (b) i. in \textsc{Threshold Finder}). Therefore it finds a lower threshold for Brunhilde of $h_2 = X_{2}'\beta^{\left(0\right)}$. The $U_2$ preference shock is drawn from $\left(-\infty,{h}_{2}  \right]$. In order to land in $\mathbf{b}_{21}$ it must be the case that this draw is lower than $X_{2}'\beta^{\left(0\right)}$ (which it is by construction).
\\
\\
$\mathbf{b}_{21}$ is drawn with an ex ante probability of $\lambda_{\mathbf{y}}\left(\mathbf{b}_{21};\theta^{\left(0\right)}\right)	= \frac{F\left(X_{1}'\beta^{\left(0\right)}+\delta^{\left(0\right)}\right)-F\left(X_{1}'\beta^{\left(0\right)}\right)}{F\left(X_{1}'\beta^{\left(0\right)}+\delta^{\left(0\right)}\right)}.$
\\
\\
\textbf{Sampling scenario $\mathbf{b}_{12}$: Brunhilde is ``on the fence"}
\\
\\
This scenario is similar to the $\mathbf{b}_{11}$ scenario. It differs only in terms of the realized draw of $U_2$. The ex ante probability of drawing this scenario is $\lambda_{\mathbf{y}}\left(\mathbf{b}_{12};\theta^{\left(0\right)}\right)	= \frac{F\left(X_{1}'\beta^{\left(0\right)}\right)}{F\left(X_{1}'\beta^{\left(0\right)}+\delta^{\left(0\right)}\right)}\left[\frac{F\left(X_{2}'\beta^{\left(0\right)}+\delta^{\left(0\right)}\right)-F\left(X_{2}'\beta^{\left(0\right)}\right)}{F\left(X_{2}'\beta^{\left(0\right)}+\delta^{\left(0\right)}\right)}\right]$. 
\\
\\
Observe how in all three cases, the \textsc{Threshold Finder} finds the appropriate threshold for Brunhilde by using information contained in Ademaro's $U_1$ draw. Specifically the algorithm determines whether Ademaro is sensitive to Brunhilde's play, if he is it chooses a lower threshold for Brunhilde' shock to ensure that, in the end, both of them will want to go to the concert.
\\
\\
In the example above we construct $\mathbf{U}  \in\mathbf{b}$ with $\mathbf{b}\in\mathbb{B}_{\mathbf{y}}$ by drawing Ademaro's preference shock first, followed by Brunhilde's. In our discussion of the same example in the prior section we reversed this order. By comparing the scenario probabilities above with those calculated earlier (see Equation \eqref{eq: lambdas_coordination_ex1}) we can see that the order in which we draw the shocks for the $y_t=1$ agents matters.
\\
\\
We speculate that there is an optimal ordering of the $y_t=1$ agents. That is an ordering which will result in importance sample weights that estimate the likelihood with the smallest amount of simulation error for a fixed number of simulation draws. Our conjecture is that the optimal ordering involves a sorting of agents by the linear indices $X_{t}'\beta^{\left(0\right)}$. We leave a complete analysis of this question to future work.

\subsection{Technical comments}
In this section we briefly discuss a few additional feature of our algorithm. Additional details are provided in the appendices.
\\
\\
Section \ref{sec: discrete supermodular games} below introduces a class of binary-action, complete information, supermodular games to which an extended version of our scenario sampler can be applied. In these games $T$ players each take $M$ binary actions. For most games in this class the number of scenarios grows exponentially with $TM$. At the same time the probability of observing any particular NE generally shrinks exponentially. This makes explicit enumeration of scenarios, required for ``direct" MLE, infeasible. This feature of our problem also makes crude frequency-based -- ``dartboard" -- Monte Carlo methods completely impractical outside of toy examples.
\\
\\
In contrast, our approach allows a researcher to randomly sample a scenario from the target set $\mathbb{B}_{\mathbf{y}}$ in polynomial time. This is the main reason why our approach makes SML estimation of games with several thousands strategic decision possible.  However there are a few other properties of our procedure which make scenario-based estimation even more applicable to larger games. 
\\
\\
\textbf{Differentiability}
\\
An advantage of importance sampling, relative to crude frequency-based Monte Carlo, is that the former results in a criterion function that is differentiable in $\theta$, while the latter does not (see \cite{McFadden_EM89, Ackerberg_QME2009}. Inspection of Equation \eqref{eq: importance_likelihood_simulator} reveals that is locally differentiable in $\theta$. The details of this claim are spelled out in Appendix \ref{app: differenication}.
\\
\\
When agents' preferences are indexed by many parameters, it is very helpful to have the derivative of the log-likelihood function with respect to $\theta$ available. Non-gradient based optimization approaches do not scale well to high-dimensional settings. In our empirical illustration we fit a network formation model with over $200$ parameters by SML, quasi-Newton assisted, estimation (the model includes household-specific degree heterogeneity parameters as in \cite{Graham_EM17, Graham_HBE2020}). Fitting a model of this size would not be practical without gradient-based optimization methods.
\\
\\
\textbf{Scenario Recycling}
\\
Maximizing the log-likelihood requires it to be evaluated at many different values of $\theta$. The most expensive component of log-likelihood evaluation involves computation of the Nash Equilibrium (NE). This observation applies not just to our scenario-based approach, but to other likelihood-based methods of game estimation \citep[e.g.,][]{Bajari_et_al_EM10}. In Appendix \ref{app: recycling} we outline a procedure, ``scenario recycling", which economizes on repeated NE calculations. Our procedure is related to the methods of \cite{Ackerberg_QME2009} as applied to games by \cite{Bajari_et_al_EM10}. As with these approaches, scenario recycling works only if there is exactly one strategic parameter to estimate. This applies to the peer effect example of this section and the network formation example we develop empirically.
\\
\\
\textbf{Independence}\\
In many applications a researcher will observe the NEs of several, independent, games. For example a researcher may observe the smoking behavior of adolescents across many different secondary schools. Such applications have a number of computational and statistical advantages.
\\
\\
In these applications the sample log likelihood is the sum of the log-likelihoods associated with each observed game. These log-likelihood components can be estimated separately via their own simulated scenario samples and then aggregated. 
\\
\\
Independence across games also allows for the application of textbook SML large sample theory \citep[e.g.,][]{Newey_McFadden_HBE94, Hajivassiliou_Ruud_HBE1994}. Our approach can also be used to analyze a single large game, but such applications raise novel statistical issues which we do not address here. Some additional discussion on this point is provided when we discuss our empirical illustration.

\section{\label{sec: discrete supermodular games}Supermodular games}

In this section show how to adapt Algorithm's \ref{alg: scenario-sampler-peer} and \ref{alg: threshold-finder-peer} to a broader class of binary-action supermodular games. In this class of games each of $T$-players decides whether to take, or not to take, each of $M$ (non mutually exclusive) binary actions. In the peer effects game analyzed in the previous section $M=1$; but in many games agents may make multiple strategic decisions. In a game of directed network formation, for example, each player decides whether to direct a link (or not) to each of the $T-1$ other players in the network (such that $M=T-1$).  
\\
\\
In order to present positive results, we need to restrict the structure of payoffs to ensure that the resulting game is supermodular. Let $Y_{tm}\in\left\{ 0,1\right\} $ denotes the $m^{th}$ action
of player $t$. Denote agent $t$'s full $M \times 1$ action vector by $\mathbf{Y}_t=\left(Y_{t1},\ldots,Y_{tM}\right)'$.  Let $\mathbf{Y}_{t,-m}=\left(Y_{t1},\ldots,Y_{tm-1},Y_{tm+1},\ldots,Y_{tM}\right)'$
denote player $t$'s $M-1$ actions other than $Y_{tm}.$ Similarly
let $\mathbf{Y}_{-t}=\left(\mathbf{Y}_{1}',\ldots,\mathbf{Y}_{t-1}',\mathbf{Y}_{t+1}',\ldots,\mathbf{Y}_{T}'\right)'$
be the $\left(T-1\right)M\times1$ vector of actions taken by agents
other than $t$, henceforth called her \emph{peers}. Let $\mathbf{U}$
and $\mathbf{X}$ be matrices consisting of all taste shocks
and covariates. A player's realized utility depends on their own actions, $\mathbf{y}_{t} \in \mathbb{Y}_t  \overset{def}{\equiv} \{0,1\}^{M},$
as well as the actions of their \emph{peers}, $\mathbf{y}_{-t}$. For
pure strategy profile $\mathbf{y}\in \mathbb{Y} \overset{def}{\equiv} \{0,1\}^{TM}$ the utility function
of player $t$, $\upsilon_{t}: \mathbb{Y} \rightarrow\mathbb{R}$ is
\begin{align}\label{eq: general_utility}
\upsilon_{t}\left(\mathbf{y};\mathbf{X},\mathbf{U},\theta\right)=&\upsilon\left(\mathbf{y}_{t},\mathbf{y}_{-t};X{}_{t},\mathbf{U}_{t},\theta\right)\\\overset{def}{\equiv}&\sum_{m=1}^{M}y_{tm}\left(X_{tm}'\beta_{m}+s_{m}\left(\mathbf{y}_{t,-m},\mathbf{y}_{-t}\right)'\delta_{m}-U_{tm}\right),\nonumber 
\end{align}
with $\theta=\left(\beta_{1}',\delta_{1}',\ldots,\beta_{M}',\delta_{M}'\right)'$ and where $s_{m}\left(\mathbf{y}_{t,-m},\mathbf{y}_{-t}\right)$ is a given vector-valued function of own actions (other
than the $m^{th}$ one) and peers' actions. This function
may vary with $m=1,\ldots,M$. It could also depend on exogenous agent-by-action covariates, but we suppress this in the notation. The setup also allows for player-by-action specific covariates, $X_{tm}$, to influence payoffs. Players choose actions to maximize
their utility given the actions of their peers under complete information
(i.e., agents best respond).
\\
\\
The effect of an \emph{increase }in $U_{tm}$ is to \emph{decrease}
player $t$'s marginal benefit of taking action $m$; it can be thought
of as a distaste or cost-of-action shock. This term generates unobserved agent-specific
heterogeneity in the marginal utilities attached to taking the $M$
actions. This $M$-vector endows our model with the classic random
utility structure pioneered by \citet{McFadden_FinE74}. We assume
that the elements of $\mathbf{U}_{t}=\left(U_{t1},\ldots,U_{tM}\right)'$ are
independently and identically distributed (iid) with known cumulative
distribution function (CDF) $F\left(\cdot\right)$ and probability
density function (PDF) $f\left(\cdot\right)$. Independence is also maintained across agents and games (the $t$ and $i$ subscripts). We briefly discuss
different distributional and dependence assumptions on $\mathbf{U}_{t}$
in the conclusion.
\\
\\
The $s_{m}\left(\mathbf{y}_{t,-m},\mathbf{y}_{-t}\right)$ term allows for player $t$'s marginal
utility from action $m$ to depend on what other actions she chooses
to take.  For example the payoff from smoking may vary with whether she also decides to drink. This term also captures how the choices
of other players in the game alter the utility player $t$ attaches
to action $m$; so called \emph{endogenous effects} in the parlance
of \citet{Manski_ReStud93}.\footnote{\emph{Exogenous} or \emph{contextual effects} can be added to (\ref{eq: general_utility})
simply by defining $x_{tm}$ to include, for example, averages of the
attributes of other players in game $i$. Here our focus is on the
implications of strategic interaction for estimation and inference,
consequently we abstract from exogenous effects. }
\\
\\
To ensure the resulting game is supermodular, we require that $s_{m}\left(\mathbf{y}_{t,-m},\mathbf{y}_{-t}\right)$ is monotone increasing in both its arguments and that $\delta_{m} \geq 0$. This restriction ensures that (i) own actions are (weak) complements with one another and that (ii) own and peer actions are weakly complementary as well.
\\
\\
The first claim follows from the restrictions that (i) the elements of
$s_{m}\left(\mathbf{y}_{t,-m},\mathbf{y}_{-t}\right)$ are weakly increasing in $\mathbf{y}_{t,-m}$ and that (ii) the elements of $\delta_{m}$ are non-negative for $m=1,\ldots,M$. Observe that $\left(\mathbb{Y}_{t},\succeq\right)$
is a complete lattice and, further, that for all $\mathbf{y}_{t},\mathbf{y}_{t}'\in\mathbb{Y}_{t}$
we have that
\begin{align} \label{eq: supermodularity}
\upsilon\left(\mathbf{y}_{t}\lor \mathbf{y}_{t}',\mathbf{y}_{-t};x{}_{t},\mathbf{u}_{t},\theta\right) + & \upsilon\left(\mathbf{y}_{t}\land \mathbf{y}_{t}',\mathbf{y}_{-t};x{}_{t},\mathbf{u}_{t},\theta\right) \\ \geq & \upsilon\left(\mathbf{y}_{t},\mathbf{y}_{-t};x{}_{t},\mathbf{u}_{t},\theta\right)+\upsilon\left(\mathbf{y}_{t}',\mathbf{y}_{-t};x{}_{t},\mathbf{u}_{t},\theta\right), \notag
\end{align}
and hence that $\upsilon\left(\mathbf{y}_{t},\mathbf{y}_{-t};x{}_{t},\mathbf{u}_{t},\theta\right)$
is a\emph{ supermodular} function of $\mathbf{y}_{t}$ \citep{Topkis_Book98}.\footnote{The notation $\mathbf{y}_{t}\lor \mathbf{y}_{t}'$ denotes the join or least upper bound operation,
$\mathbf{y}_{t}\land \mathbf{y}_{t}'$ the meet or greatest lower bound operation.} No two actions a player can take are substitutes for one another.
\\
\\
The second restriction on $s_{m}\left(\mathbf{y}_{t,-m},\mathbf{y}_{-t}\right)$ also has a nice economic interpretation. It implies that agent $t$'s utility from taking binary actions $m=1,\ldots,M$
is weakly increasing in $\mathbf{y}_{-t}$. Noting that $\left(\mathbb{Y}\setminus\mathbb{Y}_{t},\succeq\right)$
is also a complete lattice, we have that, for all $\mathbf{y}_{t}'\succeq \mathbf{y}_{t}$
and $\mathbf{y}_{-t}'\succeq\mathbf{y}_{-t}$, the \emph{increasing
differences} property
\begin{equation}
\upsilon\left(\mathbf{y}_{t}',\mathbf{y}_{-t}';x{}_{t},\mathbf{u}_{t},\theta\right)-\upsilon\left(\mathbf{y}_{t},\mathbf{y}_{-t}';x{}_{t},\mathbf{u}_{t},\theta\right)\geq\upsilon\left(\mathbf{y}_{t},\mathbf{y}_{-t}';x{}_{t},\mathbf{u}_{t},\theta\right)-\upsilon\left(\mathbf{y}_{t},\mathbf{y}_{-t};x{}_{t},\mathbf{u}_{t},\theta\right).\label{eq: increasing_differences}
\end{equation}
Own and peer actions are complementary.
\\
\\
An implication of restrictions \eqref{eq: supermodularity} and \eqref{eq: increasing_differences} is that the game is supermodular in the sense of \citet{Milgrom_Roberts_EM1990}. They show that in supermodular
games there exist two extremal NE in pure strategies -- minimal and
maximal -- and that all rationalizable strategy profiles are bounded
by these two extremal NE. The monotonicity of the utility function in $\mathbf{y}$ further allows us to find the minimal NE using Tarski's \citeyearpar{Tarski_PJM55}
Theorem. Our algorithm exploits these implications of supermodularity.
\\
\\
Our assumptions about sampling and the data generating process are
collected in Assumption \ref{ass: supermodular_game}.
\begin{assumption}
\label{ass: supermodular_game}\textsc{(Supermodular Game) } (i) The payoff function (\ref{eq: general_utility}) satisfies restrictions \eqref{eq: supermodularity} and \eqref{eq: increasing_differences};
(ii) the elements of $\mathbf{U}_{t}=\left(U_{t1},\ldots,U_{tM}\right)'$
are iid with known CDF $F\left(\cdot\right)$; (iii) agents choose
actions under complete information (i.e., they know the structure of preferences as well as $\mathbf{X}$ and $\mathbf{U}$) and (iv) play the minimal NE when multiplicity is present.
\end{assumption}

\subsection*{Examples }

Although Assumption \ref{ass: supermodular_game} is a real restriction,
it is sufficiently flexible to accommodate
many complete information games of interest to economists. To give some sense of the range of possible applications of our
methods, it is helpful to consider a few examples. Our first example is closely related to the peer effects model analyzed in the previous section.
\begin{example}
\label{ex: network_effects} \textsc{(Network externalities).} A leading
application of the methods outlined is this paper is to the study
technology adoption in the presence of network externalities \citep[e.g., ][]{Ackerberg_Gowrisankaran_RAND2006}.
Consider a single binary decision of whether to adopt $Y_{t}=1$ an
innovation or not $Y_{t}=0$. For example, \citet{Goolsbee_Klenow_JLE2002}
study network externalities in home computer adoption. If the marginal
benefits of adoption are increasing in the aggregate number of adopters,
then utility (\ref{eq: general_utility}) might take the form
\begin{equation}
\upsilon\left(y_{t},\mathbf{y}_{-t};x_{t},u_{t},\theta\right)=y_{t}\left(x_{t}'\beta+\left[\sum_{s\neq t}y_{s}\right]\delta-u_{t}\right)\label{eq: utility_network_effects}
\end{equation}
where, since $M=1$, we set $U_{t1}=U_{t}$ to simplify the notation;
similarly $s_{1}\left(\mathbf{y}_{-t}\right)=s\left(\mathbf{y}_{-t}\right)=\sum_{s\neq t}y_{s}$
(i.e., when $M=1$ we drop the $m$ subscript on the ``strategic" utility term).
\end{example}
Our second example nests a model analyzed by \citet{Krauth_JOE2006}
and \citet{Soetevent_Kooreman_JAE07}, itself a complete information
version of the seminal binary action peer effects model introduced by \citet{Brock_Durlauf_RES01,Brock_Durlauf_HBE01}. 
\begin{example}
\label{ex: peer_effects}\textsc{(Multi-Action Peer Effects) }Let
$Y_{t1}$ and $Y_{t2}$ be a pair of binary actions plausibly subject
to peer influence. For example smoking and drinking among adolescents
\citep[e.g., ][]{Gaviria_Raphael_RESTAT2001}. Here utility might
take the form
\begin{align*}
\upsilon\left(\mathbf{y}_{t},\mathbf{y}_{-t};x_{t},\mathbf{u}_{t},\theta\right)= & y_{t1}\left(x_{t}'\beta_{1}+y_{t2}\delta_{11}+\left[\frac{1}{T-1}\sum_{s\neq t}y_{s1}\right]\delta_{12}+\left[\frac{1}{T-1}\sum_{s\neq t}y_{s2}\right]\delta_{13}-u_{t1}\right)\\
 & +y_{t2}\left(x_{t}'\beta_{2}+y_{t1}\delta_{21}+\left[\frac{1}{T-1}\sum_{s\neq t}y_{s1}\right]\delta_{22}+\left[\frac{1}{T-1}\sum_{s\neq t}y_{s2}\right]\delta_{23}-u_{t2}\right).
\end{align*} Here $\ensuremath{s_{1}\left(\mathbf{y}_{t,-1},\mathbf{y}_{-t}\right)}=\left(y_{t2},\frac{1}{T-1}\sum_{s\neq t}y_{s1},\frac{1}{T-1}\sum_{s\neq t}y_{s2}\right)'$ and $\ensuremath{s_{2}\left(\mathbf{y}_{t,-2},\mathbf{y}_{-t}\right)}=\left(y_{t1},\frac{1}{T-1}\sum_{s\neq t}y_{s1},\frac{1}{T-1}\sum_{s\neq t}y_{s2}\right)'$.
\\
\\
In this model smoking and drinking are complementary.
Similarly the utility of smoking and drinking is increasing in the
fraction of peers that also engage in these behaviors. Our approach
to estimation scales especially well in this example, easily accommodating
games with, for example, hundreds of players (such that $2^{MT}$,
the number of possible pure strategy combinations, is very large).
\end{example}
A third example, related to the first two, corresponds to the setting
considered by \citet{Sundararajan_BE2008}, \citet{Banerjee_et_al_Sci13},
\citet{Kim_et_al_Lancet2015} and others.
\begin{example}
\label{ex: networked_peers}\textsc{(Adoption games on networks).
}In this setting the influence of individuals on each other is mediated
by an exogenously given network of relationships. As in Example \ref{ex: network_effects},
agents decide whether to adopt or not, but now the utility of adoption
for agent $t$ only varies with the adoption behavior of those agents
to which she is directly connected. Let $\mathbf{D}=\left[D_{st}\right]_{s,t=1}^{T}$
be a $T\times T$ binary adjacency matrix describing the structure
of links among the $T$ players in a game. Utility equals 
\[
\upsilon\left(y_{t},\mathbf{y}_{-t};x_{t},u_{t},\theta\right)=y_{t}\left(x_{t}'\beta+\left[\sum_{s\neq t}D_{ts}y_{s}\right]\delta-u_{t}\right)
\]
such that $s\left(\mathbf{y}_{-t}\right)=\sum_{s\neq t}D_{ts}y_{s}$.
Researchers have been especiallly interested in how the form of $\mathbf{D}$
-- the network -- shapes equilibrium adoption decisions. Related
is the question of how to allocate adoption subsidies, here conceptualized
as external manipulations of $X_{t}$, to maximize aggregate take-up.
Both \citet{Banerjee_et_al_Sci13} and \citet{Kim_et_al_Lancet2015}
present evidence suggesting that (scarce) subsidies should be allocated
toward more central players in a network. 
\end{example}
Our fourth example is adapted from \citet{Miyauchi_JOE16}, who pointed
out the connection between the theory of supermodular games and some
models of strategic network formation.\footnote{\citet{Miyauchi_JOE16} considered undirected networks, while our
analysis formally pertains to directed ones.}
\begin{example}
\label{ex: network_formation} \textsc{(Strategic network formation).
}In this example each of the $t=1,\ldots,T$ agents in a game decides
whether to direct a link, $Y_{ts}=1$, or not, $Y_{ts}=0$, to each
of the $T-1$ other agents $s\neq t$, $s=1,\ldots,T$ also in the
game \citep[e.g.,][]{Bala_Goyal_EM00,dePaula_et_al_EM18,Sheng_EM20,Pelican_Graham_NBER2020}.
Here $M$ -- the number of strategic decisions each players makes
-- equals $T-1$ -- the number peers to which an agent may direct
links. A pure strategy combination consists of a total of $T\left(T-1\right)$
binary decisions for a total of $2^{T\left(T-1\right)}$ possible
directed network configurations. In this example it is convenient
to slightly re-define $\mathbf{Y}$ to be the $T\times T$ digraph
adjacency matrix $\text{\ensuremath{\left[Y_{ts}\right]}}_{s,t=1}^{T}$
with $Y_{tt}\equiv0$ for $t=1,\ldots,T$ by construction \citep[cf.,][]{Graham_HBE2020}.
Even in modestly-sized networks, consisting of, say, $T\approx100$
agents, this is a very large problem relative to extant applications
of game theory in econometrics (a total of $2^{9900}$ possible pure
strategy combinations!). If agents prefer, say, reciprocation and transivity
in links \citep[e.g., ][]{deWeerdt_IAP04}, then utility for agent
$t$ in a network formation game might take the form
\[
\upsilon\left(\mathbf{y}_{t},\mathbf{y}_{-t};x_{t},\mathbf{u}_{t},\theta\right)=\sum_{s\neq t}y_{ts}\left(x_{ts}'\beta+y_{st}\delta_{1}+\left[\sum_{r=1}^{T}y_{tr}y_{rs}\right]'\delta_{2}-u_{ts}\right),
\]
with $\delta_{1}$ and $\delta_{2}$ respectively measuring the strength
of agents' taste for reciprocity and transitivity in links and $x_{ts}$
consisting of a vector of sending ($t$ subscript) and receiving ($s$
subscript) agent-specific regressors as well as possibly dyadic regressors
(e.g., distance between $t$ and $s$). Of special interest is the
case where only a single network is observed (i.e., ``the fixed-$N$,
large-$T$, large-$M$ case''). Of course we might also be interested
in settings where agents may form multiple types of links. Such an
extension would incorporate features of Example \ref{ex: peer_effects}. Note that in this example $\ensuremath{s\left(\mathbf{y}_{t,-s},\mathbf{y}_{-t}\right)}=\left(y_{st},\sum_{r=1}^{T}y_{tr}y_{rs}\right)'$ and $\delta=\left(\delta_{1},\delta_{2}\right)'$, both of which do not vary with $s=1,\ldots,t-1,t+1,\ldots,T$ (an appropriate restriction when agents are conditionally exchangeable given covariates).
\end{example}
Our final example, due to \citet{Jia_EM08} and \citet{Nishida_MS2015},
is well-known in the field of empirical industrial organization.
\begin{example}
\label{ex: multi_mkt_entry} \textsc{(Strategic multi-market entry).}
Let $m=1,\ldots,M$ index markets that two rival firms $t=1,2$ choose
to either enter, $Y_{tm}=1$, or not, $Y_{tm}=0$. Let $z_{ml}$ denote
the distance between markets $m$ and $l$ and, as in \citet{Jia_EM08},
assume that the benefits of entry into market $m$ are increasing
in the number of nearby markets $l$ in which firm $t$ also operates,
specifically in $\frac{1}{z_{ml}}\sum_{l\neq m}y_{tl}$. We also assume
that the benefits of market entry are lower if one's rival also enters
the market:
\begin{equation}
\upsilon\left(\mathbf{y}_{t},\mathbf{y}_{-t};x_{t},\mathbf{u}_{t},\theta\right)=\sum_{m=1}^{M}y_{tm}\left(x_{tm}'\beta_{t}+\delta_{t1}\left[\frac{\sum_{l\neq m}y_{tl}}{z_{ml}}\right]+\delta_{t2}y_{-tm}-u_{tm}\right).\label{eq: entry_utility}
\end{equation}
Utility (\ref{eq: entry_utility}) captures firms' preferences to
operate in spatially clustered markets in order to, for example, economize
on the number of distribution centers they need to operate ($\delta_{t1}\geq0$).
It also captures the fact that firms prefer markets with less competition
($\delta_{t2}\leq0$). Of special interest is the case where only a
single game is observed, but the two firms potentially operate/compete
in many markets (i.e., ``the fixed-$N$, fixed-$T$, large-$M$ case'').
Although utility (\ref{eq: entry_utility}) is not weakly increasing
in $\mathbf{y}_{-t}$, if we redefine the strategy space for, say
player $2$, such that $Y_{2m}=0$ corresponds to entry into market
$m$ and $Y_{2m}=1$ to non-entry, then the non-decreasing property
is restored with $\delta_{11}\geq0$, $\delta_{12}\geq0$ and $\delta_{21}\leq0$
and $\delta_{22}\leq0$. See \citet{Jia_EM08} for details. Note this
example departs slightly from the general framework of \eqref{eq: general_utility}.
Here, in keeping with the fixed-$T$, large-$M$ framing, the utility
parameters are agent-specific (e.g., Walmart and Kmart may have different profit functions), \emph{not }action-specific (as is appropriate
for the case where $M$ is fixed, but agents are exchangeable, which
characterizes the hitherto introduced examples).
\end{example}

\subsection*{Scenario estimation for supermodular games}
Our scenario sampling algorithm is easily adapted to handle the more complicated games of this section. Denote the systematic component of player $t$'s action $m$ utility by
\begin{align}
g_{tm}\left(\mathbf{y}, x_{tm}\right) 
 & \overset{def}{\equiv}x_{tm}'\beta_{m}+s_{m}\left(\mathbf{y}_{t,-m},\mathbf{y}_{-t}\right)'\delta_{m}.
\end{align}
Minimal adjustment to the peer effect algorithm outlined earlier gives our general \textsc{Scenario sampler}.
\\
\begin{algorithm}[H]
\caption{\label{alg: scenario-sampler}\textsc{Scenario sampler}  (general case)}
\textbf{Inputs:} $\mathbf{z}=\left(\mathbf{x},\mathbf{y}\right)$, \textbf{$\theta$} (i.e., a target pure strategy
combination and a utility/payoff function).
\begin{enumerate}
\item Initialize $\mathbf{U}=\left(\mathbf{U}_{1},\ldots,\mathbf{U}_{T} \right)'=\begin{cases}
-\infty & \text{if } y_t = 1 \\
+\infty & \text{otherwise}
\end{cases}$.
\item For $t=1,\ldots,T$ and  $m= 1,\ldots,M$
\begin{enumerate}
\item If $y_{tm}=0,$ then sample $U_{tm} \in \left( g_{tm}\left(\mathbf{y}, x_{tm}\right) ,\infty\right)$
from the conditional density $\frac{f\left(u\right)}{1-F\left(g_{tm}\left(\mathbf{y}, x_{tm}\right)  \right)}\overset{def}{\equiv}\omega_{t}  f\left(u\right)$.
\end{enumerate}
\item For $t=1,\ldots,T$ and $m= 1,\ldots,M$
\begin{enumerate}
\item If $y_{tm}=1$, then 
\begin{enumerate}
\item determine $h_{tm}  $ using \textsc{Threshold}($\mathbf{z}$,\textbf{
$\theta$}, $\mathbf{U}  $, $t$,
$m$);
\item sample $U_{tm}  \in\left(-\infty, h_{tm}  \right]$
from the conditional density $\frac{f\left(u\right)}{F\left(h_{tm}  \right)}\overset{def}{\equiv}\omega_{tm}  f\left(u\right).$
\end{enumerate}
\item Find $\mathbf{b}\in\mathbb{B}_{\mathbf{y}}$ such that $\mathbf{U}  \in\mathbf{b}$.
\end{enumerate}
\end{enumerate}
\textbf{Outputs:} The $TM \times1$ weight vector $\underline{\omega}  =\left(\omega_{1}  ,\ldots,\omega_{TM} \right)'$,
the vector of taste shocks \textbf{$\mathbf{U}  $} and a (random) scenario $b\in\mathbb{B}_{\mathbf{y}}$.
\end{algorithm}

As in the peer effects special case, a key component of our general \textsc{Scenario sampler} is a \textsc{Threshold finder} subroutine.

\begin{algorithm}[H]
\caption{\label{alg: threshold-finder}\textsc{Threshold finder} (general case)}
\textbf{Inputs:} $\mathbf{z}=\left(\mathbf{X},\mathbf{y}\right)$,
\textbf{$\theta^{\left(0\right)}$}, $\mathbf{U}$,
$t$, $m$.
\begin{enumerate}
\item For $t'=1,\ldots,T$  and $m' = 1,\ldots,M$
\begin{enumerate}
\item if $y_{t'm'}=0$, then set $\tilde{U}_{t'm'}=U_{t'm'} $;
\item if $y_{t'm'}=1$, then
\begin{enumerate}
\item if $\bar{h}_{t'm'} $ already found, then set $\tilde{U}_{t'm'}=U_{t'm'}$;
\item otherwise, set $\tilde{U}_{t'm'}=g_{tm}\left(0, x_{tm}\right) -1$ (i.e., $\bar{h}_{t'm'} $
not already found).
\end{enumerate}
\end{enumerate}
\item Set $\tilde{U}_{tm}= g_{tm}\left(\mathbf{y}, x_{tm}\right) +1$
(this ensures that player $t$ \emph{will not} want to choose $\tilde{Y}_{tm}=1$
in Step 3 below).
\item Find the minimal NE, $\mathbf{\tilde{y}}$, associated with $\tilde{\mathbf{U}}$.
Set $h_{tm} = g_{tm}\left(\mathbf{\tilde{y}}, x_{tm}\right)$. 
\end{enumerate}
\textbf{Output:} The threshold, $h_{tm}$.
\end{algorithm}
The correctness of the general algorithm follows from a simple extension of Theorem \ref{thm: simulation-peer}.
\begin{thm}
\label{thm: simulation}(\textsc{Valid Sampler - General Case}) Consider the complete information $T$-player, $M$-action supermodular game defined above. For any minimal NE,  $\mathbf{y}\in\mathbb{Y}^{TM}$, 
Algorithm \ref{alg: scenario-sampler}, in conjunction with the \textsc{Threshold Finder} (Algorithm \ref{alg: threshold-finder}) generates a random $\mathbf{U}  \in\mathbf{b}$ such that (i) $\mathbf{b}\in\mathbb{B}_{\mathbf{y}}$ with probability one and (ii) $\lambda_{\mathbf{y}}\left(\mathbf{b};\theta\right)>0$ for all $\mathbf{b}\in\mathbb{B}_{\mathbf{y}}$.
\end{thm}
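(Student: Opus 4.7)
The plan is to mirror the proof of Theorem \ref{thm: simulation-peer}, now with the strategic decision indexed by the pair $(t,m)$. Three ingredients need to be verified: (a) the \textsc{Threshold Finder} returns a well-defined threshold $h_{tm}$; (b) the resulting $\mathbf{U}$ supports $\mathbf{y}$ as the selected minimal NE with probability one; and (c) every $\mathbf{b}\in\mathbb{B}_{\mathbf{y}}$ is reached with positive probability.

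For (a), I would invoke Assumption \ref{ass: supermodular_game}(i) together with the Milgrom--Roberts characterization of supermodular games: the best-response map induced by the modified payoff profile in Step 3 of Algorithm \ref{alg: threshold-finder} is monotone on the complete lattice $\mathbb{Y}$, so Tarski's Theorem yields a well-defined minimal NE $\tilde{\mathbf{y}}$. Because Step 2 of the \textsc{Threshold Finder} forces player $t$'s marginal utility of action $m$ to be strictly negative even at $\mathbf{y}$, we have $\tilde{y}_{tm}=0$; because already-processed $y_{t'm'}=1$ coordinates inherit their realized $U_{t'm'}\le h_{t'm'}$ while not-yet-processed $y_{t'm'}=1$ coordinates are forced via Step 1(b)ii to be strongly attractive, monotonicity of the best-response map implies $\tilde{\mathbf{y}}\preceq\mathbf{y}$ component-wise in all coordinates other than $(t,m)$.

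For (b), I would argue by induction on the order in which Step 3 of Algorithm \ref{alg: scenario-sampler} processes pairs with $y_{tm}=1$. The Step 2 draws from the truncated interval $(g_{tm}(\mathbf{y},x_{tm}),\infty)$ make deviating to $y_{tm}=1$ strictly suboptimal against any profile $\preceq\mathbf{y}$. The inductive step uses the truncation $U_{tm}\le h_{tm}=g_{tm}(\tilde{\mathbf{y}},x_{tm})$ combined with monotonicity of $g_{tm}$ in $\mathbf{y}$ (inherited from the weak monotonicity of $s_m$ and from $\delta_m\ge 0$): this guarantees action $m$ is (weakly) optimal for $t$ at any profile in the order interval $[\tilde{\mathbf{y}},\mathbf{y}]$. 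Hence no player has a profitable deviation from $\mathbf{y}$ at the realized $\mathbf{U}$, so $\mathbf{y}$ is a NE. Because each $h_{tm}$ is tied to the \emph{minimal} counterfactual NE $\tilde{\mathbf{y}}$ (which itself coincides with the honest minimal NE on all already-assigned coordinates), any candidate NE strictly below $\mathbf{y}$ can be ruled out: such a deviation would require some $U_{tm}>g_{tm}(\mathbf{y}',x_{tm})$ for a profile $\mathbf{y}'\preceq\mathbf{y}$, which contradicts $U_{tm}\le h_{tm}$. Thus $\mathbf{y}$ is the selected minimal NE and $\mathbf{U}\in\mathbf{b}\in\mathbb{B}_{\mathbf{y}}$.

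For (c), fix an arbitrary $\mathbf{b}^{\ast}\in\mathbb{B}_{\mathbf{y}}$ with endpoints $\underline{b}_{tm}^{\ast}<\bar{b}_{tm}^{\ast}$. For $y_{tm}=0$ coordinates the Step 2 truncation interval $(g_{tm}(\mathbf{y},x_{tm}),\infty)$ contains $(\underline{b}_{tm}^{\ast},\bar{b}_{tm}^{\ast}]$ by the definition of $\mathbb{B}_{\mathbf{y}}$, so the target bucket receives strictly positive mass. For $y_{tm}=1$ coordinates the matter reduces to showing $h_{tm}\ge\bar{b}_{tm}^{\ast}$, which follows once one matches the counterfactual NE $\tilde{\mathbf{y}}$ produced by the \textsc{Threshold Finder} with the ``peer-play profile'' $\mathbf{y}^{\dagger}$ implicit in $\bar{b}_{tm}^{\ast}=g_{tm}(\mathbf{y}^{\dagger},x_{tm})$; here I would use Topkis's monotonicity together with the fact that $\mathbf{b}^{\ast}\in\mathbb{B}_{\mathbf{y}}$ forces $\mathbf{y}^{\dagger}\preceq\tilde{\mathbf{y}}$ in the coordinates that feed into $g_{tm}$. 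This matching step is where I expect the main obstacle, as it requires carefully tracking how the sequential ordering in Algorithm \ref{alg: scenario-sampler}, the monotonicity of $s_m$, and the minimal-NE selection rule interact to pin down bucket boundaries. Once $h_{tm}\ge\bar{b}_{tm}^{\ast}$ is in hand for all $(t,m)$, the product-form expression for $\lambda_{\mathbf{y}}(\mathbf{b};\theta)$ analogous to \eqref{eq: lambda_evaluator} (indexed by $(t,m)$ rather than $t$) yields $\lambda_{\mathbf{y}}(\mathbf{b}^{\ast};\theta)>0$, completing the proof.
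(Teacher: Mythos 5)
Your skeleton matches the paper's strategy — characterize the threshold as $h_{tm}=g_{tm}(\tilde{\mathbf{y}},x_{tm})$ for the counterfactual minimal NE $\tilde{\mathbf{y}}$ obtained by pushing $U_{tm}$ above its dominant-strategy bound, then induct over the processing order — but the two load-bearing steps are left unproven, and one of them is exactly where you say you ``expect the main obstacle.'' First, for claim (c): the paper closes the gap you flag not by matching $\tilde{\mathbf{y}}$ against a profile $\mathbf{y}^{\dagger}$ implicit in $\bar{b}^{\ast}_{tm}$, but by observing that the map $\mathbf{u}\mapsto g(e^{k}_{\mathbf{u}}(\infty))_{k}$ is monotone \emph{decreasing} (because the minimal-NE selection $E$ is monotone decreasing in $\mathbf{u}$ and $g$ is monotone increasing), and that the algorithm's provisional shock vector $\mathbf{u}^{(k)}$ at the time coordinate $k$ is processed sits coordinatewise \emph{below} any true $\mathbf{u}$ whose minimal NE is $\mathbf{y}$ (unprocessed $y=1$ coordinates are held at $-\infty$). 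Hence the algorithm's threshold weakly dominates the true one, $u_{k}\le g(e^{k}_{\mathbf{u}}(\infty))_{k}\le g(e^{k}_{\mathbf{u}^{(k)}}(\infty))_{k}=h_{k}$, so every shock vector generating $\mathbf{y}$ — and therefore every scenario in $\mathbb{B}_{\mathbf{y}}$ — is reachable. Without this monotonicity argument (or an equivalent), part (c) of your proof does not go through.

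Second, your minimality argument in (b) has a gap: ruling out a NE $\mathbf{y}'\prec\mathbf{y}$ by ``$U_{tm}>g_{tm}(\mathbf{y}',x_{tm})$ contradicts $U_{tm}\le h_{tm}$'' requires $h_{tm}\le g_{tm}(\mathbf{y}',x_{tm})$, i.e.\ $\tilde{\mathbf{y}}\preceq\mathbf{y}'$, which you never establish — and it is not obvious, since the Threshold Finder forces unprocessed $y=1$ coordinates to $-\infty$, which pushes $\tilde{\mathbf{y}}$ \emph{up}, not down. The paper avoids this entirely by proving a stronger invariance: the function $x\mapsto e^{i}_{u}(x)$ (minimal NE as a function of a single shock coordinate) is piecewise constant, taking exactly two values separated by the threshold $g(e^{i}_{u}(\infty))_{i}$, so each draw within the prescribed truncated interval leaves the \emph{selected minimal} NE $E(\mathbf{u}^{(k)})$ unchanged at every step of the induction, starting from $E(\mathbf{u}^{(0)})=\mathbf{y}$ at the $\pm\infty$ initialization. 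That single lemma delivers both ``$\mathbf{y}$ is a NE'' and ``$\mathbf{y}$ is the minimal NE'' at once; you would need to prove it (or the two-sided constancy it encodes, whose $x\le t$ branch the paper notes ``requires a bit more work'') to make your induction rigorous.
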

\begin{proof}
See Appendix \ref{app: Proof-of-Theorem}.
\end{proof}
The rest of estimation proceeds as described earlier. Although we note that scenario recycling is not straightforward when the strategic interaction parameter, $\delta$, is vector-valued (as discussed further in the Appendix \ref{app: recycling}). This does result in an increase in computation time; although since finding the minimal equilibrium is straightforward in supermodular games estimation remains feasible, even for very large games.

\section{\label{sec: monte carlo experiments}Monte Carlo Experiment}

In this section we summarize the results of a small number of Monte Carlo experiments. The purpose of the experiments is to verify our main theoretical claims as well as to get some sense of the small sample performance of our methods in an empirical setting of interest.
\\
\\
The Monte Carlo design uses a random geometric graph to construct a friendship network with $T \times T$ adjacency matrix $\mathbf{D}$ \citep[see][]{Graham_NBER16}. The friendship network is exogenous and determines each agent's set of peers. Specifically agents are scattered uniformly on a plane. The network is generated by randomly linking agents which are close to each other on this plane. The  purpose is to approximate a real world friendship network where only close agents have the possibility to meet each other.
\\
\\
The utility for agent $t$ has the form
\begin{equation*}
\upsilon\left(y_{t},\mathbf{y}_{-t};x_{t},u_{t},\theta\right)=y_{t}\left(x_{t}'\beta+\left[\sum_{s\neq t}D_{ts}y_{s}\right]\delta-u_{t}\right).
\end{equation*}
In the Monte Carlo design we use four covariates. Two of these covariates are binary (sampled for each player from a Bernoulli distribution). The remaining two covariates are continuously-valued (sampled from a uniform distribution). The preference shocks $\mathbf{U}=\left(U_1,\ldots,U_T\right)'$ are iid standard normal random variables. Full details of the data generating process are available in Appendix \ref{app: MC_details}. For each simulation we (i) draw the vector of utility shocks $\mathbf{U}$ and (ii) find the minimal equilibrium, $\mathbf{Y}$, by fixed point iteration. Finally, we estimate $\beta$ and $\delta$ based on the resulting observed $\mathbf{Y}$ vector by simulated maximum likelihood (SML) using our \textsc{Scenario sampler}. The regressor matrix $\mathbf{X}$ is simulated once and then held fixed across Monte Carlo replications.
\\
\\
We consider two variants of the above setup. In the first estimation is based upon the availability of many independent medium sized games. In this setting the asymptotic sampling distribution of the SMLE of $\theta$ follows from standard large sample results (see \cite{Hajivassiliou_Ruud_HBE1994, Newey_McFadden_HBE94}). This is a ``fixed $T$, fixed $M$, large $N$" setting.
\\
\\
We also consider the properties of our SML estimator when there is only a single large game. This is a ``large $T$, fixed $M$, fixed $N$" setting. Large sample theory for SML estimates is not available in this setting. Developing such theory raises a number of interesting questions that are well beyond the scope of this paper (see \cite{Menzel_ReStud2016} for some relevant ideas and also discussion in the next section).
\\
\\
The case of many medium sized games is encountered if the researcher has, for example, friendship data across many independent school classrooms \citep[e.g.,][]{vanRijsewijk_et_al_PLOS2018}. The case a single game arises when, for example, the researcher observes a network of relationships in a single village \citep[e.g.,][]{deWeerdt_IAP04}.  For the many medium sized games case we simulate datasets with $2,000$ agents belonging to one of $N=100$ separate friendship networks (each containing $T=20$ agents). For the single large game case we consider datsets with $T=500$ agents in $N=1$ friendship network.
\\ 
\\
Monte Carlo results for both cases are displayed in, respectively, Panels A and B of Table \ref{table:monte_carlo_results}. We report the average and standard deviation of the SML estimate of $\delta$ across $1000$ simulated datasets for each of the two designs. The true value of $\delta$ is $0.20$, which is close to the average of the SMLEs. Also reported is the size of a likelihood ratio test for $H_0: \delta=0.20$ and the coverage of a Wald-based 95 percent confidence interval for $\delta$. The standard errors used to construct this interval are based upon the simulated Hessian matrix (calculated by differentiating the simulated log-likelihood function). We each design we report results when the likelihood is estimate by drawing $S=1, 10$ and $100$ scenarios.
\\ 
\\
In both designs the SMLE of $\delta$ is approximately unbiased. This holds even when we use only a small number of scenario draws. However the normal approximation, as judged by the size of the LR test and the coverage of the confidence interval, only appears to be accurate for the many games design (consistent with extant large sample theory). This is confirmed by the histograms of the SMLEs for the two designs ($S=100$ cases) shown in Figure \ref{fig: mc-distribution}. The single game distribution in the right panel is notably skewed. Understanding the sampling properties of SMLEs in single large game settings (the ``large $T$, fixed $M$, fixed $N$" case) is an interesting topic for future research.
\\
\\
Figure \ref{fig: mc-distribution} also shows the sampling distribution of naive probit estimates of $\delta$. Such estimates, since they fail to take into account the simultaneous determination of $Y_1,\ldots,Y_T$, are inconsistent (and clearly so).
\begin{table}
       \begin{center}
           \caption{Monte Carlo Experiment Results} 
           \label{table:monte_carlo_results}
           \begin{tabular}{l r r r r r r}
                \hline \hline
			  & \multicolumn{3}{c}{\textsc{Panel A}} & \multicolumn{3}{c}{\textsc{Panel B}} \\
                & \multicolumn{3}{c}{\textsc{Many Games}} & \multicolumn{3}{c}{\textsc{Single Game}} \\
                & \multicolumn{3}{c}{$(N=100)$} & \multicolumn{3}{c}{$(N=1)$} \\
			\hline
			Number of players per game, $T$ & 20 & 20 & 20 & 500 & 500 & 500 \\
			Number of scenario draws, $S$ & 1 & 10 & 100 & 1 & 10 & 100 \\
			Number Monte Carlo replications  & 500 & 500 & 500 & 500 & 500 & 500  \\
			Mean of $\hat{\delta}$  & 0.208 & 0.200 & 0.197 & 0.206 & 0.199 & 0.197 \\
			Std. Dev. of $\hat{\delta}$ & 0.020 & 0.030 & 0.033 & 0.030 & 0.043 & 0.051\\
			Likelihood Ratio test size ($H_0 : \delta = \delta_0, \alpha = 0.05$)& 0.072 & 0.050 & 0.034 & 0.060 & 0.040 & 0.072 \\
			Confidence interval coverage ($1-\alpha = 0.95$)& 0.942 & 0.952 & 0.936 & 0.958 & 0.896 & 0.872 \\
                \hline \hline
		\end{tabular}
	\end{center}
        \underline{\textsc{Notes:}} Data generating process is as described in the main text and the Appendix \ref{app: MC_details}. Coverage is for the usual Wald-statistic-based confidence interval. Standard errors for these intervals were constructed from the Hessian matrices associated with simulated log-likelihood functions.
\end{table}
\\
\\
We close this section by observing that our Panel B Monte Carlo experiments are based upon a single game with $500$ actions. We are aware of no other maximum likelihood based estimator for games of this size.

\begin{figure}
\caption{\label{fig: mc-distribution} The MC distribution of the Scenario Estimator}
\includegraphics[scale=0.55]{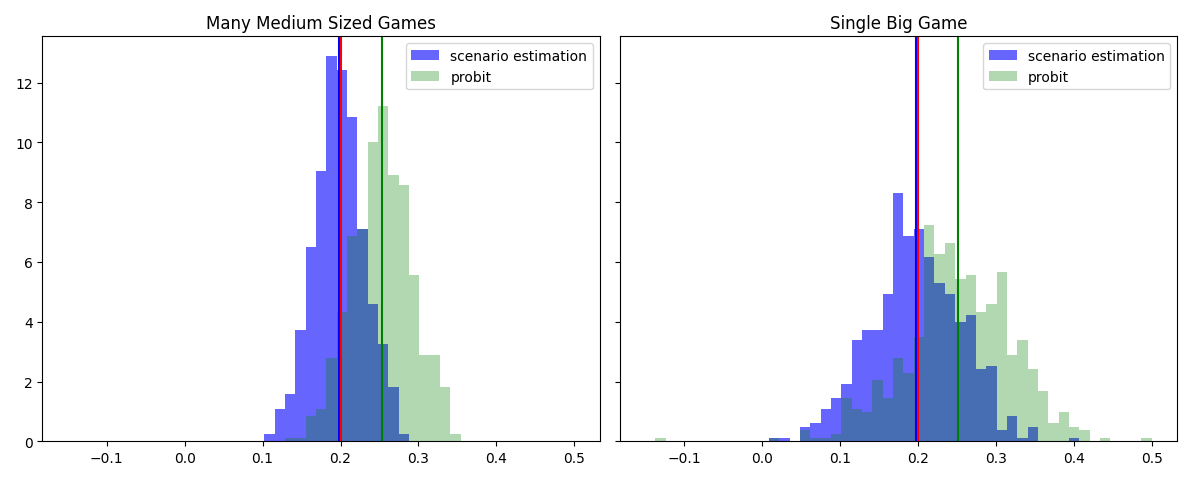}
\underline{Notes:} The figure plots a histogram estimate of the distribution of $\hat{\delta}$ across the $500$ Monte Carlo many game and single game experiments. The sampling distributions for the estimates based on $S=100$ scenario draws are the ones shown. The red vertical line marks the population value of $\delta$. The blue shaded histogram gives the distribution of our scenario-based SMLE estimates, the blue vertical line marks the average of these estimates. The green shaded histogram gives the distribution of the naive probit regression estimates of $\delta$. The probit estimates are inconsistent due to simultaneity. The green vertical line marks the average of the probit estimates. 
\end{figure}

\section{\label{sec: nykatoke} Risk-sharing in Nyakatoke} 
Joachim De Weerdt, in connection with dissertation research, collected risk-sharing links across households in Nyakatoke, a village in the Kagera Region of Tanzania (adjacent to Lake Victoria). Specifically, he asked all adult individuals in the village who they could rely upon for help and, from their responses, constructed a network of directed
links across households.\footnote{\cite{Comola_Fafchamps_EJ14} discuss the Nyakatoke dataset in detail and our interpretation of the link data follows their suggested one.} \cite{deWeerdt_IAP04} undertook a pioneering empirical analysis of these data. He modelled link formation using dyadic logistic regression methods.\footnote{See \cite{Graham_EM17} and \cite{Graham_CEMMAP22} for formal analyses of the statistical properties of these estimators in single network settings.} The analysis focused on the role of kinhsip, clan, religion, wealth and economic activity overlap in driving link formation. 
\\
\\
\cite{deWeerdt_IAP04} also posited that households exhibited a taste for transitivity in links. To capture this feature of preferences he included the number of friends in common as an additional regressor in his analyses (see Table 6 of his paper). If the payoff from $t$ directing a link to $s$ varies with the presence or absence of other links in the network (e.g., with whether $t$ and $s$ have many friends $r$ in common), then, as described earlier, the observed configuration of links will correspond to the outcome of a strategic network formation game. In such a setting, since links are simultaneously determined (and the model is also incomplete), dyadic logistic regression analysis will not deliver consistent estimates of household preferences over networks.
\\
\\
In this Section we re-visit De Weerdt's (\citeyear{deWeerdt_IAP04}) analysis. Instead of positing a taste for transitivity we, inspired by \cite{Jackson_et_al_AER12}, study whether households have a taste for ``supported links". Agent $r$ supports a link from $t$ to $s$ (and also $s$ to $t$) if both arcs $(r,t)$ and $(r,s)$ are present. \cite{Jackson_et_al_AER12} posit that agents value support, since agent $r$ can monitor and referee any relationship between $t$ and $s$. Formally we posit that household $t$'s payoff from network $\mathbf{y}$ is:
\begin{equation} \label{eq: utility_Nyakatoke}
    \upsilon\left(\mathbf{y}_{t},\mathbf{y}_{-t};x_{t},\mathbf{u}_{t},\theta\right)=\sum_{s\neq t}y_{ts}\left(x_{ts}'\beta+\delta\left[\sum_{r=1}^{T}y_{rt}y_{rs}\right]+A_{s}+B_{t}-u_{ts}\right),
\end{equation}
with $\theta=\left(\beta',\delta,\mathbf{A}',\mathbf{B}'\right)'$ for $\mathbf{A}=\left(A_{1},\ldots,A_{T}\right)'$ and $\mathbf{B}=\left(B_{1},\ldots,B_{T}\right)'$. Here $\mathbf{X}$ is a matrix of dyadic regressors. These regressors are extensively described by \cite{deWeerdt_IAP04} and more succinctly defined in Table \ref{table:nyakatoke} below. The $\left\{A_t\right\}_{t=1}^T$ and $\left\{B_s\right\}_{s=1}^T$ terms are household-specific parameters allowing for out- and in-degree heterogeneity. They are also sometimes called ego- and alter-effects or sender- and receiver-effects.
\\
\\
Such effects accommodate the reality that some households may generically get greater utility, \emph{ceterius paribus}, from directing a link, while other households may be \emph{a priori} more attractive link targets. The $\sum_{r=1}^{T}y_{rt}y_{rs}$ term counts the number of agents $r$ available to support arc $(t,s)$. The parameter $\delta$ indexes how much the payoff from directing arc $(t,s)$ increases with support. This is the ``strategic" parameter in our model and the one of primary interest here.
\\
\\
Before presenting our estimation results we note that our model is non-standard. We have complete data for $T=116$ households (out of a total of 119 households). Hence $\mathbf{Y}$ includes a total of $T(T-1)=13,340$ strategic decisions! Preferences over this graph are indexed by $\dim\left(\theta\right)=\dim\left(\beta\right)+1+2T = 243$ parameters. Inference is based entirely on a single large game (a ``fixed" $N$, ``large" $T$, ``large" $M=T-1$ analysis). The log-likelihood function does not consist of a sum of independent components, the dimension of $\theta$ grows linearly with $T$ and, finally, any large $T$ analysis would have to consider the properties of the network formation game as $T$ grows large \citep[see ][]{Menzel_ReStud2016}).
\\
\\
Inspired by \cite{Mele_Zhu_RESTAT2023} we conjecture that a triangular array set up, with $\delta$ replaced by $\delta_T=\frac{\delta}{T}$, would generate a sequence of games with a non-trivial limit as $T\rightarrow\infty$. Perhaps this set-up, paired with ideas in \cite{FernandezVal_Weidner_JOE16} and \cite{Graham_EM17}, could be used to show consistency and asymptotic normality of $\hat{\beta}$ and $\hat{\delta}$ (likely with a bias in the limit distribution). These are just conjectures; formal analysis is likely to be non-trivial and raises issues well beyond the scope of this paper. Here will simply report simulated maximum likelihood estimates (SMLEs) of $\theta$. The statistical properties of these SMLEs are, as yet, unknown. 
\\
\\
Table \ref{table:nyakatoke} reports SMLEs of $\theta$.  The simulated log-likelihood is constructed as described in Sections \ref{sec: coordination game} and \ref{sec: peer effects model} above. We use a quasi-Newton optimization algorithm, differentiating the simulated log-likelihood as detailed in Appendix \ref{app: differenication}. Standard errors, reported in parentheses, are constructed from the diagonal elements of the inverse Hessian matrix (constructed by double differentiation of the simulated log-likelihood). These standard errors have unknown statistical properties; here they simply provide measures of the curvature of our criterion function in the neighborhood of its maximum.
\\
\\
The first column of Table \ref{table:nyakatoke} reports a naive dyadic probit regression fit. These results mirror those in \cite{deWeerdt_IAP04}, who used a logit specification. The naive probit results suggest that familial connections and spatial proximity are strong drivers of link formation. There is also some evidence of religion- and wealth-based homophily. The extent of overlap in economic activities does not predict links (notwithstanding that the scope for insurance might be greater across households engaging in different types of economic activities).
\\
\\
Columns 2 to 4 of Table \ref{table:nyakatoke} report our SMLEs based on varying numbers of scenario draws ($S=1,10,100$). Our point estimates are remarkably insensitive to the number of scenarios drawn, although theoretical considerations privilege those estimates based on a larger number of scenario draws. Our discussion of the substantive aspects of the SMLEs will be based on those reported in Column 4 (which uses $S=100$ scenario draws).
\\
\\
We do not fully understand why our results are relatively stable even for small $S$, we conjecture this may reflect the following factors. First, inspection of \eqref{eq: importance_likelihood_simulator} indicates that, for a fixed $S$, our importance sampler provides an \emph{unbiased} estimate of the likelihood function for the network (the \emph{log}-likelihood function will not be unbiased). Note also that the summands in \eqref{eq: importance_likelihood_simulator} are analytically calculated $T(T-1)$-dimensional ``rectangular" probabilities (see Appendix \ref{app: differenication}). The analytical calculation of these rectangle volumes is, we believe, a distinct feature of our algorithm relative to other familiar importance samplers in econometrics (e.g., the GHK algorithm).
\\
\\
Consider the case where estimation is based on a single scenario draw (i.e., $S=1$), denote this draw by $\tilde{\mathbf{B}}^{\left(1\right)}$. This scenario is the hypercube formed by $T(T-1) = 13,340$ bucket edges, each with upper and lower bounds of, respectively, $\tilde{\bar{b}}_{ts}^{\left(1\right)}$ and $\tilde{\underline{b}}_{ts}^{\left(1\right)}$. These bucket boundaries are determined by the draws of $U_{ts}$ generated in Steps 2 and 3 of the \textsc{Scenario Sampler}. While these draws are not fully independent of one another -- the truncation point for each $U_{ts}$ draw (corresponding to a link present in the graph) depends on the values of prior draws of $U_{t's'}$ -- there is nevertheless a fair bit of independence across them.
\begin{table}
\caption{Nyakatoke SML Estimates (With Sender- and Receiver Effects)} 
\begin{center}
\begin{tabular}{lllll}
\hline  \hline
``Regressor"                         & Probit   & SMLE        & SMLE         & SMLE    \\
                                     &          & ($S=1$)     & ($S=10$)     & ($S=100$)\\
\hline
Support                              & 0.183    & 0.166       & 0.127        & 0.146   \\
($\sum_{r=1}^{T}y_{rt}y_{rs}$)       & (0.031)  & (0.015)     & (0.014)      & (0.031) \\
                                                                                       \\
Parents, children                    & 1.485    & 1.511       & 1.509        & 1.510)  \\
and siblings                         & (0.116)  & (0.113)     & (0.113)      & (0.117) \\
                                                                                       \\
Nephews, nieces, uncles, aunts,      & 0.919    & 0.897       & 0.921        & 0.929)  \\
cousins, grandparents, grandchildren & (0.128)  & (0.127)     & (0.127)      & (0.128) \\
                                                                                       \\
Other blood relative                 & 0.697    & 0.691       & 0.714        & 0.702   \\
                                     & (0.102)  & (0.100)     & (0.100)      & (0.101) \\
                                                                                       \\
Distance (km)                        & -1.375   & -1.396      & -1.420       & -1.394  \\
                                     & (0.100)  & (0.099)     & (0.099)      & (0.101) \\
                                                                                       \\ 
Same religion (Catholic,             & 0.169    & 0.156       & 0.168        & 0.172   \\
lutheran or Muslim)                  & (0.049)  & (0.048)     & (0.048)      & (0.048) \\
                                                                                       \\
Same clan                            & 0.008    & 0.018       & 0.006        & 0.011   \\
                                     & (0.079)  & (0.078)     & (0.078)      & (0.079) \\
                                                                                       \\
Both $t$ and $s$ household heads     & -0.097   & -0.082      & -0.100       & -0.097  \\
have completed primary school        & (0.156)  & (0.155)     & (0.156)      & (0.156) \\
                                                                                       \\
Activity overlap (0 to 1)            & -0.012   & -0.013      & -0.011       & -0.011  \\
                                     & (0.015)  & (0.014)     & (0.014)      & (0.015) \\
                                                                                       \\
Absolute household head age          & -0.082   & -0.080      & -0.084       & -0.081  \\
difference (decades)                 & (0.021)  & (0.021)     & (0.021)      & (0.021) \\
                                                                                       \\
Absolute wealth difference           & -0.025   & -0.024      & -0.026       & -0.025  \\
(000,000s of Tanzanian Shillings)    & (0.008)  & (0.008)     & (0.008)      & (0.008) \\
                                                                                       \\
\hline \hline
\end{tabular}
\end{center}
\underline{\textsc{Notes:}} Household-specific ego- ($A_t$) and alter- ($B_s$) effects included in all models (point estimates not reported). 
Reported standard errors constructed using the inverse of the diagonal elements of the Hessian of the simulated log-likelihood function. For
the naive probit estimates the standard errors ignore all dependence across dyads and other forms of model mis-specification.\label{table:nyakatoke}\end{table}
\\
\\
When $S=1$ the criterion function we maximize equals
\begin{equation*}
    \ln\hat{\Pr}\left(\left.\mathbf{Y}=\mathbf{y}\right|\mathbf{X};\theta\right)=\sum_{t=1}^{T}\sum_{s\neq t}\ln\left[F\left(\tilde{\bar{b}}_{ts}^{\left(1\right)}\right)-F\left(\tilde{\underline{b}}_{ts}^{\left(1\right)}\right)\right]-\ln\lambda_{\mathbf{y}}\left(\tilde{\mathbf{B}}^{\left(1\right)};\theta^{\left(0\right)}\right).
\end{equation*}
The first term is a summation of $T(T-1)$ random variables. While these random variables are not fully independent, neither are they fully dependent. Suitable normalized, it is plausible that this term has a limit as $T$ grows large. The second term in this expression, which is an output of the \textsc{Scenario Sampler}, doesn't vary with $\theta$. These considerations suggest that our simulated log-likelihood criterion may be a reasonable estimate even when $S=1$. This merits further study.
\\
\\
In looking at the point estimates, appropriately taking into account the game-theoretic details of the model results in a support coefficient about 10 percent smaller than what is produced by the naive probit fit. In contrast the coefficients on the homophily measures, on balance, increase in absolute magnitude by about 10 percent once we properly treat the network as a NE. Overall supported links do appear to generate greater utility. Links between blood relatives, geographic neighbors, co-religionists and households with similar wealth levels also generate greater utility.
\\
\\
What we wish to emphasize here is that maximum likelihood analysis, fully accounting for the complications arising from strategic interaction, is possible in a game involving over ten thousand strategic decisions and several hundred utility parameters. 

\section{\label{sec: conclusion} Conclusion}

In this paper we have presented an algorithm which facilitates simulated maximum likelihood estimation of very large binary-action supermodular games. The introduction of methods allowing for the empirical analysis of datasets recording the outcomes of strategic interaction among multiple agents is one of the major accomplishments of twenty-first century econometrics. It is our hope that the methods proposed in this paper can be used to empirically study games much larger than is currently common. 
\\
\\
Much work remains to be done. While we have shown how to compute the (simulated) MLE for some very large games, several of our examples are non-standard. These examples involve a likelihood that does not obviously factor into independent components and/or parameter spaces which grow with the ``sample size".  
\\
\\
Research on how to improve the efficiency of our importance sampler would be most welcome, as would a better understanding of how many scenario draws must be taken in practical real world settings to get reliable point estimates. It also seems likely that extant insights from the literature on simulation-based econometrics could be adapted to improve our basic approach. Some ideas in this general direction are discussed in Appendices \ref{app: differenication} and \ref{app: recycling}.
\\
\\
Our analysis involves a maintained equilibrium selection assumption. Here we have assumed that the minimal NE is the one that is played in scenarios where multiple NEs are possible. It would be easy to adapt our analysis to the case where instead the maximal equilibrium is chosen. \cite{Miyauchi_JOE16} shows that, in some models, point estimates based on these two extremal equilibria can be used to estimate the identified set for $\theta$ in the absence of any assumptions about equilibrium selection. This could be an attractive approach is settings where researchers are unwilling to maintain a particular equilibrium selection assumption.
\\
\\
In multi-action games with non-exchangeable actions the assumption that $U_{t1},\ldots,U_{tM}$ are iid is unattractive. Agents that have a taste for smoking may have, on average, a taste for drinking as well. We speculate that a pairing of our basic algorithm with ideas underlying, for example, the GHK simulator might be able to handle this extension. While this seems interesting, it is non-trivial and beyond the scope of this paper.
\\
\\
Finally, our analysis is restricted to binary supermodular games. It would also be of interest to explore whether the idea of ``scenario sampling" can be extended/adapted to non-binary choices and/or non-supermodular games (e.g., entry games with many possible entrants as in \cite{Ciliberto_Tamer_EM2009}).

\typeout{} 
\bibliography{scenario_estimation}

\begin{thebibliography}{}

\bibitem[Ackerberg, 2009]{Ackerberg_QME2009}
Ackerberg, D.~A. (2009).
\newblock A new use of importance sampling to reduce computational burden in
  simulation estimation.
\newblock {\em Quantitative Marketing and Economics}, 7:343 -- 376.

\bibitem[Ackerberg and Gowrisankaran, 2006]{Ackerberg_Gowrisankaran_RAND2006}
Ackerberg, D.~A. and Gowrisankaran, G. (2006).
\newblock Quantifying equilibrium network externalities in the ach banking
  industry.
\newblock {\em RAND Journal of Economics}, 37(3):738 -- 761.

\bibitem[Aradillas-Lopez, 2020]{Aradillas_Lopez_ARE2020}
Aradillas-Lopez, A. (2020).
\newblock The econometrics of static games.
\newblock {\em Annual Review of Economics}, 12:135 -- 165.

\bibitem[Au and Beck, 2001]{Au_Beck_PEM2001}
Au, S.-K. and Beck, J.~L. (2001).
\newblock Estimation of small failure probabilities in high dimensions by
  subset simulation.
\newblock {\em Probabilistic Engineering Mechanics}, 16(4):263 -- 277.

\bibitem[Bajari et~al., 2010]{Bajari_et_al_EM10}
Bajari, P., Hong, H., and Ryan, S.~P. (2010).
\newblock Identification and estimation of a discrete game of complete
  information.
\newblock {\em Econometrica}, 78(5):1529 -- 1568.

\bibitem[Bala and Goyal, 2000]{Bala_Goyal_EM00}
Bala, V. and Goyal, S. (2000).
\newblock A noncooperative model of network formation.
\newblock {\em Econometrica}, 68(5):1181 -- 1229.

\bibitem[Banerjee et~al., 2013]{Banerjee_et_al_Sci13}
Banerjee, A., Chandrasekhar, A.~G., Dulfo, E., and Jackson, M.~O. (2013).
\newblock The diffusion of microfinance.
\newblock {\em Science}, 341(6144):363 -- 370.

\bibitem[Bhattacharya and Bickel, 2015]{Bhattacharya_Bickel_AS15}
Bhattacharya, S. and Bickel, P.~J. (2015).
\newblock Subsampling bootstrap of count features of networks.
\newblock {\em Annals of Statistics}, 43(6):2384 -- 2411.

\bibitem[Bramoull{\'e} et~al., 2009]{Bramoulle_et_al_JOE09}
Bramoull{\'e}, Y., Djebbari, H., and Fortin, B. (2009).
\newblock Identification of peer effects through social networks.
\newblock {\em Journal of Econometrics}, 150(1):41 -- 55.

\bibitem[Bramoull\'{e} et~al., 2020]{Bramoulle_et_al_AR2020}
Bramoull\'{e}, Y., Djebbari, H., and Fortin, B. (2020).
\newblock Peer effects in networks: a survey.
\newblock {\em Annual Review of Economics}, 12:603 -- 629.

\bibitem[Bresnahan and Reiss, 1991]{Bresnahan_Reiss_JOE91}
Bresnahan, T. and Reiss, P. (1991).
\newblock Empirical models of discrete games.
\newblock {\em Journal of Econometrics}, 48:57--81.

\bibitem[Brock and Durlauf, 2001a]{Brock_Durlauf_RES01}
Brock, W. and Durlauf, S. (2001a).
\newblock Discrete choice with social interactions.
\newblock {\em Review of Economic Studies}, 68:235--261.

\bibitem[Brock and Durlauf, 2001b]{Brock_Durlauf_HBE01}
Brock, W.~A. and Durlauf, S.~N. (2001b).
\newblock {\em Handbook of Econometrics}, volume~5, chapter Interactions-based
  models, pages 3297 -- 3380.
\newblock North-Holland, Amsterdam.

\bibitem[Card and Giuliano, 2013]{Card_Giuliano_RESTAT2013}
Card, D. and Giuliano, L. (2013).
\newblock Peer effects and multiple equilibria in the risky behavior of
  friends.
\newblock {\em Review of Economics and Statistics}, 95(4):1130 -- 1149.

\bibitem[Ciliberto and Tamer, 2009]{Ciliberto_Tamer_EM2009}
Ciliberto, F. and Tamer, E. (2009).
\newblock Market structure and multiple equilibria in airline markets.
\newblock {\em Econometrica}, 77(6):1791 -- 1828.

\bibitem[Comola and Fafchamps, 2014]{Comola_Fafchamps_EJ14}
Comola, M. and Fafchamps, M. (2014).
\newblock Testing unilateral and bilateral link formation.
\newblock {\em Economic Journal}, 124(579):954 -- 976.

\bibitem[de~Paula, 2013]{dePaula_ARE13}
de~Paula, A. (2013).
\newblock Econometric analysis of games with multiple equilibria.
\newblock {\em Annual Review of Economics}, 5:107--131.

\bibitem[de~Paula et~al., 2018]{dePaula_et_al_EM18}
de~Paula, {\'A}., Richards-Shubik, S., and Tamer, E. (2018).
\newblock Identifying preferences in networks with bounded degree.
\newblock {\em Econometrica}, 86(1):263 -- 288.

\bibitem[De~Weerdt, 2004]{deWeerdt_IAP04}
De~Weerdt, J. (2004).
\newblock {\em Insurance Against Poverty}, chapter Risk-sharing and endogenous
  network formation, pages 197 -- 216.
\newblock Oxford University Press, Oxford.

\bibitem[Fern{\'a}ndez-Val and Weidner, 2016]{FernandezVal_Weidner_JOE16}
Fern{\'a}ndez-Val, I. and Weidner, M. (2016).
\newblock Individual and time effects in nonlinear panel data models with large
  $n$, $t$.
\newblock {\em Journal of Econometrics}, 192(1):291 -- 312.

\bibitem[Gaviria and Raphael, 2001]{Gaviria_Raphael_RESTAT2001}
Gaviria, A. and Raphael, S. (2001).
\newblock School-based peer effects and juvenile behavior.
\newblock {\em Review of Economics and Statistics}, 83(2):257--268.

\bibitem[Geweke, 1989]{Geweke_EM1989}
Geweke, J. (1989).
\newblock Bayesian inference in econometric models using monte carlo
  integration.
\newblock {\em Econometrica}, 57(6):1317 -- 1339.

\bibitem[Goolsbee and Klenow, 2002]{Goolsbee_Klenow_JLE2002}
Goolsbee, A. and Klenow, P.~J. (2002).
\newblock Evidence on learning and network externalities in the diffusion of
  home computers.
\newblock {\em Journal of Law and Economics}, 45(2):317 -- 343.

\bibitem[Graham, 2016]{Graham_NBER16}
Graham, B.~S. (2016).
\newblock Homophily and transitivity in dynamic network formation.
\newblock NBER Working Paper 22186, National Bureau of Economic Research.

\bibitem[Graham, 2017]{Graham_EM17}
Graham, B.~S. (2017).
\newblock An econometric model of network formation with degree heterogeneity.
\newblock {\em Econometrica}, 85(4):1033 -- 1063.

\bibitem[Graham, 2020]{Graham_HBE2020}
Graham, B.~S. (2020).
\newblock {\em Handbook of Econometrics}, volume~7, chapter Network data, pages
  111 -- 218.
\newblock North-Holland, Amsterdam, 1st edition.

\bibitem[Graham, 2022]{Graham_CEMMAP22}
Graham, B.~S. (2022).
\newblock Sparse network asymptotics for logistic regression under possible
  misspecification.
\newblock Technical report, CeMMAP.

\bibitem[Hajivassiliou et~al., 1996]{Hajivassiliou_et_al_JOE1996}
Hajivassiliou, V., McFadden, D., and Ruud, P. (1996).
\newblock Simulation of multivariate normal rectangle probabilities and their
  derivatives: theoretical and computational results.
\newblock {\em Journal of Econometrics}, 72(1):85 -- 134.

\bibitem[Hajivassiliou and Ruud, 1994]{Hajivassiliou_Ruud_HBE1994}
Hajivassiliou, V. and Ruud, P.~A. (1994).
\newblock {\em Handbook of Econometrics}, volume~4, chapter Classical
  estimation methods for LDV models using simulation, pages 2383 -- 2441.
\newblock North-Holland, Amsterdam.

\bibitem[Heckman, 1978]{Heckman_EM1978}
Heckman, J.~J. (1978).
\newblock Dummy endogenous variables in a simultaneous equation system.
\newblock {\em Econometrica}, 46(4):931 -- 959.

\bibitem[Jackson et~al., 2012]{Jackson_et_al_AER12}
Jackson, M.~O., Rodriguez-Barraquer, T., and Tan, X. (2012).
\newblock Social capital and social quilts: network patterns of favor exchange.
\newblock {\em American Economic Review}, 102(5):1857--1897.

\bibitem[Jia, 2008]{Jia_EM08}
Jia, P. (2008).
\newblock What happens when wal-mart comes to town: an empirical analysis of
  the discount retailing industry.
\newblock {\em Econometrica}, 76(6):1263 -- 1316.

\bibitem[Keane, 1994]{Keane_EM94}
Keane, M.~P. (1994).
\newblock A computationally practical simulation estimator for panel data.
\newblock {\em Econometrica}, 62(1):95 -- 116.

\bibitem[Kim et~al., 2015]{Kim_et_al_Lancet2015}
Kim, D.~A., Hwong, A.~R., Stafford, D., Hughes, D.~A., O'Malley, A.~J., Fowler,
  J.~H., and Christakis, N.~A. (2015).
\newblock Social network targeting to maximise population behaviour change: a
  cluster randomised controlled trial.
\newblock {\em Lancet}, 386(9989):145 -- 153.

\bibitem[Krauth, 2006]{Krauth_JOE2006}
Krauth, B.~V. (2006).
\newblock Simulation-based estimation of peer effects.
\newblock {\em Journal of Econometrics}, 133(1):243 -- 271.

\bibitem[Krauth, 2007]{Krauth_JBES2007}
Krauth, B.~V. (2007).
\newblock Peer and selection effects on youth smoking in california.
\newblock {\em Journal of Business and Economic Statistics}, 25(3):288 -- 298.

\bibitem[Manski, 1993]{Manski_ReStud93}
Manski, C.~F. (1993).
\newblock Identification of endogenous social effects: the reflection problem,.
\newblock {\em Review of Economic Studies}, 60(3):531 -- 542.

\bibitem[McFadden, 1974]{McFadden_FinE74}
McFadden, D. (1974).
\newblock {\em Frontiers in Econometrics}, chapter Conditional logit analysis
  of qualitative choice behavior, pages 105 -- 142.
\newblock Academic Press, New York.

\bibitem[McFadden, 1989]{McFadden_EM89}
McFadden, D. (1989).
\newblock A method of simulated moments for estimation of discrete response
  models without numerical integration.
\newblock {\em Econometrica}, 67(5):995 -- 1026.

\bibitem[Mele and Zhu, 2023]{Mele_Zhu_RESTAT2023}
Mele, A. and Zhu, L. (2023).
\newblock Approximate variational estimation for a model of network formation.
\newblock {\em Review of Economics and Statistics}, 105(1):113 -- 124.

\bibitem[Menzel, 2016]{Menzel_ReStud2016}
Menzel, K. (2016).
\newblock Inference for games with many players.
\newblock {\em Review of Economic Studies}, 83(1):306 -- 337.
\newblock NYU Working Paper.

\bibitem[Milgrom and Roberts, 1990]{Milgrom_Roberts_EM1990}
Milgrom, P. and Roberts, J. (1990).
\newblock Rationalizability, learning, and equilibrium in games with strategic
  complementarities.
\newblock {\em Econometrica}, 58(6):1255 -- 1277.

\bibitem[Miyauchi, 2016]{Miyauchi_JOE16}
Miyauchi, Y. (2016).
\newblock Structural estimation of a pairwise stable network with nonnegative
  externality.
\newblock {\em Journal of Econometrics}, 195(2):224 -- 235.

\bibitem[Molinari, 2020]{Molinari_HBE2020}
Molinari, F. (2020).
\newblock {\em Handbook of Econometrics}, volume~7, chapter Microeconometrics
  with partial identification, pages 355 -- 486.
\newblock North-Holland, Amsterdam.

\bibitem[Newey and McFadden, 1994]{Newey_McFadden_HBE94}
Newey, W.~K. and McFadden, D. (1994).
\newblock {\em Handbook of Econometrics}, volume~4, chapter Large sample
  estimation and hypothesis testing, pages 2111 -- 2245.
\newblock North-Holland, Amsterdam.

\bibitem[Nishida, 2015]{Nishida_MS2015}
Nishida, M. (2015).
\newblock Estimating a model of strategic network choice: the convenience-store
  industry in okinawa.
\newblock {\em Marketing Science}, 34(1):20 -- 38.

\bibitem[Pelican and Graham, 2020]{Pelican_Graham_NBER2020}
Pelican, A. and Graham, B.~S. (2020).
\newblock An optimal test for strategic interaction in social and economic
  network formation between heterogeneous agents.
\newblock NBER Working Paper 27793, National Bureau of Economic Research.

\bibitem[Ray and Robson, 2018]{Ray_Robson_AER2018}
Ray, D. and Robson, A. (2018).
\newblock Certified random: a new order for coauthorship.
\newblock {\em American Economic Review}, 108(2):489 -- 520.

\bibitem[Sheng, 2020]{Sheng_EM20}
Sheng, S. (2020).
\newblock A structural econometric analysis of network formation games through
  subnetworks.
\newblock {\em Econometrica}, 88(5):1829 -- 1858.

\bibitem[Soetevent and Kooreman, 2007]{Soetevent_Kooreman_JAE07}
Soetevent, A. and Kooreman, P. (2007).
\newblock A discrete choice model with social interactions: with an application
  to high school teen behavior.
\newblock {\em Journal of Applied Econometrics}, 22(3):599 -- 624.

\bibitem[Sundararajan, 2008]{Sundararajan_BE2008}
Sundararajan, A. (2008).
\newblock Local network effects and complex network structure.
\newblock {\em BE Journal of Theoretical Economics}, 7(1):1 -- 37.

\bibitem[Tamer, 2003]{Tamer_ReStud03}
Tamer, E. (2003).
\newblock Incomplete simultaneous discrete response model with multiple
  equilibria.
\newblock {\em Review of Economic Studies}, 70(1):147 -- 165.

\bibitem[Tarski, 1955]{Tarski_PJM55}
Tarski, A. (1955).
\newblock A lattice-theoretical fixpoint theorem and its applications.
\newblock {\em Pacific Journal of Mathematics}, 5(2):285 -- 309.

\bibitem[Topkis, 1998]{Topkis_Book98}
Topkis, D.~M. (1998).
\newblock {\em Supermodularity and Complementarity}.
\newblock Princeton University Press, Princeton, NJ.

\bibitem[Uetake and Watanabe, 2013]{Uetake_Watanabe_AE13}
Uetake, K. and Watanabe, Y. (2013).
\newblock Estimating supermodular games using rationalizable strategies.
\newblock {\em Advances in Econometrics}, 31(1):233 -- 247.

\bibitem[van Rijsewijk et~al., 2018]{vanRijsewijk_et_al_PLOS2018}
van Rijsewijk, L. G.~M., Oldenburg, B., Augustinus, B., Snijders, T., and
  Dijkstra, J.~K. (2018).
\newblock A description of classroom help networks, individual network
  position, and their associations with academic achievement.
\newblock {\em PLOS One}, 13(12):e0208173.

\end{thebibliography}

\newpage
\setcounter{page}{0}
\pagenumbering{arabic}
\setcounter{page}{1}
\appendix

\begin{center}
    \Large \textbf{Supplemental Web Appendix}
\end{center}

The appendix includes proofs of the theorems stated in the main text as well as some supplementary results and discussion. All notation is as established in the main text unless stated otherwise. Equation number continues in sequence with that established in the main text.

\section{\label{app: Proof-of-Theorem}Proof of Theorem \ref{thm: simulation}}

In this section we focus on the correctness of the algorithm. We elaborate in a series of lemmata the structure our algorithm relies on.  \\
\\
We order two vectors $y^1\leq y^2$ if and only if for all elements $i$ we have  $y^1_i\leq y^2_i$. With $\overline{\mathbb{R}}$ we denote the extended real line $\overline{\mathbb{R}}= \mathbb{R} \cup \{ \infty, - \infty \}$. Let $g:\{0,1\}^n \rightarrow \mathbb{R}^n$ be a supermodular, monotone increasing function and $u\in\overline{\mathbb{R}}^n$. We define the function $G_u:\{0,1\}^n \rightarrow\{0,1\}^n$ with $G_u(y)=\mathbf{1}\left(g(y)-u \geq 0 \right)$. We define the function $E:\overline{\mathbb{R}}^n \rightarrow \{0,1\}^n$ with $E(u)= (G_u)^n(0)$ where $0$  is the $n$-dimensional vector with all entries $0$ and $(G_u)^n =G_u \circ ... \circ G_u$ is the composition of $G_u$ ($n$ times). $G$ has the interpretation of a best-response function and $E$ that of an equilibrium selection (specifically it will return the minimal NE).

\begin{lemma}
$g$ is bounded.
\end{lemma}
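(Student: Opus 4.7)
The plan is essentially an observation rather than a substantive proof: the domain $\{0,1\}^n$ is a finite set of cardinality $2^n$, and $g$ takes values in $\mathbb{R}^n$. Since any function defined on a finite set has a finite image, the image $g(\{0,1\}^n)$ is a finite subset of $\mathbb{R}^n$, and every finite subset of a normed vector space is bounded.

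Concretely, I would argue as follows. Enumerate the points of the domain as $y^{(1)},\ldots,y^{(2^n)}$ and set
\[
M = \max_{k=1,\ldots,2^n} \lVert g(y^{(k)}) \rVert,
\]
where $\lVert\cdot\rVert$ is any fixed norm on $\mathbb{R}^n$ (say the Euclidean norm, or the sup-norm applied componentwise). The maximum is over a finite set, hence attained and finite. Then $\lVert g(y)\rVert \le M$ for every $y\in\{0,1\}^n$, so $g$ is bounded. If a componentwise bound is preferred, one can set $M_i^+ = \max_y g_i(y)$ and $M_i^- = \min_y g_i(y)$ for each coordinate $i=1,\ldots,n$; finiteness again follows because each maximum and minimum is over a finite collection of real numbers.

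I do not expect any obstacle: the hypotheses of supermodularity and monotonicity are not needed here and the claim is a one-line consequence of finiteness of the Boolean lattice. The lemma presumably exists only to be invoked later (for instance to justify that thresholds and equilibrium values produced by $E$ and $G_u$ lie in a bounded region of $\mathbb{R}^n$, so that truncation points in the \textsc{Scenario Sampler} and \textsc{Threshold Finder} are well defined and finite whenever the relevant $u$ components are finite).
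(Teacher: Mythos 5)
Your proof is correct and is exactly the paper's argument, which simply states that $g$ has finite domain; your explicit maximum over the $2^n$ points just spells out the same one-line observation. No differences worth noting.
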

\begin{proof}
$g$ has finite domain.
\end{proof}
\begin{lemma}
$E(u)$ is the unique minimal fixed point of $G_u$.
\end{lemma}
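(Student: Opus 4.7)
The approach is to apply Tarski's fixed point theorem on the finite complete lattice $(\{0,1\}^n,\leq)$, exploiting that $G_u$ is monotone. First I would verify monotonicity: since $g$ is weakly increasing in $y$, the mapping $y \mapsto g(y)-u$ is weakly increasing, and componentwise thresholding by $0$ preserves monotonicity, so $G_u$ is monotone increasing as a map on the lattice $\{0,1\}^n$.

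Next I would show $E(u)$ is a fixed point. Starting from the bottom element $\underline 0$, we have $\underline 0 \leq G_u(\underline 0)$ trivially. Applying $G_u$ repeatedly and using monotonicity yields the weakly increasing chain
\begin{equation*}
\underline 0 \;\leq\; G_u(\underline 0) \;\leq\; G_u^2(\underline 0) \;\leq\; \cdots \;\leq\; G_u^n(\underline 0).
\end{equation*}
Because the longest strictly increasing chain in $\{0,1\}^n$ has length $n+1$ (one coordinate must flip from $0$ to $1$ at each strict step), the chain must stabilize by iterate $n$, i.e.\ $G_u(G_u^n(\underline 0)) = G_u^n(\underline 0)$. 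Hence $E(u) = G_u^n(\underline 0)$ is a fixed point.

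Then I would establish minimality: let $y^*$ be any fixed point of $G_u$. Since $\underline 0 \leq y^*$, monotonicity gives $G_u(\underline 0)\leq G_u(y^*) = y^*$; iterating, $G_u^k(\underline 0) \leq y^*$ for every $k$, and in particular for $k=n$, so $E(u) \leq y^*$. Uniqueness of the minimal fixed point then follows from antisymmetry of $\leq$ on $\{0,1\}^n$.

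\textbf{Main obstacle.} There is no real obstacle here: the result is essentially a finite-lattice instance of Tarski/Kleene, and the only point needing care is the choice of iteration depth $n$, which I justify via the length of the longest chain in $\{0,1\}^n$. Supermodularity of $g$ is not actually needed for this lemma (only monotonicity is), but since supermodularity is part of the standing structure it does no harm to retain it; it will play its genuine role downstream in guaranteeing existence of extremal pure-strategy equilibria via Milgrom--Roberts.
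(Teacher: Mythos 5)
Your proof is correct and follows essentially the same route as the paper's (Tarski-style iteration from $\underline 0$, stabilization via the bound on strictly increasing chains in $\{0,1\}^n$, and minimality by monotone comparison with an arbitrary fixed point). Your remark that supermodularity is not needed is also right — the paper attributes uniqueness to supermodularity, but your argument via antisymmetry of $\leq$ applied to the least fixed point is cleaner and shows monotonicity alone suffices here.
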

\begin{proof}
This follows form Tarski's fixed point theorem and the fact that the longest strictly increasing path is smaller then $n$. The uniqueness follows from supermodularity. 
\end{proof}
\begin{lemma}
$E$ is monotonically decreasing in $u$.
\end{lemma}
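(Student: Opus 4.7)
The plan is to fix arbitrary $u^1 \leq u^2$ and show $E(u^1) \geq E(u^2)$ coordinate-wise by induction on the iterates of the best-response map starting from $\mathbf{0}$. The proof rests on disentangling two separate monotonicities of $G_u(y)$, which I would state as a short preliminary observation: (a) for fixed $y$, the map $u \mapsto G_u(y) = \mathbf{1}(g(y) - u \geq 0)$ is (coordinate-wise) monotone \emph{decreasing} in $u$, simply because raising $u$ can only turn $1$'s into $0$'s; and (b) for fixed $u$, the map $y \mapsto G_u(y)$ is monotone \emph{increasing} in $y$, since $g$ is assumed monotone increasing and the indicator is monotone in its argument.

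With (a) and (b) in hand, the inductive claim is that $(G_{u^1})^k(\mathbf{0}) \geq (G_{u^2})^k(\mathbf{0})$ for every $k \geq 0$. The base case $k=0$ is trivial. For the inductive step I would chain the two monotonicities:
\begin{align*}
(G_{u^1})^{k+1}(\mathbf{0}) \;=\; G_{u^1}\bigl((G_{u^1})^k(\mathbf{0})\bigr)
\;\geq\; G_{u^1}\bigl((G_{u^2})^k(\mathbf{0})\bigr)
\;\geq\; G_{u^2}\bigl((G_{u^2})^k(\mathbf{0})\bigr)
\;=\; (G_{u^2})^{k+1}(\mathbf{0}),
\end{align*}
where the first inequality uses (b) applied to $G_{u^1}$ together with the inductive hypothesis, and the second uses (a) at the fixed point $y = (G_{u^2})^k(\mathbf{0})$. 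Setting $k = n$ yields $E(u^1) \geq E(u^2)$, which is the desired conclusion.

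The only mildly delicate aspect is that (a) and (b) are monotonicities in \emph{different} arguments, so one must be careful to apply them at the correct iterate and not conflate them; this is the step I would write out most explicitly. Everything else (boundedness of $g$, uniqueness of the minimal fixed point as guaranteed by the preceding lemma, the fact that the composition stabilizes in at most $n$ steps) is already in hand, so no additional machinery is needed. I do not anticipate a genuine obstacle; the result is essentially a two-line consequence of the earlier lemmata once the two monotonicities are separated cleanly.
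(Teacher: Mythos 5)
Your proof is correct and is essentially the paper's own argument: the paper also inducts on the iterates $(G_u)^k(\mathbf{0})$ and chains the same two monotonicities (in $u$ for fixed $y$, and in $y$ for fixed $u$), merely applying them in the opposite order within the chain. Your version is written out more explicitly, but there is no substantive difference.
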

\begin{proof}
let $u^{*} \leq u^{**}$ then we have $G_{u^{*}}(0) \geq G_{u^{**}}(0)$. Because of the monotonicity of $G$ we have $G_{u^{*}}(G_{u^{*}}(0)) \geq G_{u^{**}}(G_{u^{*}}(0)) \geq G_{u^{**}}(G_{u^{**}}(0))$. Inductively we conclude $E( u^{*})\geq  E(u^{**})$.
\end{proof}
Next we study how the $E$ function varies as only one entry of $u$ changes. We first study $G$.
\begin{lemma}
If $u^*_j=u^{**}_j$ for $j \neq i$ then   $ G_{u^{**}}(y)_j=G_{u^{*}}(y)_j$ .
\end{lemma}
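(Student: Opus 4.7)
The plan is to unpack the definition of $G_u$ component by component and observe that the $j$-th coordinate of $G_u(y)$ depends on $u$ only through its $j$-th entry, not on any other coordinate. Specifically, by the definition $G_u(y)=\mathbf{1}\!\left(g(y)-u\geq 0\right)$, which is interpreted elementwise, we have
\[
G_u(y)_j \;=\; \mathbf{1}\!\left(g(y)_j - u_j \;\geq\; 0\right).
\]
Thus the $j$-th component is a function solely of $g(y)_j$ and the scalar $u_j$; the remaining entries $u_k$ for $k\neq j$ play no role.

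Now I apply the hypothesis. We are told $u^{*}_j=u^{**}_j$ for every $j\neq i$. For any such $j$, substitute into the expression above to obtain
\[
G_{u^{*}}(y)_j \;=\; \mathbf{1}\!\left(g(y)_j - u^{*}_j \geq 0\right) \;=\; \mathbf{1}\!\left(g(y)_j - u^{**}_j \geq 0\right) \;=\; G_{u^{**}}(y)_j,
\]
which is precisely the claim. There is essentially no obstacle here: the content of the lemma is just that, because $G_u$ is defined coordinatewise from the components of $g(y)-u$, its coordinates decouple across the entries of $u$. The lemma is recorded only to be invoked in subsequent arguments that analyze how $E(u)$ responds to changing a single coordinate of $u$ (monotone perturbations of one agent's shock at a time), where this coordinate-decoupling of the best-response operator is the essential bookkeeping fact.
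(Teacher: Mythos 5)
Your proof is correct and is essentially identical to the paper's: both simply unpack the coordinatewise definition $G_u(y)_j=\mathbf{1}\left(g(y)_j-u_j\geq 0\right)$ and substitute $u^{*}_j=u^{**}_j$ for $j\neq i$. Nothing further is needed.
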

\begin{proof}
For $j\neq i$ we have $ G_{u^{**}}(y)_j= \mathbf{1} \left( g(y)_j-u^{**}_j \geq  0  \right) = \mathbf{1}\left(g(y)_j-u^{*}_j \geq 0 \right)=G_{u^{*}}(y)_j$.  
\end{proof}
\begin{corollary} \label{col:SfunctionsEquality}
If $u^*_j=u^{**}_j$ for $j \neq i$ and $G_{u^{**}}(y)_i=G_{u^{*}}(y)_i$ then $G_{u^{**}}(y)=G_{u^{*}}(y)$.
\end{corollary}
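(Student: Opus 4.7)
The statement to prove is a direct corollary of the preceding lemma. The plan is to verify equality of the vectors $G_{u^{**}}(y)$ and $G_{u^{*}}(y)$ coordinate-by-coordinate, splitting into the cases $j \neq i$ and $j = i$. For $j \neq i$, the preceding lemma applies directly: since $u^*$ and $u^{**}$ agree on all coordinates other than the $i$-th, we have $G_{u^{**}}(y)_j = G_{u^{*}}(y)_j$. For the remaining coordinate $j = i$, equality is given by the second hypothesis of the corollary. Since the two vectors agree in every component, they are equal.

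There is no real obstacle here — the statement is essentially a bookkeeping observation combining the previous lemma (which covers all coordinates except $i$) with the added hypothesis (which covers coordinate $i$). The only care needed is to state clearly that two elements of $\{0,1\}^n$ are equal if and only if all their coordinates agree, and then to quote the two facts in the correct places. I would write the proof as a single short display covering both cases, with no additional computation required.
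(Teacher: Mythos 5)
Your proof is correct and matches the paper's intended argument: the paper states this as an immediate corollary of the preceding lemma (which handles all coordinates $j \neq i$) combined with the added hypothesis for coordinate $i$, exactly as you do. Nothing further is needed.
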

Let $e_u^i: \overline{\mathbb{R}} \rightarrow \{0,1\}^n$ with $e_u^i(x)=E(u^{'})$ where $u'_i=x$ and $u'_j=u_j$ for $j\neq i$. In words: $e_u^i(x)$ returns the minimal equilibrium given taste shock vector $u$ when $u_i$ in this vector is replaced with $x$. The next Lemma shows that for any given set of taste shocks for all actions $j \neq i$ -- $\left\{u_j\right\}_{j \neq i}$ -- we can find a threshold taste shock level $t$, such that for $u_i$ below (resp. above) this threshold it will be an equilibrium to take action $i$ (resp. not take action $i$).
\begin{lemma}\label{lemma:picewiseConstant}
There is a threshold $t$ such that $ e_u^i(x)_i= 1$ for $x \leq t$ and $ e_u^i(x)_i= 0$ for $x> t$.  Before and after the threshold $e_u^i(x)$ is constant. We denote the outcome before the threshold with  $ y^{*}=e_u^i(x)$ for $x \leq t$ and the outcome after the threshold with $ y^{**}=e_u^i(x)$ for $x> t$.  
\end{lemma}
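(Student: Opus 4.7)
The plan is to combine three ingredients already available: (i) the monotonicity of $E$ in $u$, (ii) the boundedness of $g$, and (iii) Corollary~\ref{col:SfunctionsEquality}, which lets us transport fixed points of $G_{u^*}$ across changes in a single coordinate of the taste shock, provided the resulting action in that coordinate is unchanged.

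First I would note that, holding $u_j$ fixed for $j\neq i$ and varying only $x=u_i$, monotonicity of $E$ gives that $e_u^i$ is componentwise non-increasing in $x$. Since $g$ is bounded with image in a finite subset of $\mathbb{R}$, any $x<\min_y g(y)_i$ forces $G_{u,x}(y)_i=1$ for every $y$, so the $i$-th coordinate of every fixed point equals $1$; similarly any $x>\max_y g(y)_i$ forces $e_u^i(x)_i=0$. Thus $T_1:=\{x\in\overline{\mathbb{R}}:e_u^i(x)_i=1\}$ is nonempty, bounded above, and downward closed (the latter by monotonicity), so I can define $t:=\sup T_1\in\mathbb{R}$.

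The heart of the argument is the constancy claim on each side of $t$. Take $x_1<x_2\leq t$ and set $y^1:=e_u^i(x_1)$, $y^2:=e_u^i(x_2)$, so both have $i$-th coordinate equal to $1$ and $y^1\geq y^2$ by monotonicity. The key observation is that $y^2$ is already a fixed point at the smaller value $x_1$: for $j\neq i$ the $j$-th coordinate of $G_{u,x_1}$ and $G_{u,x_2}$ agree on any input (the preceding lemma), while in coordinate $i$ we have $g(y^2)_i\geq x_2>x_1$, so $G_{u,x_1}(y^2)_i=1=G_{u,x_2}(y^2)_i$, and Corollary~\ref{col:SfunctionsEquality} then upgrades this to $G_{u,x_1}(y^2)=G_{u,x_2}(y^2)=y^2$. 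Minimality of $y^1$ among fixed points of $G_{u,x_1}$ gives $y^1\leq y^2$, so $y^1=y^2$, call the common value $y^*$. An entirely symmetric argument handles $t<x_1<x_2$: now both $y^1_i=y^2_i=0$ forces $g(y^1)_i<x_1<x_2$, so $y^1$ is a fixed point at $x_2$, giving $y^2\leq y^1$ and hence equality; call that common value $y^{**}$.

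The remaining loose end, which I expect to be the main obstacle, is showing the threshold itself is included on the ``$=1$" side, i.e.\ that $e_u^i(t)_i=1$. Here I would use the $\geq$ convention in the indicator defining $G_u$. Pick any sequence $x_n\nearrow t$ with $x_n\in T_1$; by the constancy step just proved, $e_u^i(x_n)=y^*$ for all $n$, so $g(y^*)_i\geq x_n$ for all $n$ and hence $g(y^*)_i\geq t$. Therefore $G_{u,t}(y^*)_i=1=y^*_i$, and Corollary~\ref{col:SfunctionsEquality} again yields $G_{u,t}(y^*)=y^*$, making $y^*$ a fixed point of $G_{u,t}$. Thus $e_u^i(t)\leq y^*$ by minimality; combined with the monotonicity bound $e_u^i(t)\leq e_u^i(x_n)=y^*$ from above, we must have $e_u^i(t)=y^*$, so in particular $e_u^i(t)_i=1$ and the supremum is attained. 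Collecting the three steps gives the piecewise-constant description with a single switch at $t$, the values being $y^*$ on $(-\infty,t]$ and $y^{**}$ on $(t,\infty)$.
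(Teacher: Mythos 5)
Your overall strategy (fixed-point transport plus minimality of $E$) is sound and, on the $x\leq t$ side, gives a genuinely cleaner argument than the paper's, which instead tracks the iterates $(G_{u'})^k(0)$ and $(G_{u''})^k(0)$ step by step. However, two of your three steps contain the same non sequitur: you derive two inequalities pointing in the \emph{same} direction and conclude equality. For $t<x_1<x_2$ you show $y^1$ is a fixed point of $G_{u,x_2}$ and invoke minimality of $y^2$ to get $y^2\leq y^1$ --- but that is exactly what monotonicity of $E$ already gave you, so ``hence equality'' does not follow. The missing direction is $y^1\leq y^2$, and it requires one extra observation: since $y^2\leq y^1$ and $g$ is monotone increasing, $g(y^2)_i\leq g(y^1)_i<x_1$, so it is $y^2$ that is a fixed point of $G_{u,x_1}$ (via Corollary \ref{col:SfunctionsEquality}), and minimality of $y^1$ then yields $y^1\leq y^2$. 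Note the naive ``symmetric'' transport fails here precisely because $y^2_i=0$ only gives $g(y^2)_i<x_2$, not $g(y^2)_i<x_1$; the detour through $g$'s monotonicity is what closes it.

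The boundary step has the same defect: you establish $e_u^i(t)\leq y^*$ twice (once by minimality after showing $y^*$ is a fixed point of $G_{u,t}$, once by monotonicity of $E$), and an upper bound by $y^*$ cannot force $e_u^i(t)_i=1$. To show the supremum is attained, argue by contradiction: if $z:=e_u^i(t)$ had $z_i=0$, the fixed-point condition gives $g(z)_i<t$, so for any $x$ with $g(z)_i<x<t$ the vector $z$ is still a fixed point of $G_{u,x}$ (coordinates $j\neq i$ are unaffected and $\mathbf{1}(g(z)_i\geq x)=0=z_i$), whence $e_u^i(x)\leq z$ and in particular $e_u^i(x)_i=0$, contradicting $x\in T_1$ (which holds for every $x<t$ by downward closedness). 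With these two repairs your proof is complete; the paper itself sidesteps the boundary issue by \emph{defining} the threshold as $t=g(e_u^i(\infty))_i$ in Lemma \ref{lemma:treshold} rather than as a supremum.
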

\begin{proof}
The existence for a threshold follows directly from the monotonicity of $E$ and the shape of $G$. 
Let $x', x'' > t$ and $u', u''$ be the corresponding $u$'s when the $i^{th}$ coordinate is substituted with $x', x''$. 
We have $E(u') = (G_{u'})^n(0)= (G_{u''})^n(0)=E(u'')$. The second equality follows from corollary \ref{col:SfunctionsEquality}. This proves uniqueness of $y^{**}$.\\
The proof of uniqueness of $y^{*}$ requires a bit more work. Let $x', x'' \leq t$ with $x' <x''$ and $u' < u''$ be the corresponding $u$'s when the $i^{th}$ coordinate is substituted with $x' <x''$.  Because $E$ is monotone decreasing we have $E(u' )\geq E(u'')$. \\
Let $k$  be the first integer such that  $(G_{u'})^k(0)_i=1$ and $l$ the first integer such that  $(G_{u''})^l(0)_i=1$. By monotonicity we have $k \leq l$.  By corollary \ref{col:SfunctionsEquality}  $(G_{u'})^{k-1}(0) = (G_{u''})^{k-1}(0)$ and by monotonicity of $G$ we have $(G_{u'})^{k-1}(0) \leq  (G_{u''})^{l-1}(0)$. Now keep in mind that $u'$ and $u''$ only differ in the dimension $i$, therefore we have $(G_{u'})^{k}(0) \leq (G_{u''})^{l}(0)$. Again by monotonicity of we have $E(u'' )\geq E(u')$.
\end{proof}
Next we describe the threshold, specifically how it can be calculated. Our construction is related to the thought experiment described in the main text. First, perturb the $i^{th}$ preference shock to be so large such that it is strictly dominant not to take action $i$. Second, consider the form of the new resulting (minimal) equilibrium. The threshold is precisely the $i^{th}$ element of $g$ at this ``counterfactual" equilibrium.
\begin{lemma} \label{lemma:treshold}
The threshold of $e_u^i$ is $t= g(e^i_u(\infty))_i$.
\end{lemma}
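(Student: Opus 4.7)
The plan is to show both inequalities characterizing the threshold directly using the monotonicity properties already established (Lemmas 3--4 and the preceding structural lemmas). Write $y^{**} := e^i_u(\infty)$ (the minimal equilibrium when player $i$'s shock is pushed to $+\infty$, so that $y^{**}_i = 0$ is forced), and set $t^* := g(y^{**})_i$. The goal is to show that $t^*$ is precisely the threshold $t$ whose existence was established in Lemma \ref{lemma:picewiseConstant}.

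First I would handle the easy direction, $x > t^*$. Let $u''$ agree with $u$ except in coordinate $i$, where $u''_i = x$. I claim $y^{**}$ is already a fixed point of $G_{u''}$: for $j \neq i$, Corollary \ref{col:SfunctionsEquality} gives $G_{u''}(y^{**})_j = G_u(y^{**})_j = y^{**}_j$; for $j = i$, we get $G_{u''}(y^{**})_i = \mathbf{1}(g(y^{**})_i - x \geq 0) = \mathbf{1}(t^* - x \geq 0) = 0 = y^{**}_i$ because $x > t^*$. Combined with monotonicity of $E$ (since $u'' \leq u$ gives $E(u'') \geq E(u) = y^{**}$) and the fact that $E(u'')$ is the \emph{minimal} fixed point (so $E(u'') \leq y^{**}$), we conclude $e^i_u(x) = E(u'') = y^{**}$, and in particular $e^i_u(x)_i = 0$.

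Next I would do the case $x \leq t^*$, which is slightly subtler. Let $u'$ agree with $u$ except $u'_i = x$, and set $y' := E(u') = e^i_u(x)$. By monotonicity of $E$ (Lemma 3) applied to $u' \leq u$, we have $y' \geq y^{**}$. Since $y'$ is a fixed point of $G_{u'}$, its $i$-th coordinate satisfies $y'_i = \mathbf{1}(g(y')_i - x \geq 0)$. Monotonicity of $g$ combined with $y' \geq y^{**}$ yields $g(y')_i \geq g(y^{**})_i = t^* \geq x$, so the indicator is $1$, i.e., $y'_i = 1$. This completes the characterization and gives $t = t^*$.

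The main obstacle I anticipate is the second case: it is tempting to argue directly that $\tilde y$ (defined as $y^{**}$ with the $i$-th coordinate flipped to $1$) is an equilibrium for $u'$, but that is generally false because flipping $i$ can trigger further coordinates of $y$ to flip up. The clean workaround is the one above, which avoids constructing an explicit equilibrium and instead exploits that $E(u')$ is itself a fixed point together with the already-established monotonicity bound $E(u') \geq y^{**}$. Everything else is a routine bookkeeping application of Lemmas 1--4 and Corollary \ref{col:SfunctionsEquality}.
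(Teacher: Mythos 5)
Your proof is correct in substance and arrives at the same characterization, but the mechanics differ from the paper's in the above-threshold direction. The paper handles $x>t$ by tracking the Tarski iteration $(G_{u'})^k(0)$ and using Corollary \ref{col:SfunctionsEquality} to show it coincides step by step with the iteration for the vector whose $i$-th coordinate is $\infty$; you instead verify directly that $y^{**}=e^i_u(\infty)$ is a fixed point of $G_{u''}$ and then invoke minimality of $E(u'')$ together with monotonicity of $E$. Your route avoids the iteration bookkeeping and also treats the boundary point cleanly, whereas the paper argues with perturbations $x=t\pm\epsilon$. In the below-threshold direction you make explicit a step the paper leaves implicit: the paper jumps from $G_{u'}(y^{**})_i=1$ to $e^i_u(x)_i=1$, while you derive it from the fixed-point equation $y'_i=\mathbf{1}\left(g(y')_i-x\geq 0\right)$ together with $y'\geq y^{**}$ and monotonicity of $g$; this is the right way to close that gap and is essentially the same idea the paper relies on. One notational slip should be fixed: in both directions you write the monotonicity comparison as ``$u''\leq u$'' (resp.\ ``$u'\leq u$'') and assert ``$E(u)=y^{**}$,'' but $y^{**}$ is by definition $E(\tilde u)$ for the vector $\tilde u$ that agrees with $u$ off coordinate $i$ and has $\tilde u_i=\infty$. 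The correct comparisons are $u''\leq\tilde u$ and $u'\leq\tilde u$, which hold trivially and yield $E(u'')\geq y^{**}$ and $E(u')\geq y^{**}$ from the monotonicity lemma exactly as you intend; with that substitution the argument is complete.
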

\begin{proof}
Let $x = t-\epsilon$ and let $u' \in \overline{\mathbb{R}}^n$ with $u'_i=x$ and $u'_j=u_j$ for $j\neq i$. We have $G_{u'}(e^i_u(\infty)) = \mathbf{1}\left( g(e^i_u(\infty))-u^{'} \geq 0 \right)$. Looking at the $i$th component we have $e^i_u(\infty))_i=0$ and $G_{u'}(e^i_u(\infty))_i = \mathbf{1}\left( g(e^i_u(\infty))_i-u^{'}_i \geq 0 \right)=  \mathbf{1}\left( \epsilon \geq 0 \right)=1$. Therefore $e^i_u(x)_i = 1$ and thus $e^i_u(x)=y^{**}$. On the other hand, let $x = t+\epsilon$ and let $u' \in\overline{\mathbb{R}}^n$ with $u'_i=x$ and $u'_j=u_j$ for $j\neq i$. We have $G_{u'}(e^i_u(\infty)) = \mathbf{1}\left( g(e^i_u(\infty))-u^{'} \geq 0 \right)$. Looking at the $i$th component we have $e^i_u(\infty))_i=0$ and $G_{u'}(e^i_u(\infty))_i = \mathbf{1}\left( g(e^i_u(\infty))_i-u^{'}_i \geq 0 \right)= \mathbf{1}\left( - \epsilon \geq 0 \right)=0$. Now let $u'' \in\overline{\mathbb{R}}^n$ with $u''_i=\infty$ and $u''_j=u_j$ for $j\neq i$. Note that $ (G_{u''})^{k}(0) \leq (G_{u''})^n(0)$ for $k\leq n$.   By Corollary \ref{col:SfunctionsEquality} we have $y^{*} = e^i_u(\infty)= E(u'') = (G_{u''})^n(0)) = (G_{u'})^n(0))= E(u')=e^i_u(x)$. 
\end{proof}

With these preliminary results in hand, we can move on to show correctness of Algorithm \ref{alg: scenario-sampler}.

\begin{lemma}
Algorithm \ref{alg: threshold-finder-peer} (resp. \ref{alg: threshold-finder})  \textsc{Threshold Finder} with input $\mathbf{u}$ returns $t= g(e^i_{\mathbf{u}}(\infty))_i$.
\end{lemma}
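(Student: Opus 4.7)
The plan is to read Algorithm \ref{alg: threshold-finder-peer} as a direct algorithmic realization of the formula in Lemma \ref{lemma:treshold}. I would identify the input vector $\mathbf{u}$ with the working vector $\tilde{\mathbf{U}}$ assembled in Step 1. Step 2 then replaces the $t$-th coordinate of this vector by the large finite value $g_t(\mathbf{y})+1$, producing a perturbed vector $\tilde{\mathbf{u}}$, and Step 3 computes the minimal NE $\tilde{\mathbf{Y}}$ at $\tilde{\mathbf{u}}$ and returns $h_t = g_t(\tilde{\mathbf{Y}})$. Since Lemma \ref{lemma:treshold} asserts that the threshold of $e^t_{\mathbf{u}}$ equals $g_t(e^t_{\mathbf{u}}(\infty))$, the task reduces to establishing the single identity $\tilde{\mathbf{Y}} = e^t_{\mathbf{u}}(\infty)$.

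The key step is to show that the finite upward perturbation of Step 2 is operationally equivalent to sending coordinate $t$ to $\infty$. I would verify this by showing that $\mathbf{y}$ is a super-solution for $G_{\tilde{\mathbf{u}}}$, i.e.\ $G_{\tilde{\mathbf{u}}}(\mathbf{y}) \le \mathbf{y}$. For each $t' \neq t$ with $y_{t'}=0$, the outer Scenario Sampler draws $U_{t'}$ from a distribution truncated above $g_{t'}(\mathbf{y})$, so $G_{\tilde{\mathbf{u}}}(\mathbf{y})_{t'} = 0 = y_{t'}$; for coordinate $t$ the perturbation yields $g_t(\mathbf{y}) - \tilde{\mathbf{u}}_t = -1 < 0$, giving $G_{\tilde{\mathbf{u}}}(\mathbf{y})_t = 0 \leq y_t$; and for each $t' \neq t$ with $y_{t'}=1$ the bound $G_{\tilde{\mathbf{u}}}(\mathbf{y})_{t'} \leq 1 = y_{t'}$ is immediate.

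Monotonicity of $G_{\tilde{\mathbf{u}}}$ then forces every Tarski iterate started at $0$ to remain weakly below $\mathbf{y}$, so in particular $\tilde{\mathbf{Y}} \leq \mathbf{y}$. By monotonicity of $g$ this gives $g_t(\tilde{\mathbf{Y}}) \leq g_t(\mathbf{y}) < \tilde{\mathbf{u}}_t$, so the $t$-th coordinate of the best response is $0$ throughout the iteration, rendering the run indistinguishable from the one obtained by sending $\mathbf{u}_t$ to $\infty$. Thus $\tilde{\mathbf{Y}} = e^t_{\mathbf{u}}(\infty)$ and $h_t = g_t(e^t_{\mathbf{u}}(\infty))$, which is the threshold by Lemma \ref{lemma:treshold}.

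The main obstacle is handling the provisional entries assigned in Step 1(b)(ii) to coordinates $t' > t$ with $y_{t'}=1$: these values are set low enough to push those players to act, and one must check that this choice does not break the super-solution property of $\mathbf{y}$. It does not, because the super-solution check only constrains the $y_{t'}=0$ coordinates of $\tilde{\mathbf{u}}$, and those are fixed by the Scenario Sampler in the required way. The general case (Algorithm \ref{alg: threshold-finder}) follows the same script verbatim after replacing the index $t$ with the agent-action pair $(t,m)$ and $g_t$ with $g_{tm}$.
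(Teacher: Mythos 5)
Your proposal is correct, and it does substantially more than the paper's own proof, which consists of a single line (define $g$ for each case and note it is supermodular and monotone increasing), leaving the reduction to Lemma \ref{lemma:treshold} entirely implicit. Both arguments ultimately rest on that lemma's formula $t=g\left(e^i_u(\infty)\right)_i$, but you supply the one step that actually requires justification and that the paper skips: why the \emph{finite} perturbation $\tilde{u}_t=g_t(\mathbf{y})+1$ of Step 2 is interchangeable with sending $u_t$ to $+\infty$. This is not automatic, since $g_t(\mathbf{y})+1$ need not dominate $g_t(\mathbf{y}')$ for plays $\mathbf{y}'\not\preceq\mathbf{y}$; your super-solution check $G_{\tilde{\mathbf{u}}}(\mathbf{y})\leq\mathbf{y}$ — which correctly uses that the Scenario Sampler has already placed every $y_{t'}=0$ shock strictly above $g_{t'}(\mathbf{y})$ — traps the entire Tarski iteration below $\mathbf{y}$, so that coordinate $t$ of the best response is identically zero and the two runs coincide (this is where the paper's Corollary \ref{col:SfunctionsEquality} would be invoked). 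One small loose end: you identify the lemma's input $\mathbf{u}$ with the Step-1 working vector $\tilde{\mathbf{U}}$, whereas the input still carries $-\infty$ in the not-yet-drawn $y_{t'}=1$ coordinates and Step 1(b)(ii) replaces these by a finite value below $g_{t'}(\underline{0})$; you should add the one-line observation that any such value induces the same best-response map as $-\infty$, so $e^t_{\cdot}(\infty)$ is unchanged by the substitution. With that remark added, your argument is a complete and more careful proof than the one in the paper.
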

\begin{proof}
We define $g$ as $g(\mathbf{y}) = x_{t}'\beta+\mathbf{G}_{t}\mathbf{y}\delta-u_{t}$ (resp.  $g(\mathbf{y}) = x_{tm}'\beta_{m}+s_{m}\left(\mathbf{y}_{t,-m},\mathbf{y}_{-t}\right)'\delta_{m}$). Note that $g$ is in both cases supermodular and monotone increasing. 
\end{proof}

We can now prove the first claim of Theorem \ref{thm: simulation}.

\begin{theorem} \label{thm:firstClaim}
Algorithm \ref{alg: scenario-sampler-peer} (resp. \ref{alg: scenario-sampler})  \textsc{Scenario Sampler} with input $\mathbf{y}$ returns a shock $\mathbf{u}$ such that $\mathbf{y}$ is the minimal NE at $\mathbf{u}$.
\end{theorem}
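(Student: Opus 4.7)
The plan is to construct a sequence of auxiliary shock vectors that interpolates between an easily-understood anchor configuration and the final output $\mathbf{U}$ of the Sampler, and to propagate the minimal-NE invariant along this sequence using the structural lemmas already established.

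Fix notation as follows. Let $\mathbf{U}^{(0)}$ be the vector whose $y_{tm}=0$ entries equal the values drawn in Step 2 of Algorithm \ref{alg: scenario-sampler} (which by construction satisfy $U_{tm}>g_{tm}(\mathbf{y},x_{tm})$) and whose $y_{tm}=1$ entries are placed at a strictly dominant-yes level, below $g_{tm}(\mathbf{0},x_{tm})$. Enumerate the $y_{tm}=1$ coordinates in the order $c_1,\ldots,c_K$ in which Step 3 processes them, and let $\mathbf{U}^{(k)}$ agree with $\mathbf{U}^{(0)}$ except on $c_1,\ldots,c_k$, where it carries the actual values drawn by the Sampler. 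Then $\mathbf{U}^{(K)}=\mathbf{U}$. At $\mathbf{U}^{(0)}$, every $y_{tm}=1$ agent has a strictly dominant strategy to take action while every $y_{tm}=0$ agent prefers inaction given $\mathbf{y}_{-tm}$, so $\mathbf{y}$ is the unique, hence minimal, NE.

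The inductive step passes from $\mathbf{U}^{(k)}$ to $\mathbf{U}^{(k+1)}$ by moving coordinate $c_{k+1}=(tm)$ from its dominant-yes level to the drawn value $U_{tm}\leq h_{tm}$. Lemma \ref{lemma:picewiseConstant} tells me that the minimal NE, viewed as a function of $U_{tm}$ with all other entries of $\mathbf{U}^{(k)}$ held fixed, is constant on $(-\infty,t^{*}]$, and Lemma \ref{lemma:treshold} identifies the threshold as $t^{*}=g_{tm}(e^{tm}_{\mathbf{U}^{(k)}}(\infty),x_{tm})$. I need to verify that $t^{*}$ equals the $h_{tm}$ returned by the Threshold Finder. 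The Finder constructs a vector $\tilde{\mathbf{U}}$ that agrees with $\mathbf{U}^{(k)}$ on $y_{t'm'}=0$ and on already-processed $y_{t'm'}=1$ coordinates, places the remaining $y_{t'm'}=1$ coordinates at a (different but still dominant-yes) level, and sets coordinate $tm$ to a dominant-no value; it then returns $h_{tm}=g_{tm}(E(\tilde{\mathbf{U}}),x_{tm})$. Since any shock in the dominant-yes region forces the corresponding coordinate to $1$ at every iterate of the best-response map $G$ started from $\mathbf{0}$, substituting one dominant-yes value for another leaves every Tarski iterate, and hence $E$, unchanged; therefore $E(\tilde{\mathbf{U}})=e^{tm}_{\mathbf{U}^{(k)}}(\infty)$ and $h_{tm}=t^{*}$. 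Because $U_{tm}\leq h_{tm}$, the draw lies in the constancy interval and the minimal NE at $\mathbf{U}^{(k+1)}$ equals that at $\mathbf{U}^{(k)}$, which by the inductive hypothesis is $\mathbf{y}$.

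The main obstacle --- essentially the only nontrivial piece --- is the robustness-of-dominance claim $E(\tilde{\mathbf{U}})=e^{tm}_{\mathbf{U}^{(k)}}(\infty)$. The cleanest route is via Corollary \ref{col:SfunctionsEquality}: the best-response map $G$ produces identical outputs on any coordinate whose shock sits in the dominant-yes region, regardless of which specific value in that region is used, so iterating $G$ from $\mathbf{0}$ yields identical trajectories for $\tilde{\mathbf{U}}$ and for $\mathbf{U}^{(k)}$ with $u_{tm}$ replaced by $\infty$. Chaining the induction from $\mathbf{U}^{(0)}$ to $\mathbf{U}^{(K)}=\mathbf{U}$ then yields that $\mathbf{y}$ is the minimal NE at the final output $\mathbf{U}$, as claimed.
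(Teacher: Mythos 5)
Your proposal is correct and follows essentially the same route as the paper's proof: an induction along the sequence of intermediate shock vectors, using Lemma \ref{lemma:picewiseConstant} and Lemma \ref{lemma:treshold} to show that replacing each coordinate's placeholder by the drawn value (which lies on the correct side of the threshold) leaves the minimal NE unchanged, with your only real variation being the base point (Step-2 draws already in place plus dominant-yes placeholders, rather than the paper's all-$\pm\infty$ initialization). Two small imprecisions worth fixing: at $\mathbf{U}^{(0)}$ the NE need not be \emph{unique} (only minimality is needed, and it follows from one round of Tarski iteration from $\mathbf{0}$), and $g_{tm}(\mathbf{y},x_{tm})+1$ is not literally a dominant-no value --- it is merely above the threshold $g_{tm}(e^{tm}_{\mathbf{U}^{(k)}}(\infty),x_{tm})$, which suffices precisely by the argument in the proof of Lemma \ref{lemma:treshold} together with $E(\tilde{\mathbf{U}})\preceq\mathbf{y}$.
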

\begin{proof}
Let $n=NT$ (resp. $n=NMT$) The algorithm produces shocks $\mathbf{u}^{(0)}, \mathbf{u}^{(1)}, \mathbf{u}^{(2)},..,\mathbf{u}^{(n)}$ in step 2, shocks $\mathbf{u}^{(n+1)}, \mathbf{u}^{(n+2)},..,\mathbf{u}^{(2n)}$ in step 3 and returns $\mathbf{u}=\mathbf{u}^{(2n)}$. The claim is $E(\mathbf{u})=\mathbf{y}$. We prove this by induction. \\
\underline{Induction start}: $\mathbf{u}^{(0)}$ has only $\infty$ or $-\infty$ as entries. For any $\mathbf{y}'\in\{0,1\}^n$ we have $G_{\mathbf{u}^{(0)}}(\mathbf{y}') =\mathbf{1}\left(g(\mathbf{y}')-\mathbf{u}^{(0)} \geq 0 \right) = \mathbf{y}$ and therefore $(G_{\mathbf{u}^{(0)}})^n(0)= \mathbf{y} = E(\mathbf{u}^{(0)})$. \\
\underline{Induction step}: The claim is true for $\mathbf{u}^{(0)}, \mathbf{u}^{(1)}, \mathbf{u}^{(2)},..,\mathbf{u}^{(k)}$. If $y_{k+1}=1$ we have $u^{(k)}_{k+1} = -\infty$ and $u^{(k+1)}_{k+1} \in (-\infty,t]$. By Lemma \ref{lemma:picewiseConstant} and Lemma \ref{lemma:treshold} we have $E(\mathbf{u}^{(k)}) = E(\mathbf{u}^{(k+1)})$. Similarly, if $y_{k+1}=0$ we have $u^{(k)}_{k+1} = \infty$ and $u^{(k+1)}_{k+1} \in (t,\infty)$. By Lemma \ref{lemma:picewiseConstant} and Lemma \ref{lemma:treshold} we have $E(\mathbf{u}^{(k)}) = E(\mathbf{u}^{(k+1)})$.
\end{proof}
It remains to be shown that
\begin{theorem}\label{thm:secondClaim}
For every shock $\mathbf{u}$ with minimal equilibrium $\mathbf{y}$, it is possible that Algorithm \ref{alg: scenario-sampler-peer} (resp. \ref{alg: scenario-sampler}) \textsc{Scenario Sampler} with input $\mathbf{y}$ returns shock $\mathbf{u}$.
\end{theorem}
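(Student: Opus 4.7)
The plan is to argue by induction on the order in which the \textsc{Scenario sampler} processes actions, showing that for any shock $\mathbf{u}$ with $E(\mathbf{u}) = \mathbf{y}$ the actual value $u_{tm}$ always lies in the (random) truncation interval from which $U_{tm}$ is drawn at the corresponding step. Since the truncated densities are strictly positive on those intervals, this places $\mathbf{u}$ in the support of the sampler's joint output and so delivers the theorem; in particular, part~(ii) of Theorem~\ref{thm: simulation} ($\lambda_{\mathbf{y}}(\mathbf{b};\theta) > 0$ for every $\mathbf{b} \in \mathbb{B}_{\mathbf{y}}$) follows by choosing any $\mathbf{u}$ in a given $\mathbf{b}$.

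The base case handles the actions with $y_{tm} = 0$ processed in Step~2: since $\mathbf{y}$ is the minimal NE at $\mathbf{u}$ and player $(t,m)$ does not take the action, we have $u_{tm} > g_{tm}(\mathbf{y},x_{tm})$, which is precisely the half-line used to truncate $U_{tm}$ in Step~2(a) of Algorithm~\ref{alg: scenario-sampler}. For the inductive step on the $y_{tm} = 1$ actions in Step~3, I assume all previously drawn $U_{t'm'}$ coincide with their real counterparts $u_{t'm'}$ and must show $u_{tm} \leq h_{tm}$, where the \textsc{Threshold finder} returns $h_{tm} = g_{tm}(\tilde{\mathbf{y}},x_{tm})$ with $\tilde{\mathbf{y}} = E(\tilde{\mathbf{u}})$. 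The main obstacle here is that $\tilde{\mathbf{u}}$ is not monotonically related to $\mathbf{u}$: relative to $\mathbf{u}$, the threshold finder pushes the current entry up (to $g_{tm}(\mathbf{y}) + 1$, inducing non-adoption) while pushing the not-yet-processed ``adopting'' entries down to a dominance-inducing level. Consequently the monotonicity of $E$ in $u$ cannot be applied to compare $\tilde{\mathbf{y}}$ and $\mathbf{y}$ directly.

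To sidestep this obstacle I will introduce the auxiliary shock $\hat{\mathbf{u}}$ obtained from $\mathbf{u}$ by changing only the $(t,m)$-th entry to $\tilde u_{tm} = g_{tm}(\mathbf{y},x_{tm}) + 1$. By construction $\tilde{\mathbf{u}} \leq \hat{\mathbf{u}}$ (they agree everywhere except on the not-yet-processed adopting entries, where $\tilde{\mathbf{u}}$ is strictly smaller), so monotonicity of $E$ yields $\tilde{\mathbf{y}} \geq E(\hat{\mathbf{u}})$. To identify $E(\hat{\mathbf{u}})$ I apply Lemmas~\ref{lemma:picewiseConstant} and~\ref{lemma:treshold} to the one-dimensional section $e_{\mathbf{u}}^{tm}$: the threshold given by Lemma~\ref{lemma:treshold} is $t^\star = g_{tm}(y^{**})$ where $y^{**} = e_{\mathbf{u}}^{tm}(\infty)$, and monotonicity of $E$ gives $y^{**} \leq e_{\mathbf{u}}^{tm}(u_{tm}) = \mathbf{y}$, hence $\tilde u_{tm} = g_{tm}(\mathbf{y}) + 1 > g_{tm}(y^{**}) = t^\star$. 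Lemma~\ref{lemma:picewiseConstant} then identifies $E(\hat{\mathbf{u}}) = y^{**}$, giving $\tilde{\mathbf{y}} \geq y^{**}$ and, by monotonicity of $g_{tm}$, $h_{tm} = g_{tm}(\tilde{\mathbf{y}}) \geq g_{tm}(y^{**}) = t^\star$. Finally, because $y_{tm} = 1$ in $\mathbf{y} = e_{\mathbf{u}}^{tm}(u_{tm})$, the threshold characterization in Lemma~\ref{lemma:picewiseConstant} forces $u_{tm} \leq t^\star$, and combining this with $t^\star \leq h_{tm}$ yields $u_{tm} \leq h_{tm}$, closing the induction.
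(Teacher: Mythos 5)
Your overall architecture is the same as the paper's: induct over the processing order, handle the $y_{tm}=0$ entries via the fixed-point condition $u_{tm}>g_{tm}(\mathbf{y},x_{tm})$, and for the $y_{tm}=1$ entries combine the equilibrium bound $u_{tm}\leq t^\star=g_{tm}(e^{tm}_{\mathbf{u}}(\infty),x_{tm})$ (Lemmas \ref{lemma:picewiseConstant} and \ref{lemma:treshold}) with a monotonicity argument showing that the algorithm's threshold $h_{tm}$ dominates $t^\star$. The paper packages the second half as monotonicity of the map $T(\mathbf{v})=g(e^{k}_{\mathbf{v}}(\infty))_{k}$ applied to the algorithm's internal state, which is coordinate-wise below $\mathbf{u}$; your detour through $\hat{\mathbf{u}}$ and the identification $E(\hat{\mathbf{u}})=y^{**}$ is a more explicit rendering of that same step.

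There is, however, one claim that is false as written: ``by construction $\tilde{\mathbf{u}}\leq\hat{\mathbf{u}}$ \ldots\ where $\tilde{\mathbf{u}}$ is strictly smaller'' on the not-yet-processed adopting coordinates. The \textsc{Threshold finder} sets those coordinates to $g_{t'm'}(0,x_{t'm'})-1$, while $\hat{\mathbf{u}}$ carries the true shocks $u_{t'm'}$, and nothing prevents the true shock of an adopter from lying \emph{below} $g_{t'm'}(0,x_{t'm'})-1$ (any realization deep in the ``strictly dominant to act'' bucket will do), in which case the coordinate-wise comparison goes the wrong way and monotonicity of $E$ cannot be invoked directly. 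The gap is easily closed: every value of $u_{t'm'}$ weakly below $g_{t'm'}(0,x_{t'm'})$ induces the identical row of the best-response map $G$ (the action is strictly dominant regardless of others' play), so replacing those entries of $\tilde{\mathbf{u}}$ by $\min\{g_{t'm'}(0,x_{t'm'})-1,\,u_{t'm'}\}$ leaves $\tilde{\mathbf{y}}=E(\tilde{\mathbf{u}})$ unchanged and restores $\tilde{\mathbf{u}}\leq\hat{\mathbf{u}}$, after which your argument goes through. (The paper's own proof sidesteps this by comparing against the state vector whose unprocessed adopting entries sit at their initialization value $-\infty$, which is trivially below $\mathbf{u}$, and implicitly uses the same bucket-equivalence to equate the resulting threshold with the one the algorithm actually computes.) With that one-line patch your proof is complete and coincides in substance with the paper's.
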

\begin{proof}
Let $\mathbf{u}$ be a taste shock with minimal equilibrium $\mathbf{y}$. We first look at Step 2. Note that for $y_k=0$ we have $0=e^k_{\mathbf{u}}(\infty)_k = e^k_{\mathbf{u}}(u_k)_k = 0$ and $e^k_{\mathbf{u}}(\infty) = e^k_{\mathbf{u}}(u_k) = \mathbf{y}$. Now $u_k> g(\mathbf{y})_k$ the threshold is $t= g(e^k_{\mathbf{u}}(\infty))_k = g(\mathbf{y})_k$ so $u_k>t$.
\\
Now we look at Step 3: we have $\mathbf{u}^{(n+1)}, \mathbf{u}^{(n+2)},..,\mathbf{u}^{(2n)} \leq \mathbf{u}$. Furthermore we know that for all $y_k=1$ the equilibrium/fixed-point condition implies that shock $u_k$ fulfills $u_k\leq g(e^k_{\mathbf{u}}(\infty))_k = t$. Now $g$ is monotone increasing and $E$ is monotone decreasing, therefore the function $T:\overline{\mathbb{R}}^n \rightarrow \mathbb{R}$ with $T(\mathbf{u}) = g(e^k_{\mathbf{u}}(\infty))_k$ is monotone decreasing. Therefore $u_k\leq g(e^k_{\mathbf{u}}(\infty))_k= T(\mathbf{u}) \leq T(\mathbf{u}^{(k)})= g(e^k_{\mathbf{u}^{(k)}}(\infty))_k$. Note that $T(\mathbf{u}^{(k)})$ is the threshold for the algorithm in iteration $k$ of Step 2.\\
We conclude that it is possible that  Algorithm 1 (resp. \ref{alg: scenario-sampler}) \textsc{Scenario Sampler} to draw $\mathbf{u}$.
\end{proof}
The proof of Theorem \ref{thm: simulation-peer} (resp. \ref{thm: simulation-peer}) in the main text follows directly from Theorem \ref{thm:firstClaim}, which proves the first claim, and Theorem \ref{thm:secondClaim}, which proves the second claim.

\section{\label{app: MC_details} Details of Monte Carlo experiments}

The Monte Carlo design uses a random geometric network to construct $\mathbf{D}$. The friendship network is exogenous and determines who is a peer of whom. Specifically agents are scattered uniformly on the two-dimensional plane$\left[0,\sqrt{T}\right]\times\left[0,\sqrt{T}\right]$.The network is then generated according to the rule $D_{st}=\mathbf{1}\left(A_{st}-U_{st}\geq0\right)$, with $U_{st}$ logistic and $A_{st}$ taking one of two values. If the Euclidean distance between agents $s$ and $t$ is less than or equal to $r$, then $A_{st}=\ln\left(\frac{0.75}{1-0.75}\right)$, otherwise $A_{st}$ equals negative infinity. This calibration means that agents link to anyone less than $r$ apart with probability $0.75$, while those greater than $r$ apart link with probability zero. Self-links are not allowed such that the diagonal elements of $\mathbf{D}$ are all equal to zero. This basic design features in \cite{Graham_NBER16}.
\\
\\
The expected out- and in-degree of a randomly sampled agent in the resulting network is approximately $0.75\pi r^{2}$. We set $r=10/0.75\pi$ such that average degree is approximately $10$. With these parameter values almost all agents are part of one giant component.
\\
\\
The best reply function for agents $t=1,\ldots,T$ is

\begin{equation} \label{eq:mcSistemEq}
    Y_t =  1( \beta_1 X_{1t}+\beta_2 X_{2t} + \beta_3 X_{3t} + \beta_4 X_{4t}  + \delta \mathbf{D}_t'\mathbf{Y}   - U_t)  > 0),
\end{equation}
\\
with $U_t$ a normally distributed random utility shock. We draw $X_{1t}, X_{2t}$  with probability $\frac{1}{2}$ from $\{0,1\}$, as group indicator variables. We draw  $X_{3t}, X_{4t}$  uniformly from $[0,1]$ as continuous covariates. The configuration of covariates across agents, that is the $\mathbf{X}$ matrix, is held fixed across Monte Carlo replications. We set the parameters $\beta_1 =-1, \beta_2=-0.5, \beta_3=-1, \beta_4 = 0.5$ and $\delta = 0.2$. The above system of equations has multiple solutions; we assume that agents play the minimal equilibrium (i.e., the solution where the fewest number of agents take the action).
\\
\\
For each simulation, we draw the utility shocks $\mathbf{U}$ and then find the minimal solution, $\mathbf{Y}$, to \eqref{eq:mcSistemEq} using fixed point iteration \citep{Tarski_PJM55}. We estimate $\beta$ and $\delta$ by simulated maximum likelihood (SML) using the observed values of $\mathbf{X}$ and $\mathbf{Y}$. 
\\
\section{\label{app: differenication}Differentiability} 

In this section we show how the simulated log likelihood can be differentiated. This facilitates the use of gradient-based optimization methods and, consequently, allows a researcher to fit models where the dimension of $\theta$ is non-trivial. It also allows for standard error computation via the log-likelihood Hessian matrix. As noted by \cite{McFadden_EM89}, \cite{Ackerberg_QME2009} and others, one advantage of importance sampling in the SML context is that the simulated log-likelihood function is differentiable. In this sense, the arguments in this Appendix are not especially novel. However we include them as they are useful for implementation and our discussion provides additional insight into our overall approach to estimating ``large" games.
\\
\\ 
First, quickly, some high level intuition for what we are doing. Our estimator considers a number of sampled scenarios. As Figure \ref{fig: understanding-scenarios} shows, a scenario can be thought of as a rectangle shaped area (i.e., a hypercube). The borders of these rectangles vary with $\theta$, either increasing or decreasing the \emph{ex ante} probability attached to any given scenario. The velocity of these shifts can be traced out, allowing for the calculation of the derivative.
\\
\\ 
We start with a reformulation of equation \eqref{eq: importance_likelihood_simulator} of the main text:

\begin{equation*}
\hat{\Pr}\left(\left.\mathbf{Y}=\mathbf{y}\right|\mathbf{X};\theta\right)=\frac{1}{S}\sum_{s=1}^{S}\frac{\int_{\mathbf{u}\in\tilde{\mathbf{B}}^{\left(s\right)}} f_{\mathbf{U}}\left(\mathbf{u}\right)\mathrm{d}\mathbf{u}}{\lambda_{\mathbf{y}}\left(\tilde{\mathbf{B}}^{\left(s\right)};\theta^{\left(0\right)}\right)}.
\end{equation*}
Differentiating with respect to $\theta$ yields:
\begin{equation*}
\frac{\hat{\partial\Pr}\left(\left.\mathbf{Y}=\mathbf{y}\right|\mathbf{X};\theta\right)}{\partial\theta}=\frac{1}{S}\sum_{s=1}^{S}\frac{\frac{\partial}{\partial\theta}\int_{\mathbf{u}\in\tilde{\mathbf{B}}^{\left(s\right)}}f_{\mathbf{U}}\left(\mathbf{u}\right)\mathrm{d}\mathbf{u}}{\lambda_{\mathbf{y}}\left(\tilde{\mathbf{B}}^{\left(s\right)};\theta^{\left(0\right)}\right)}.
\end{equation*}
Note the value of $\lambda_{\mathbf{y}}\left(\tilde{\mathbf{B}}^{\left(s\right)};\theta^{\left(0\right)}\right)$ is determined \emph{a priori} and does not vary with $\theta$. Only the numerator of \eqref{eq: importance_likelihood_simulator} varies with $\theta$. Note it very well may make sense to adjust the algorithm such that it works better for certain values of $\theta$. One heuristic is to begin with some $\theta^{\left(0\right)}$ in the denominator of \eqref{eq: importance_likelihood_simulator} and construct a preliminary estimate of $\theta$ using just a few scenario draws, $S$. Call this estimate $\tilde{\theta}$. Then, in a second round of estimation, set $\theta^{\left(0\right)}=\tilde{\theta}$, and compute a new, more accurate, estimate of $\theta$ using a larger number of scenario draws, $S$.
\\
\\
We require the derivative of the integral
\begin{equation*}
    \int_{\mathbf{u}\in\tilde{\mathbf{B}}^{\left(s\right)}}f_{\mathbf{U}}\left(\mathbf{u}\right)\mathrm{d}\mathbf{u}.
\end{equation*}
Geometrically this integral is over a multi-dimensional rectangle. Indeed, as shown earlier, it has a ``closed form" expression of
\begin{equation*}
     \Pr\left(\left.\tilde{\mathbf{B}}^{\left(s\right)}=\mathbf{b}\right|\mathbf{X};\theta\right)	=  \int_{\mathbf{u}\in\mathbf{b}}f_{\mathbf{U}}\left(\mathbf{u}\right)\mathrm{d}\mathbf{u}
	=  \prod_{t=1}^{T}\prod_{m=1}^{M}\left[F\left(\bar{b}_{tm}\right)-F\left(\underline{b}_{tm}\right)\right],
\end{equation*}
where, recalling the notation from Section \ref{sec: coordination game}, $\bar{b}_{tm}$ and $\underline{b}_{tm}$ are the upper and lower bucket boundaries for agent $t$'s $m^{th}$ action in scenario $\mathbf{b}$. By the product rule of differentiation we get:
\begin{equation*}
\frac{\partial\Pr\left(\left.\tilde{\mathbf{B}}^{\left(s\right)}=\mathbf{b}\right|\mathbf{X};\theta\right)}{\partial\theta}=\Pr\left(\left.\tilde{\mathbf{B}}^{\left(s\right)}=\mathbf{b}\right|\mathbf{X};\theta\right)\left[\sum_{t=1}^{T}\sum_{m=1}^{M}\frac{f\left(\bar{b}_{tm}\right)\frac{\partial\bar{b}_{tm}}{\partial\theta}-f\left(\underline{b}_{tm}\right)\frac{\partial\underline{b}_{tm}}{\partial\theta}}{F\left(\bar{b}_{tm}\right)-F\left(\underline{b}_{tm}\right)}\right],
\end{equation*}
where $\bar{b}_{tm}$ and $\underline{b}_{tm}$ are affine functions of $\theta$ with simple derivatives (see Section \ref{sec: coordination game} of the main text). Returning to the high level intuition introduced above, $\frac{\partial\bar{b}_{tm}}{\partial\theta}$ is the velocity with which the scenario boundaries are moving, while $f\left(\bar{b}_{tm}\right)$ is the density at the boundary, itself a measure on how much probability is lost or gained as the border changes.  The product of these two terms equals the change in probability (re-scaled according to the product rule). 
\\
\\
Putting it all together we get the total derivative of the log likelihood. While there are many summands involved, their total is polynomial in the number of parameters, the number of scenarios, the number of players and the number of actions of each player. This make the method computationally feasible. 
\\
\\
As a final note we consider the case when the researcher observes $N$ independent games. In this case the it is possible to draw scenarios separately for each game. This facilitates the computation as well as the derivation of asymptotic properties. With $N$ games the simulated log-likelihood takes the form
\begin{equation}
\ln \hat{\Pr}\left(\left.\mathbf{Y}= \mathbf{y}\right|\mathbf{X};\theta\right)=\sum_{i=1}^{N}\ln\hat{\Pr}\left(\left.\mathbf{Y}_{i}= \mathbf{y}_i\right|\mathbf{X}_{i};\theta\right)\label{eq: logL}
\end{equation}
with gradient vector
\begin{equation}
\frac{\partial \ln \hat{\Pr}\left(\left.\mathbf{Y}= \mathbf{y}\right|\mathbf{X};\theta\right)}{\partial\theta}=\sum_{i=1}^{N}\frac{1}{\hat{\Pr}\left(\left.\mathbf{Y}_{i}= \mathbf{y}_i\right|\mathbf{X}_{i};\theta\right)}\frac{\partial\hat{\Pr}\left(\left.\mathbf{Y}_i=\mathbf{y}_i\right|\mathbf{X};\theta\right)}{\partial\theta}.\label{eq: score}
\end{equation}

\section{\label{app: recycling} Recycling}

Efficient implementation of a simulation-based estimator involves a number of subtle details. Equation \eqref{eq: importance_likelihood_simulator} is an average over scenarios. Whether an underlying sets of random utility shocks lies in one scenario or another varies with the scenario boundaries, themselves functions of the underlying parameter, $\theta$. In this Appendix we discuss some practical implications of this observation for estimation.
\\
\\
As in other applications of simulated maximum likelihood, to avoid criterion function ``chatter" the researcher should begin estimation by drawing $S \times T \times M$ independent standard uniform random variables. These draws are subsequently held fixed as estimation proceeds. To recompute the scenarios $\tilde{\mathbf{B}}^{(s)}$ for $s=1,\dots,S$ as $\theta$ changes during optimization, the researcher uses these initial simulated uniform random variables in conjunction with the inverse probability transform method in Steps 2.a and 3.a.ii of Algorithm \ref{alg: scenario-sampler} to generate new preference shocks. This method keeps the underlying simulation randomness constant as optimization proceeds (making the target function smooth in $\theta$), in turn allowing for the discovery of a well-defined criterion function maximum. This approach to SMLE is standard.
\\
\\
In this section we discuss how, in some settings, it is possible to speed up computation further by ``recycling" scenarios. This can substantially reduce the computational cost of the optimization.
\\
\\
To understand how it is possible to reuse, or recycle, the scenarios is is essential to have a closer look at how they are constructed. The scenarios are constructed in step 3.b of the \textsc{Scenario sampler} algorithm (Algorithm \ref{alg: scenario-sampler}). One way to rigorously find the scenario, say, $\tilde{\mathbf{B}}$ associated with a given vector of taste shocks, $\mathbf{U}$ is as follows. First, evaluate  $g_{tm}(\mathbf{y},x_{tm})$ for all $\mathbf{y}$, to find the bucket borders for action $m$ of agent $t$ (do this is for all $TM$ strategic actions). Second, for each coordinate of $\mathbf{U}$ find the lower and upper boundaries of the bucket into which it falls. The buckets for each $U_{tm}$ are then combined to find the scenario associated with the full utility shock vector, $\mathbf{U}$. This method is not feasible, however, because $\mathbf{y}$ can take exponentially many values. Fortunately, in practice, it is usually possible to apply some heuristics and find the right bucket for each $U_{tm}$ in reasonable time. For example, in the peer effects model a bucket corresponds to a utility shock range where \textit{if $k$ friends of agent $t$ take the action, agent $t$ also takes the action}. In this case, for a given realization $U_{tm}$, we can simply look up how many friends are needed for agent $t$ to take the action and get the required bucket boundaries.
\\
\\
The mapping from preference shocks to scenarios gets more complicated in models with multiple strategic parameters. For example, consider the network formation game where agents value both reciprocity and transitivity in links. Here you could find the bucket in which $U_{tm}$ lies by (i) considering the number of transitive triads $t$ needs in order to form the link with $s$ when the $(t,s)$ arc is reciprocated and (ii) repeating the same computation for the case where the $(t,s)$ arc is unreciprocated. In both cases we get a partition of the support of $U_{tm}$. We then can find the buckets by taking the intersections of all the intervals in the two partitions. From this discussion, it is apparent that the computational cost increases strongly with the number of strategic parameters. However, in most application there will only be a few strategic parameters and computation remains feasible. Indeed, in all of our examples it is possible to construct a heuristic for finding the scenario associated with a shock realization efficiently. The major part of the computational cost associated with estimation involves the repeated updating/simulation of the utility shocks (and hence the scenarios) as $\theta$ changes. 
\\
\\
Lets consider the case with only one strategic parameter in detail. Assume $g_{tm}(\mathbf{y},x_{tm}) =x_{tm}'\beta_{m}+s\left(\mathbf{y}_{t,-m},\mathbf{y}_{-t}\right)\delta$ with $\delta>0$. The set $\mathcal{L}_{tm}=\{s\left(\mathbf{y}_{t,-m},\mathbf{y}_{-t}\right) | \mathbf{y}_{t,-m}\in\{0,1\}^{M-1}\ \&\ \mathbf{y}_{-t}\in\{0,1\}^{(T-1)M}\}$ is finite. We can enumerate it starting from the smallest value $l_{mt}^1$ to the biggest value; the elements of $\mathcal{L}_{tm}$ partition the real line into $\#\mathcal{L}_{tm}+1$=$L_{tm}+1$ intervals. The same logic applies to all values of $t$ and $m$. By taking the Cartesian product of these partitions we get a partition of $\mathbb{R}^{TM}$, the support of the random utility preference shocks, $\mathbf{U}$. This partition can be enumerated explicitly by using the enumeration of the intervals of the real lines lexicographically.
\\
\\
Observe that the function $g$ scales the values of $\mathcal{L}_{tm}$ by a positive value and then translates the values. Both of these operations are order preserving: there is a one-to-one correspondence of the partition of $\mathbb{R}^{TM}$ constructed above and the scenarios $\mathbb{B}$. This is very useful since it implies that the number of scenarios does not depend on the value of $\theta$. Furthermore, the scenario boundaries are affine functions of $\theta$. In particular the bucket boundaries are are continuous and differentiable in $\theta$.
\\
\\
Recall equation \eqref{eq: importance_likelihood_simulator} of the main text
\begin{equation*}
\hat{\Pr}\left(\left.\mathbf{Y}=\mathbf{y}\right|\mathbf{X};\theta\right)=\frac{1}{S}\sum_{s=1}^{S}\frac{\zeta\left(\tilde{\mathbf{B}}^{\left(s\right)};\theta\right)}{\lambda_{\mathbf{y}}\left(\tilde{\mathbf{B}}^{\left(s\right)};\theta^{\left(0\right)}\right)}.
\end{equation*}
The only part of this expression depending on the parameter $\theta$ is $\zeta\left(\tilde{\mathbf{B}}^{\left(s\right)};\theta\right)$. With only one strategic parameter $\zeta\left(\tilde{\mathbf{B}}^{\left(s\right)};\theta\right)$ is a smooth function of $\theta$. This makes it possible to evaluate the estimated likelihood at different parameter values without the need to repeatedly sample new scenarios.  By reusing (or recycling) the scenarios during optimization, the computation of an iteration is reduced substantially. 
\\
\\
For the case when there is more than one strategic interaction parameter this does not work. The easiest way to see this is to consider again the network formation game where the scenarios are constructed by using the intersection of intervals. The number of scenarios in this case will depend on the values of $\theta$. As a result the estimated likelihood is only locally differentiable and it is not possible to reuse the scenarios at a different value of $\theta$. 
\\
\\ 
Scenario recycling is related to the change-of-variables method proposed in \cite{Ackerberg_QME2009} to minimize function evaluation in complex structural econometric models. His method also appears to be limited to models with a single strategic parameter (see part 1 of his \textrm{CS} assumption). Although related, ``scenario recycling", as described here does not appear be a special case of his approach.
\end{document}